\documentclass[sigconf,nonacm=true,balance=false]{acmart}
\usepackage{popets}

\setcopyright{popets}
\copyrightyear{2024}

\acmYear{2024}
\acmVolume{2024}
\acmNumber{2}

\acmDOI{10.56553/popets-2024-0001}

\acmISBN{}
\acmConference{Proceedings on Privacy Enhancing Technologies}
\settopmatter{printacmref=false,printccs=false,printfolios=true}

\usepackage{dashbox}
\usepackage{colortbl}
\usepackage{multirow}
\usepackage{enumitem}
\usepackage{hyperref}
\usepackage{subcaption}
\usepackage{tikz}
\usetikzlibrary{matrix}

\hypersetup{
    colorlinks=true,
    linkcolor=blue,
    filecolor=magenta,      
    urlcolor=cyan,
}
\urlstyle{sf}

\newtheorem{theorem}{Theorem}
\newtheorem{lemma}{Lemma}
\newtheorem{definition}{Definition}

\providecommand{\citep}{\cite}

\newcommand{\mycomment}[1]{// #1}
\newcommand{\myminorcomment}[1]{{\color{gray}// #1}}
\newcommand{\adv}{\ensuremath{\mathcal{A}}}
\newcommand{\simu}{\ensuremath{\mathcal{S}}}

\newcommand{\mixer}[1]{\ensuremath{\mathcal{M}_{#1}}}
\newcommand{\prover}[1]{\mixer{#1}}              
\newcommand{\verifier}{\ensuremath{\mathcal{V}}}       
\newcommand{\querier}{\ensuremath{\mathcal{Q}}}       

\newcommand{\setup}{\ensuremath{\mathsf{Setup}}}
\newcommand{\keygen}{\ensuremath{\mathsf{Keygen}}}
\newcommand{\enc}{\ensuremath{\mathsf{Enc}}}
\newcommand{\mix}{\ensuremath{\mathsf{Mix}}}

\newcommand{\btrin}{\ensuremath{\mathsf{BTraceIn}}}
\newcommand{\btrout}{\ensuremath{\mathsf{BTraceOut}}}
\newcommand{\renc}{\ensuremath{\mathsf{REnc}}}
\newcommand{\dec}{\ensuremath{\mathsf{Dec}}}
\newcommand{\tdec}{\ensuremath{\mathsf{TDec}}}

\newcommand{\sharealg}{\ensuremath{\mathsf{Share}}}
\newcommand{\recon}{\ensuremath{\mathsf{Recons}}}
\newcommand{\mult}{\ensuremath{\mathsf{Mult}}}

\newcommand{\otrin}{\ensuremath{\mathsf{OTraceIn}}}
\newcommand{\otrout}{\ensuremath{\mathsf{OTraceOut}}}

\newcommand{\Nat}{\mathbb{N}}
\newcommand{\Val}{\mathbb{V}}
\newcommand{\Zq}{\mathbb{Z}_{q}}
\newcommand{\Gone}{\mathbb{G}_{1}}
\newcommand{\Gtwo}{\mathbb{G}_{2}}
\newcommand{\GT}{\mathbb{G}_{T}}

\newcommand{\share}[2]{{#1}^{(#2)}}                       

\newcommand{\vect}[1]{\boldsymbol{#1}}                         
\newcommand{\vectcomp}[2]{{\vect{#1}}_{#2}}                    
\newcommand{\vectshare}[2]{\share{\vect{#1}}{#2}}              
\newcommand{\vectcompshare}[3]{\share{\vect{#1}}{#3}_{#2}}  

\newcommand{\ve}[1]{\vect{#1}}                   
\newcommand{\vc}[2]{\vectcomp{#1}{#2}}           
\newcommand{\vs}[2]{\vectshare{#1}{#2}}          
\newcommand{\vcs}[3]{\vectcompshare{#1}{#2}{#3}} 

\newcommand{\permfunc}[1]{\share{\pi}{#1}}                  

\newcommand{\si}[2]{#1 [\![ #2 ]\!] }   
\newcommand{\so}[2]{#1 [\![ #2 ]\!] }   

\newcommand{\cmpl}{\mathsf{Exp}_{\mathsf{completeness}}}
\newcommand{\snd}{\mathsf{Exp}_{\mathsf{soundness}}}
\newcommand{\secr}{\mathsf{Exp}_{\mathsf{secrecy}}}
\newcommand{\negl}{\mathsf{negl}}


\newcommand{\wit}[1]{\share{\omega}{#1}}                  

\newcommand{\nizkpktxt}{\ensuremath{\mathsf{NIZKPK}}}

\newcommand{\pokcomm}{\ensuremath{\rho_{\gamma}}}

\newcommand{\pkbody}[2]{\ensuremath{\{(#1): #2\}}}
\newcommand{\pk}[2]{\ensuremath{\mathsf{PK}\pkbody{#1}{#2}}}

\newcommand{\nizkpk}[2]{\ensuremath{\nizkpktxt\pkbody{#1}{#2}}}
\newcommand{\nizkver}{\ensuremath{\mathsf{NIZKVer}}}

\newcommand{\zkpsmtxt}{\ensuremath{\mathsf{DB}\text{-}\mathsf{SM}}}
\newcommand{\zkprsmtxt}{\ensuremath{\mathsf{DB}\text{-}\mathsf{RSM}}}
\newcommand{\zkpsm}{\ensuremath{{\zkpsmtxt}}}
\newcommand{\zkprsm}{\ensuremath{{\zkprsmtxt}}}

\newcommand{\encscheme}{\ensuremath{\mathsf{E}}}




\newcommand{\pke}{\ensuremath{\encscheme}}
\newcommand{\pkeg}{\ensuremath{\pke.\keygen}}
\newcommand{\pkee}{\ensuremath{\pke.\enc}}

\newcommand{\pked}{\ensuremath{\pke.\dec}}
\newcommand{\pkepk}[1]{\ensuremath{\share{\mathsf{pk}}{#1}}}
\newcommand{\pkesk}[1]{\ensuremath{\share{\mathsf{sk}}{#1}}}

\newcommand{\mpk}{\ensuremath{\mathsf{mpk}}}
\newcommand{\msk}[1]{\ensuremath{\share{\mathsf{msk}}{#1}}}

\newcommand{\theztxt}{\ensuremath{\mathsf{Pa}}}
\newcommand{\thez}{\ensuremath{\encscheme_{\theztxt}^{\mathsf{th}}}}
\newcommand{\thezg}{\ensuremath{\thez.\keygen}}
\newcommand{\theze}{\ensuremath{\thez.\enc}}
\newcommand{\thezre}{\ensuremath{\thez.\renc}}

\newcommand{\thezd}{\ensuremath{\thez.\tdec}}
\newcommand{\thezpk}{\ensuremath{\mathsf{pk}_{\theztxt}}}
\newcommand{\thezsk}[1]{\ensuremath{ \share{\mathsf{sk}}{#1}_{\theztxt} }}
\newcommand{\thezfunc}{\ensuremath{ \mathcal{F}_{\thezd}^{\mathsf{th}} }}

\newcommand{\thegtxt}{\ensuremath{\mathsf{EG}}}
\newcommand{\theg}{\ensuremath{\encscheme_{\thegtxt}^{\mathsf{th}}}}
\newcommand{\thegg}{\ensuremath{\theg.\keygen}}
\newcommand{\thege}{\ensuremath{\theg.\enc}}
\newcommand{\thegre}{\ensuremath{\theg.\renc}}

\newcommand{\thegd}{\ensuremath{\theg.\tdec}}
\newcommand{\thegpk}{\ensuremath{\mathsf{pk}_{\thegtxt}}}
\newcommand{\thegsk}[1]{\ensuremath{ \share{\mathsf{sk}}{#1}_{\thegtxt} }}
\newcommand{\thegfunc}{\ensuremath{ \mathcal{F}_{\thegd}^{\mathsf{th}} }}


\newcommand{\perm}[1]{\ensuremath{\mathsf{Perm}(#1)}}
\newcommand{\shuffle}{\ensuremath{\mathsf{Shuffle}}}

\begin{document}
\startPage{1}

\title{Traceable mixnets}


\author{Prashant Agrawal}
\affiliation{
  \institution{Department of Computer Science and Engineering, IIT Delhi}
  \city{}
  \country{}
}
\affiliation{
  \institution{Centre for Digitalisation, AI and Society, Ashoka University}
  \city{}
  \country{}
}
\email{prashant@cse.iitd.ac.in}

\author{Abhinav Nakarmi}
\affiliation{
  \institution{Department of Computer Science and Engineering, University of Michigan}
  \city{}
  \country{}
}
\email{nakarmi@umich.edu}
\authornote{Work done while at Ashoka University.}

\author{Mahabir Prasad Jhanwar}
\affiliation{
  \institution{Department of Computer Science and Centre for Digitalisation, AI and Society, Ashoka University}
  \city{}
  \country{}
}
\email{mahavir.jhawar@ashoka.edu.in}

\author{Subodh Vishnu Sharma}
\affiliation{
  \institution{Department of Computer Science and Engineering, IIT Delhi}
  \city{}
  \country{}
}
\email{svs@cse.iitd.ac.in}

\author{Subhashis Banerjee}
\affiliation{
  \institution{Department of Computer Science and Engineering, IIT Delhi}
  \city{}
  \country{}
}
\affiliation{
  \institution{Department of Computer Science and Centre for Digitalisation, AI and Society, Ashoka University}
  \city{}
  \country{}
}
\email{suban@ashoka.edu.in}

\renewcommand{\shortauthors}{Agrawal et al.}

\begin{abstract}
	We introduce the notion of \emph{traceable mixnets}. In a traditional mixnet, multiple mix-servers jointly permute and decrypt a list of ciphertexts to produce a list of plaintexts, along with a proof of correctness, such that the association between individual ciphertexts and plaintexts remains completely hidden. However, in many applications, the privacy-utility tradeoff requires answering some specific queries about this association, without revealing any information beyond the query result. We consider queries of the following types: a) given a ciphertext in the mixnet input list, whether it encrypts one of a given subset of plaintexts in the output list, and b) given a plaintext in the mixnet output list, whether it is a decryption of one of a given subset of ciphertexts in the input list. Traceable mixnets allow the mix-servers to jointly prove answers to the above queries to a querier such that neither the querier nor a threshold number of mix-servers learn any information beyond the query result. Further, if the querier is not corrupted, the corrupted mix-servers do not even learn the query result. We first comprehensively formalise these security properties of traceable mixnets and then propose a construction of traceable mixnets using novel distributed zero-knowledge proofs (ZKPs) of set membership and of a statement we call reverse set membership. Although set membership has been studied in the single-prover setting, the main challenge in our distributed setting lies in making sure that none of the mix-servers learn the association between ciphertexts and plaintexts during the proof. We implement our distributed ZKPs and show that they are faster than state-of-the-art by at least one order of magnitude.
\end{abstract}

\keywords{verifiable mixnets; traceability; distributed zero-knowledge proofs; set membership; reverse set membership}

\maketitle

\section{Introduction}
\label{intro}

A mixnet is a cryptographic primitive used for anonymous messaging, specifically for unlinking the identity information of data providers with the sensitive data they provide. Let $\vc{u}{i}$ denote the identity information of the $i^{\text{th}}$ individual and $\vc{v}{i}$ denote the sensitive data they contribute. For example, in a secure electronic voting context where mixnets are commonly used, $\vc{u}{i}$ may contain a voter id and $\vc{v}{i}$ may contain the vote. The link between $\vc{u}{i}$ and $\vc{v}{i}$ is hidden by encrypting $\vc{v}{i}$ to obtain ciphertexts (encrypted votes) $\vc{c}{i}$, uploading $\vc{u}{i}$ along with $\vc{c}{i}$ to an input list, and feeding the list of ciphertexts $\ve{c}$ as the mixnet input. The mixnet, which consists of a series of \emph{mix-servers}, processes these ciphertexts and outputs a list $\ve{v}'$ of decryptions of the ciphertexts in a randomly permuted order \cite{Chaum-mixnet}. The secret permutation that links the ciphertexts with their corresponding plaintexts is shared among the mix-servers and remains hidden unless a threshold number of mix-servers are corrupted, thus completely hiding which ciphertext or voter id corresponds to which vote (see Figure \ref{fig:mixnet}). Using \emph{verifiable mixnets} \citep{mixnet-sok}, the mix-servers can also \emph{prove} to a verifier that the output list is obtained correctly by permuting and decrypting each element of the input list, while keeping the linkages between the ciphertexts and plaintexts completely hidden.

\begin{figure}
	\centering
	\begin{subfigure}[b]{\linewidth}
		\centering
		\scalebox{0.75}{
			\begin{tabular}{c@{\hskip 0.1cm} c@{\hskip 0.1cm} c@{\hskip 0.1cm} c@{\hskip 0.2cm}}
				\begin{tabular}{|c|}
					\hline
					\cellcolor{gray!20}Input \\
					\hline
					$\vc{u}{1}, \vc{c}{1}$ \\
					\hline
					$\vc{u}{2}, \vc{c}{2}$ \\
					\hline
					$\vc{u}{3}, \vc{c}{3}$ \\
					\hline
					$\vc{u}{4}, \vc{c}{4}$ \\
					\hline
					$\vc{u}{5}, \vc{c}{5}$ \\
					\hline
				\end{tabular} & 
				\begin{tabular}{c|c|}
					\multicolumn{2}{r}{$\mixer{1}$} \\
					\cline{2-2}
					$\rightarrow$ & \multirow{5}{*}{$\permfunc{1}$} \\
					$\vdots$ &  \\
					$\vdots$ & \\
					$\rightarrow$ & \\
					\cline{2-2}
				\end{tabular} &
				\begin{tabular}{c|c|c}
					\multicolumn{3}{c}{$\mixer{2}$} \\
					\cline{2-2}
					$\rightarrow$ & \multirow{5}{*}{$\permfunc{2}$} & $\rightarrow$\\
					$\vdots$ &  & $\vdots$ \\
					$\vdots$ & & $\vdots$\\
					$\rightarrow$ & & $\rightarrow$\\
					\cline{2-2}
				\end{tabular} &
				\begin{tabular}{|c|}
					\hline
					\multicolumn{1}{|c|}{\cellcolor{gray!20}Output} \\
					\hline
					$\vc{v}{3}$ \\
					\hline
					$\vc{v}{5}$ \\
					\hline
					$\vc{v}{2}$ \\
					\hline
					$\vc{v}{1}$ \\
					\hline
					$\vc{v}{4}$ \\
					\hline
				\end{tabular}
			\end{tabular}
		}
		\caption{{\small A traditional verifiable mixnet.}}
		\label{fig:mixnet}
	\end{subfigure}
	\hfill
	\begin{subfigure}[b]{\linewidth}
		\centering
		\scalebox{0.75}{
			\begin{tabular}{c@{\hskip 0.1cm} c@{\hskip 0.1cm} c@{\hskip 0.1cm} c@{\hskip 0.2cm}}
				\begin{tabular}{|l|}
					\hline
					\cellcolor{gray!20}Input \\
					\hline
					$\vc{u}{1}, \vc{c}{1}$ \\
					\hline
					$\vc{u}{2}, \vc{c}{2} \enskip \times$ \\
					\hline
					$\vc{u}{3}, \vc{c}{3}$ \\
					\hline
					$\vc{u}{4}, \vc{c}{4}$ \\
					\hline
					$\vc{u}{5}, \vc{c}{5}$ \\
					\hline
				\end{tabular} & 
				\begin{tabular}{c|c|}
					\multicolumn{2}{r}{$\mixer{1}$} \\
					\cline{2-2}
					$\rightarrow$ & \multirow{5}{*}{$\permfunc{1}$} \\
					$\vdots$ &  \\
					$\vdots$ & \\
					$\rightarrow$ & \\
					\cline{2-2}
				\end{tabular} &
				\begin{tabular}{c|c|c}
					\multicolumn{3}{c}{$\mixer{2}$} \\
					\cline{2-2}
					$\rightarrow$ & \multirow{5}{*}{$\permfunc{2}$} & $\rightarrow$\\
					$\vdots$ & & $\vdots$ \\
					$\vdots$ & & $\vdots$\\
					$\rightarrow$ & & $\rightarrow$\\
					\cline{2-2}
				\end{tabular} &
				\begin{tabular}{|c|}
					\hline
					\multicolumn{1}{|c|}{\cellcolor{gray!20}Output} \\
					\hline
					$\vc{v}{3}$ \\
					\hline
					\cellcolor{green!50}$\vc{v}{5}$ \\
					\hline
					\cellcolor{green!50}$\vc{v}{2}$ \\
					\hline
					\cellcolor{green!50}$\vc{v}{1}$ \\
					\hline
					$\vc{v}{4}$ \\
					\hline
				\end{tabular}
			\end{tabular} 
		}
		\caption{{\small A $\text{TraceIn}(\vc{c}{2}, \{\vc{v}{5}, \vc{v}{2}, \vc{v}{1}\})$ query in a traceable mixnet: whether $\vc{c}{2}$ encrypted one of $\{\vc{v}{5}, \vc{v}{2}, \vc{v}{1}\}$.}}
		\label{fig:trin}
	\end{subfigure}
	\hfill
	\begin{subfigure}[b]{\linewidth}
		\centering
		\scalebox{0.75}{
			\begin{tabular}{c@{\hskip 0.1cm} c@{\hskip 0.1cm} c@{\hskip 0.1cm} c@{\hskip 0.2cm}}
				\begin{tabular}{|c|}
					\hline
					\cellcolor{gray!20}Input \\
					\hline
					\cellcolor{green!50}$\vc{u}{1}, \vc{c}{1}$ \\
					\hline
					\cellcolor{green!50}$\vc{u}{2}, \vc{c}{2}$ \\
					\hline
					\cellcolor{green!50}$\vc{u}{3}, \vc{c}{3}$ \\
					\hline
					$\vc{u}{4}, \vc{c}{4}$ \\
					\hline
					$\vc{u}{5}, \vc{c}{5}$ \\
					\hline
				\end{tabular} & 
				\begin{tabular}{c|c|}
					\multicolumn{2}{r}{$\mixer{1}$} \\
					\cline{2-2}
					$\rightarrow$ & \multirow{5}{*}{$\permfunc{1}$} \\
					$\vdots$ &  \\
					$\vdots$ & \\
					$\rightarrow$ & \\
					\cline{2-2}
				\end{tabular} &
				\begin{tabular}{c|c|c}
					\multicolumn{3}{c}{$\mixer{2}$} \\
					\cline{2-2}
					$\rightarrow$ & \multirow{5}{*}{$\permfunc{2}$} & $\rightarrow$\\
					$\vdots$ & & $\vdots$ \\
					$\vdots$ & & $\vdots$\\
					$\rightarrow$ & & $\rightarrow$\\
					\cline{2-2}
				\end{tabular} &
				\begin{tabular}{|l|}
					\hline
					\multicolumn{1}{|c|}{\cellcolor{gray!20}Output} \\
					\hline
					$\vc{v}{3}$ \\
					\hline
					$\vc{v}{5}$ \\
					\hline
					$\vc{v}{2}$ \\
					\hline
					$\vc{v}{1} \enskip \times$ \\
					\hline
					$\vc{v}{4}$ \\
					\hline
				\end{tabular}
			\end{tabular} 
		}
		\caption{{\small A $\text{TraceOut}(\{\vc{c}{1}, \vc{c}{2}, \vc{c}{3}\}, \vc{v}{1})$ query in a traceable mixnet: whether $\vc{v}{1}$ was encrypted in one of $\{\vc{c}{1}, \vc{c}{2}, \vc{c}{3}\}$.}}
		\label{fig:trout}
	\end{subfigure}
	\caption{{\small Traditional and traceable mixnets. $\vc{u}{i}$ denotes the $i^{\text{th}}$ individual's identity information and $\vc{v}{i}$ denotes their sensitive data; $\vc{c}{i}$s encrypt $\vc{v}{i}$s and are passed as input to the mixnet; the mixnet consists of mix-servers $\mixer{1}$ and $\mixer{2}$ that jointly decrypt and permute input list $\ve{c}$ to output a plaintext list $\ve{v}':=(\ve{v}_{\pi(i)})_{i=1}^{5}$, where $\pi$ is composed of secret permutations $\permfunc{1}$ and $\permfunc{2}$ of $\mixer{1}$ and $\mixer{2}$.
	}}
	\label{fig:mixnet-sm-overview}
\end{figure}

However, sometimes, it is necessary to reveal specific partial information about the association between $\vc{c}{i}$ and $\vc{v}{i}$, while still preventing any additional information leakage. For example, consider a large-scale public election running across multiple polling booths. Further, assume a dual voting setup \citep{benaloh-dual-voting,pretavoter-hrpat} where voters cast their vote both electronically and on paper: the electronic system produces an encrypted vote as the voter receipt and processes these encrypted votes via a mixnet, whereas the paper votes are collected to form a physical audit trail. Such dual voting systems aim to improve the overall robustness and transparency of electoral processes. In such systems, revealing partial information about the encrypted and decrypted votes allows graceful recovery from disputes without re-running the entire election \citep{openvoting}. For example, if there is a mismatch between a plaintext vote in the mixnet output list and its corresponding paper record, the ability to pinpoint which specific booth the disputed vote came from enables a \emph{localised recovery} of the election by selectively rerunning the election only at the corrupted booths. Yet, it is important to not reveal additional information such as vote counts of all the polling booths or, worse, votes of individual voters. Further, any such partial information revealed should be \emph{provable}, to ensure that recovery steps lead to the correct election outcome. A traditional mixnet does not allow for the release of such controlled partial information because votes cast at all the booths are anonymised together before decryption. 

Conversely, consider a voter claiming that their published encrypted vote does not match the one in their (possibly fake) receipt. If it can be shown that their published encrypted vote decrypted to a plaintext vote that mismatches with its corresponding paper record, it supports the voter's claim because an incorrectly uploaded encrypted vote must also mismatch with the paper vote on decryption. As above, this revealed information should be provable and should not leak any additional information, e.g., the voter's specific vote. Traditional mixnets do not allow this either. 

To address this gap, we introduce the notion of traceable mixnets. Let $\vc{c}{I}:=\{\vc{c}{i} \mid i \in I\}$ for some index set $I$ denote a subset of ciphertexts in the input list and $\vc{v}{J}':=\{\vc{v}{j}' \mid j \in J\}$ for some index set $J$ denote a subset of plaintexts in the output list. A traceable mixnet allows its mix-servers to jointly and provably answer the following queries to an interested and authorised querier:
\begin{itemize}[leftmargin=*]
	\item[-] $\textbf{TraceIn}(\vc{c}{i}, \vc{v}{J}')$: Does the ciphertext $\vc{c}{i}$ in the input list encrypt a plaintext in set $\vc{v}{J}'$ (Figure \ref{fig:trin})? Given an encrypted vote $\vc{c}{i}$ mismatching a voter receipt and a set $\vc{v}{J}'$ of plaintext votes mismatching with their paper records, this query answers if $\vc{c}{i}$ decrypted to a mismatching plaintext vote and thus was incorrect.  
	\item[-] $\textbf{TraceOut}(\vc{c}{I}, \vc{v}{j}')$: Is the plaintext $\vc{v}{j}'$ in the output list encrypted in a ciphertext in set $\vc{c}{I}$ (Figure \ref{fig:trout})? Given an output plaintext vote $\vc{v}{j}'$ mismatching with its paper vote and a set $\vc{c}{I}$ of encrypted votes cast at a given polling booth $B$, this query answers if the disputed vote $\vc{v}{j}'$ came from booth $B$. 
\end{itemize}
The queries are answered such that no \emph{additional information} beyond the query output is leaked to an adversary controlling the querier and less than a threshold number of mix-servers. Also, to prevent mix-servers from accumulating query responses issued to different queriers over time, the mix-servers are not allowed to even learn the output of a query if they do not control the querier. Such controlled querying mechanism is useful in voting as well as several other applications where privacy-preserving data sharing with guaranteed correctness is required.

Note, though, that the query outputs may themselves leak sensitive information, especially when multiple queries are combined. Thus, arbitrary queries cannot be allowed in any application. Deciding what queries to allow requires a privacy risk analysis of the overall information leaked by the queries, followed by an analysis of whether the leakage is acceptable for the application's privacy requirements. For example, in the voting application, the allowed queries are decided so that the recovery process does not leak the vote counts of any booth except the corrupted booths \citep{openvoting}. Once a policy detailing the allowed queries is decided, possibly allowing different queries to different queriers, the application layer must also ensure that the mix-servers comply with the policy. The traceable mixnet guarantees that if an honest mix-server follows the policy, the adversary learns nothing about the output of the queries disallowed by the policy. Without such analysis and policy compliance though, a traceable mixnet solution should not be deployed.

We emphasise that a traceable mixnet's secrecy requirements are more stringent than what existing proof-of-shuffle and verifiable decryption techniques in the area of verifiable mixnets \citep{mixnet-sok} can provide. For example, given a mixnet input list $(\vc{c}{1}, \dots, \vc{c}{5})$ and output list $(\vc{v}{1}',\dots,\vc{v}{5}')$, a TraceIn$(\vc{c}{4},$ $\{\vc{v}{2}',$ $\vc{v}{3}',$ $\vc{v}{4}' \})$ query requires proving in ZK that $\vc{c}{4}$ decrypts to one of $\vc{v}{2}'$, $\vc{v}{3}'$ and $\vc{v}{4}'$. Using these techniques, the mix-servers can prove this by, e.g., verifiably revealing the subset $C:=\{\vc{c}{1}, \vc{c}{2}, \vc{c}{4}\}$ of the input ciphertexts that decrypt to the set of plaintexts $\{\vc{v}{2}', \vc{v}{3}', \vc{v}{4}' \}$ and letting the querier verify that $\vc{c}{4} \in C$ (see Figure \ref{fig:mixnet-vs-sm}b). However, this is not ZK: it reveals, e.g., that ciphertexts $\vc{c}{1}, \vc{c}{2}$ decrypt to one of $\vc{v}{2}'$, $\vc{v}{3}'$ and $\vc{v}{4}'$ (and not $\vc{v}{1}'$ or $\vc{v}{5}'$). Figure \ref{fig:mixnet-vs-sm} shows other similar leakage scenarios. A traceable mixnet does not reveal any such intermediate information.

Finally, our focus is on offline batch processing, i.e., efficiently answering multiple TraceIn/TraceOut queries against the same set. Thus, we define the following batched queries: $a)$ $\textbf{BTraceIn}(\vc{c}{I}, \vc{v}{J}')$: which ciphertexts in set $\vc{c}{I}$ encrypt a plaintext in set $\vc{v}{J}'$; and $b)$ $\textbf{BTraceOut}(\vc{c}{I}, \vc{v}{J}')$: which plaintexts in set $\vc{v}{J}'$ are encrypted in a ciphertext in set $\vc{c}{I}$? We require the amortised time for batched queries to be linear in the input list size to enable practical applications like dispute resolution in elections with millions of votes. Note that offline batch processing in verifiable mixnets is distinct from mixnets in real-time anonymous communication networks \citep{tor}. Thus, we do not aim to answer the queries in real time.

\begin{figure}
	\centering
	\scalebox{0.75}{
		\begin{tabular}{l@{\hskip 0.1cm}l@{\hskip 0.15cm}}
			\begin{tabular}{c@{\hskip 0.1cm} c@{\hskip 0.15cm}}
				\begin{tabular}{|l|}
					\hline
					$\vc{c}{1}$ \\
					\hline
					$\vc{c}{2}$ \\
					\hline
					$\vc{c}{3}$ \\
					\hline
					$\vc{c}{4}$ $\times$ \\
					\hline
					$\vc{c}{5}$ \\
					\hline
				\end{tabular} & 
				\begin{tabular}{|l|}
					\hline
					$\vc{v}{1}'$ \\
					\hline
					\cellcolor{green!50}$\vc{v}{2}'$ \\
					\hline
					\cellcolor{green!50}$\vc{v}{3}'$ {\color{red}$\bullet$} \\
					\hline
					\cellcolor{green!50}$\vc{v}{4}'$ \\
					\hline
					$\vc{v}{5}'$ \\
					\hline
				\end{tabular} \\
				& \\
				\multicolumn{2}{c}{$(a)$} \\
			\end{tabular}
			
			\begin{tabular}{c@{\hskip 0.1cm} c@{\hskip 0.2cm}}
				\begin{tabular}{|l|}
					\hline
					\cellcolor{red!50}$\vc{c}{1}$ \\
					\hline
					\cellcolor{red!50}$\vc{c}{2}$ \\
					\hline
					$\vc{c}{3}$ \\
					\hline
					\cellcolor{red!50}$\vc{c}{4}$  $\times$ \\
					\hline
					$\vc{c}{5}$ \\
					\hline
				\end{tabular} & 
				\begin{tabular}{|l|}
					\hline
					$\vc{v}{1}'$ \\
					\hline
					\cellcolor{green!50}$\vc{v}{2}'$ \\
					\hline
					\cellcolor{green!50}$\vc{v}{3}'$ \\
					\hline
					\cellcolor{green!50}$\vc{v}{4}'$ \\
					\hline
					$\vc{v}{5}'$ \\
					\hline
				\end{tabular} \\
				& \\
				\multicolumn{2}{c}{$(b)$} \\
			\end{tabular}
			
			\begin{tabular}{c@{\hskip 0.1cm} c@{\hskip 0.15cm}}
				\begin{tabular}{|l|}
					\hline
					$\vc{c}{1}$ \\
					\hline
					$\vc{c}{2}$ \\
					\hline
					\cellcolor{green!50}$\vc{c}{3}$ {\color{red}$\bullet$} \\
					\hline
					\cellcolor{green!50}$\vc{c}{4}$ \\
					\hline
					\cellcolor{green!50}$\vc{c}{5}$ \\
					\hline
				\end{tabular} & 
				\begin{tabular}{|l|}
					\hline
					$\vc{v}{1}'$ \\
					\hline
					$\vc{v}{2}'$ $\times$ \\
					\hline
					$\vc{v}{3}'$ \\
					\hline
					$\vc{v}{4}'$ \\
					\hline
					$\vc{v}{5}'$ \\
					\hline
				\end{tabular} \\
				& \\
				\multicolumn{2}{c}{$(c)$} \\
			\end{tabular}
			
			\begin{tabular}{c@{\hskip 0.1cm} c@{\hskip 0.2cm}}
				\begin{tabular}{|l|}
					\hline
					$\vc{c}{1}$ \\
					\hline
					$\vc{c}{2}$ \\
					\hline
					\cellcolor{green!50}$\vc{c}{3}$ \\
					\hline
					\cellcolor{green!50}$\vc{c}{4}$ \\
					\hline
					\cellcolor{green!50}$\vc{c}{5}$ \\
					\hline
				\end{tabular} & 
				\begin{tabular}{|l|}
					\hline
					\cellcolor{red!50}$\vc{v}{1}'$ \\
					\hline
					\cellcolor{red!50}$\vc{v}{2}'$ $\times$ \\
					\hline
					$\vc{v}{3}'$ \\
					\hline
					\cellcolor{red!50}$\vc{v}{4}'$ \\
					\hline
					$\vc{v}{5}'$ \\
					\hline
				\end{tabular} \\
				& \\
				\multicolumn{2}{c}{$(d)$} \\	
			\end{tabular}
		\end{tabular}
	}
	\caption{{\small Subfigures $(a)$ and $(b)$ are for a TraceIn$(\vc{c}{4}, V)$ query, where $V=\{\vc{v}{2}', \vc{v}{3}', \vc{v}{4}'\}$. With proofs-of-shuffle and verifiable decryption, the mix-servers can either $a)$ verifiably reveal the plaintext encrypted by $\vc{c}{4}$, say $\vc{v}{3}'$, and let the querier check if $\vc{v}{3}' \in V$, or $b)$ verifiably reveal the set of ciphertexts encrypting set $V$, say $C:=\{\vc{c}{1}, \vc{c}{2}, \vc{c}{4}\}$, and let the querier check if $\vc{c}{4} \in C$. Subfigures $(c)$ and $(d)$ are for a TraceOut$(C, \vc{v}{2}')$ query, where $C=\{\vc{c}{3}, \vc{c}{4}, \vc{c}{5}\}$. The mix-servers can either $c)$ verifiably reveal the ciphertext encrypting $\vc{v}{2}'$, say $\vc{c}{3}$, and let the querier check if $\vc{c}{3} \in C$, or $d)$ verifiably reveal the set of plaintexts that set $C$ decrypts to, say $V:=\{\vc{v}{1}', \vc{v}{2}', \vc{v}{4}'\}$, and let the querier check that $\vc{v}{2}' \in V$. With traceable mixnets, they do not need to reveal any such intermediate information.}}
	\label{fig:mixnet-vs-sm}
\end{figure}

\subsection{Our contributions} 

Our main contributions are the following. First, we introduce the notion of \emph{traceable mixnets} and formalise their completeness, soundness and secrecy requirements (Section \ref{formalism}).

Second, we propose a construction of traceable mixnets (Section \ref{construction}) using novel distributed ZKPs of \emph{set membership} \citep{Camenisch} and a novel primitive called \emph{reverse set membership}. Given a commitment scheme \citep{Pedersen} $\mathsf{comm}$ with commitment space $\Gamma$, message space $V$ and randomness space $R$, a ZKP of set membership for a commitment $\gamma \in \Gamma$ and a set $\phi$ of values from $V$ proves that $\gamma$ commits a member of $V$. Formally, we denote this as $\rho_{\mathsf{SM}}(\gamma, \phi):= \pk{v,r}{\gamma = \mathsf{comm}(v;r) \wedge v \in \phi}$, a proof of knowledge of a member $v$ of set $\phi$ and a randomness $r$ such that $\gamma$ commits $v$ with randomness $r$. A ZKP of reverse set membership for a value $v \in V$ and a set $\Phi$ of commitments from $\Gamma$ proves that $v$ \emph{is committed by} a member of $\Phi$. Formally, we denote this as $\rho_{\mathsf{RSM}}(\Phi, v):=\pk{r}{\gamma = \mathsf{comm}(v;r) \wedge \gamma \in \Phi}$, a proof of knowledge of a randomness $r$ such that some member $\gamma$ of $\Phi$ commits $v$ with randomness $r$. Prior work has mainly focused on the ZKP of set membership and the single-prover case, whereas our ZKPs (both for set membership and the novel reverse set membership) work in a distributed setting where the mixnet's mix-servers jointly act as the provers. The mix-servers need to carry out the proof without themselves learning any information about either the commitment openings or the association between commitments and  plaintext values. Our ZKPs are interactive, which is acceptable for our inherently interactive use-case.

Third, we provide detailed security analysis of our construction and formal rules for privacy risk analysis of a given set of allowed TraceIn/TraceOut queries (Section \ref{analysis} and Appendices \ref{appendix:proofs}-\ref{privacy-risk-analysis}).

Fourth, we provide a comprehensive implementation of our proposal (Section \ref{practicalities}). Our construction has linear time complexity in the size of the mixnet input list for batched queries and greatly outperforms even single-prover existing techniques. Specifically, our distributed ZKPs of set membership and reverse set membership (which enable BTraceIn and BTraceOut respectively) have per-prover proving times respectively 43x and 9x faster than single-prover zkSNARK-plus-Merkle tree based proofs. By conservative estimates, this makes them at least 86x and 18x faster than the state-of-the-art collaborative zkSNARKs \citep{collaborative-zksnarks} in the distributed setting. Our implementation is open source and available at \citep{tm-impl}.

\subsection{Related work}
\label{review}

\subsubsection{Controlled information release in statistical databases}

A common approach for controlled information release for statistical analytics is by \emph{anonymising} (releasing a noisy version of) the dataset. Another approach is \emph{differential privacy} (DP) \citep{dp-orig}: interactively providing noisy answers to analytics queries such that the answer distribution is insensitive to any given individual's data. However, anonymisation is a known poor safeguard against re-identification attacks \cite{robust-deanon,provable-deanonymisation}. Further, because of correlations in different individuals' data items, differentially private mechanisms can still reveal arbitrary partial information about the link between users' identity information and sensitive data to an adversary running arbitrary queries \cite{bayesian-dp-correlated-data}. In comparison, traceable mixnets reveal answers to only pre-approved queries and otherwise keep users' identity information and sensitive data unlinkable. Also, their distributed setting naturally fits into privacy-sensitive applications, whereas most anonymisation/DP solutions assume a trusted data curator.

\subsubsection{Existing traceability notions}
\label{traceability-notions}

Many works \citep{revocable-anonymity,selectively-traceable-anonymity,backref,amr,traceback,hecate} aim to balance anonymity with accountability by letting users communicate anonymously by default but allowing a trusted third party to revoke the anonymity of users that misbehave by sending illegal, misinformative or offensive messages. This is done by tracing the exact senders of the offending messages, in contrast to our generalised set-based and bidirectional notion of traceability.

Group/ring signatures \citep{groupsign,ringsign} let a verifier verify that a message was sent by a member of a group, without learning which member. This resembles our TraceOut query if we map the group of senders with the subset of input ciphertexts. However, these signatures require active involvement of the senders and are not logistically suitable for backend analytics applications. Likewise, anonymous credentials \citep{anoncred-camenisch} let individuals prove in ZK that they satisfy some eligibility criteria, resembling our TraceIn queries, but they also require individuals' active involvement in securing their anonymity, whereas this is done by distributed mix-servers in traceable mixnets.

\subsubsection{Existing verifiable mixnets}
\label{review-mixnets}

Haines and M{\"u}ller \citep{mixnet-sok} review and identify the following existing techniques for building verifiable mixnets: \emph{message tracing} \cite{sender-verifiable-mixnet,sElect}, \emph{verification codes} \cite{schneier-applied-crypto,sElect}, \emph{trip wires} \cite{khazaei-trip-wire,boyen-trip-wire}, \emph{message replication} \cite{khazaei-trip-wire}, \emph{randomised partial checking (RPC)} \cite{rpc,kusters-secanalysis-rpc-dmn,kusters-secanalysis-rpc-rmn,rpc-revisited} and \emph{proofs of shuffle} 
\cite{neff-shuffle,terelius-restricted-shuffles,commitment-consistent-proof-shuffle}. These techniques only verify that the mixnet output list was a decryption and permutation of its input ciphertext list and do not support the fine-grained TraceIn/TraceOut queries. \emph{Message tracing}  and \emph{verification codes} provide limited traceability by letting senders verify the processing of their \emph{own} ciphertexts, but one who does not hold the ciphertext secrets cannot perform this verification. 

\emph{Proofs of shuffle} are the state-of-the-art verifiable mixnet techniques. A proof-of-shuffle proves in ZK that two ciphertext lists are permutations and re-encryptions of each other. This, combined with verifiable decryption techniques \citep{furukawa-shuffle-verifiable-decryption}, provides a ZKP that a list of plaintexts is a decryption and permutation of a list of ciphertexts. These approaches can also prove that a \emph{sublist} of the plaintext list is a decryption and permutation of its corresponding sublist in the input ciphertext list. However, as shown in Figure \ref{fig:mixnet-vs-sm}, they leak extra information beyond TraceIn/TraceOut query outputs. 

Note that unlike non-interactive verifiable mixnets, an interactive verifiable/traceable mixnet necessarily requires the mix-servers to store their permutations. Thus, forward secrecy of query results is not maintained under future compromise of stored permutations.

\subsubsection{Set membership proofs}
\label{ss:review-sm}

Below we review set membership and reverse set membership ZKP techniques, first for a single prover.
 
\begin{paragraph}{Techniques with quadratic complexity}
	Cramer et al. \citep{cramer-or-composition} propose a generic technique to create a zero-knowledge $\Sigma$-protocol for the OR composition of two statements, given $\Sigma$-protocols for each individual statement. Both ZKPs of set membership and reverse set membership can be constructed using this technique: $\rho_{\mathsf{SM}}(\gamma,\phi):=\pk{r}{\bigvee_{v \in \phi} \gamma = \mathsf{comm}(v;r)}$ and $\rho_{\mathsf{RSM}}(\Phi, v):=\pk{r}{\bigvee_{\gamma \in \Phi}  \gamma = \mathsf{comm}(v;r)}$. However, for proving $\rho_{\mathsf{SM}}(\gamma,\phi)$ for multiple $\gamma$s against the same set $\phi$ or $\rho_{\mathsf{RSM}}(\Phi, v)$ for multiple $v$s against the same set $\Phi$ (the ``batched'' queries), it results in an overall $O(n^2)$ complexity, where $n \approx |\Phi|,|\phi|$. Groth and Kohlweiss \citep{groth-kohlweiss} propose a ZKP of knowledge of the form $\rho_{\mathsf{RSM}\text{-}0}:=\pk{r}{\bigvee_{\gamma \in \Phi}  \gamma = \mathsf{comm}(0;r)}$, i.e., a proof that one of the commitments in a set commits to $0$, with $O(\log(n))$ communication complexity. Interestingly, $\rho_{\mathsf{RSM}\text{-}0}$ can be used to prove both $\rho_{\mathsf{SM}}$ and $\rho_{\mathsf{RSM}}$ \citep{groth-kohlweiss}, resulting in an $O(n\log(n))$ communication complexity for the batched queries. However, the \emph{computational} complexity (for both the prover and the verifier) remains $O(n^2)$.
\end{paragraph}

\begin{paragraph}{Accumulator-based techniques}
	Cryptographic accumulators \citep{merkle-accumulator,benaloh-demare-rsa-accumulator,nguyen-accumulator} enable efficient ZKPs of set membership. An accumulator scheme $(\mathsf{Acc}, \mathsf{GenWitness}, \mathsf{AccVer})$ allows computing a short digest $A_{\phi}$ to a large set $\phi$ as $A_{\phi} \leftarrow \mathsf{Acc}(\phi)$ and a short membership witness $w_v$ for a member $v \in \phi$ as $w_v \leftarrow \mathsf{GenWitness}(A_{\phi}, v)$ such that $\mathsf{AccVer}(A_{\phi}, v, w_{v})=1$ is a proof that $v \in \phi$. A ZKP of set membership can thus be constructed by requiring both prover and verifier to compute $A_{\phi}$, the prover to compute $w_v$ and both to then engage in $\rho_{\mathsf{SM}\text{-}\mathsf{Acc}}(\gamma,A_{\phi}):=\pk{v,r,w_v}{\gamma = \mathsf{comm}(v;r) \wedge \mathsf{AccVer}(A_{\phi}, v, w_v)=1}$. If the accumulator scheme allows commitments to be set members, a ZKP of reverse set membership can be similarly constructed using $\rho_{\mathsf{RSM}\text{-}\mathsf{Acc}}(v,A_{\Phi}):=\pk{r,w}{\mathsf{AccVer}(A_{\Phi}, \mathsf{comm}(v;r), w)=1}$, where $w$ denotes the membership witness of the commitment $\mathsf{comm}(v;r) \in \Phi$. 

	A popular approach for accumulator-based ZKPs of set membership involves Merkle accumulators \citep{merkle-accumulator} as the accumulator scheme and zkSNARKs \citep{groth16,plonk,marlin} as the ZK proof system. With Merkle accumulators, both $\mathsf{GenWitness}$ and $ \mathsf{AccVer}$ take $O(\log n)$ time, where $n=|\phi|$, which allows $n$ set membership and reverse set membership proofs in $O(n \log (n))$ time. This approach can also generically support $\rho_{\mathsf{RSM}\text{-}\mathsf{Acc}}$ by computing Merkle accumulators for the set of commitments. However, it involves expensive hash computations inside the zkSNARK circuit. Benarroch et al. \citep{modular-zkp-sm} present a non-generic ZKP of set membership using RSA accumulators, avoiding these expensive hash computations. However, the technique does not support reverse set membership. Also, computing witness $w_v$ for RSA accumulators takes $O(n)$ time, which makes it $O(n^2)$ for batched queries. Techniques to efficiently batch multiple membership proofs together also exist \citep{boneh-batching,campanelli-batching}. 	
\end{paragraph}

\begin{paragraph}{Extending to the distributed setting}
	All the above techniques work on the single-prover case. Extending them to our distributed setting where none of the provers (the mix-servers) know either the commitment openings or the permutation between the commitments and the plaintexts is non-trivial. The batching techniques \citep{boneh-batching,campanelli-batching} also fail in the distributed setting since they require the prover to know upfront which entries pass the membership proof. 

	Collaborative zkSNARKs \citep{collaborative-zksnarks} allow a distributed set of provers holding secret shares of a SNARK witness to prove joint knowledge of the same. They add roughly 2x overhead in per-prover proving time over the standard zkSNARKs \citep{collaborative-zksnarks}. DPZKs \citep{dpzk} provide similar guarantees. However, even to securely obtain shares of the SNARK witness, an extra MPC protocol is likely required.	
\end{paragraph}

\begin{paragraph}{Signature-based set membership}
	Camenisch et al. \citep{Camenisch} initiated a different approach to ZKPs of set membership: the verifier provides to the prover signatures on all members of the set under a fresh signing key generated by the verifier, and the prover proves knowledge of a signature on the committed value in ZK. This proves set membership because if the commitment commits a value outside the set, the prover does not obtain a signature on it and must forge it. This approach gives $O(n)$ batched query complexity because verifier signatures can be reused. We extend these ZKPs by a signature-based ZKP of \emph{reverse set membership} and by distributed signature-based ZKPs for both set membership and reverse set membership.
\end{paragraph}

\section{Formal definitions}
\label{formalism}

We now formalise traceable mixnets. We directly present the batched BTraceIn/BTraceOut protocols as that is our main focus (the TraceIn/TraceOut protocols are trivial special cases). Also, for simplicity, we present the special case when all $t=m$ mix-servers are required to decrypt the ciphertexts but any set of less than $t$ mix-servers cannot break secrecy. Extension to a general $t$ is possible with standard threshold cryptography techniques \citep{threshold-cryptography}. Finally, we assume an authenticated broadcast channel that ensures authenticity, availability and non-repudiability of all published messages.

\begin{paragraph}{Notation}
Given a positive integer $n$, we denote the set $\{1, \dots, n\}$ by $[n]$. We let (boldface) $\ve{x} \in X^n$ denote an $n$-length vector of values drawn from a set $X$, $\vc{x}{i}$ denote the $i^{\text{th}}$ component of $\ve{x}$, and, given an index set $I \subseteq [n]$, $\vc{x}{I}$ denote the set $\{\vc{x}{i} \mid i \in I \}$. For any scalar binary operation $\odot: X \times Y \rightarrow Z$ and vectors $\ve{x} \in X^n$, $\ve{y} \in Y^n$, we let $\ve{x} \odot \ve{y}$ denote the vector $(\vc{x}{i} \odot \vc{y}{i})_{i \in [n]}$, i.e., the vector obtained by component-wise application of $\odot$. We let $\ve{u}^{\ve{v}}$ denote $(\vc{u}{i}^{\vc{v}{i}})_{i \in [n]}$, $\ve{f}(\ve{v})$ denote $(f(\vc{v}{i}))_{i \in [n]}$, $\ve{f}(x, \ve{v})$ denote $(f(x,\vc{v}{i}))_{i \in [n]}$, etc. 

We denote a multiparty computation protocol $P$ between parties $\mathcal{P}_1,\dots,\mathcal{P}_m$ where the common input of each party is $ci$, $\mathcal{P}_k$'s secret input is $si_k$, the common output is $co$ and $\mathcal{P}_k$'s secret output is $so_k$ as follows: $co, (\so{\mathcal{P}_k}{so_k})_{k\in[m]} \leftarrow P(ci, (\si{\mathcal{P}_k}{si_k})_{k \in [m]})$. When a party does not have a secret output, we drop it from the left-hand side. In security experiments where the experimenter plays the role of honest parties $(\mathcal{P}_k)_{k \in H}$ for some $H \subset [m]$ and an adversary $\adv$ plays the role of $(\mathcal{P}_{k})_{k \in [m] \setminus H}$, we indicate it as $co, (\so{\mathcal{P}_k}{so_k})_{k \in H} \leftarrow P(ci, (\so{\mathcal{P}_k}{si_k})_{k \in H}, ({\mathcal{P}_k}^{\adv})_{k \in [m] \setminus H})$. We let $\share{x}{k}$ denote a component (typically, secret-share) of $x$ designated for $\mathcal{P}_k$. We call a function $f$ \emph{negligible} if for any polynomial $p$, there exists an $N \in \Nat$ such that $f(x) < 1/p(x)$ for all $x > N$. 
\end{paragraph}

\begin{definition}[Traceable mixnets]
\label{def:traceable-mixnets}
A \emph{traceable mixnet} is a tuple of protocols/algorithms $(\keygen,$ $\enc,$ $\mix,$ $\btrin,$ $\btrout)$ between $n$ senders $(S_i)_{i \in [n]}$, $m$ mix-servers $(\mixer{k})_{k \in [m]}$ and a querier or verifier $\querier$ such that:
\begin{itemize}[leftmargin=*]
    \item[-] $\mpk, (\so{\mixer{k}}{\msk{k}})_{k \in [m]} \leftarrow \keygen(1^{\lambda}, (\so{\mixer{k}}{})_{k \in [m]})$ is a key generation protocol between $(\mixer{k})_{k \in [m]}$, where $\lambda$ is a security parameter (given in unary), individual mix-servers do not have any secret input, the common output is a mixnet public key $\mpk$ and each $\mixer{k}$'s secret output is a secret key $\msk{k}$.

    \item[-] $\vc{c}{i} \leftarrow \enc(\mpk, \vc{v}{i})$ is an algorithm run by sender $S_i$, where $\mpk$ is the mixnet public key and $\vc{v}{i}$ is $S_i$'s sensitive input drawn from some plaintext space $\Val$. Output $\vc{c}{i}$ is a ciphertext ``encrypting'' $\vc{v}{i}$.

    \item[-] $\ve{v}', (\so{\mixer{k}}{\wit{k}})_{k \in [m]}$ $\leftarrow$ $\mix(\mpk, \ve{c},$ $(\si{\mixer{k}}{\msk{k}})_{k \in [m]})$ is a mixing protocol between $(\mixer{k})_{k \in [m]}$, where $\mpk$ is the mixnet public key and $\ve{c} \leftarrow (\enc(\mpk, \vc{v}{i}))_{i \in [n]}$ is a vector of ciphertexts encrypting the senders' plaintexts $(\vc{v}{i})_{i \in [n]}$. Each $\mixer{k}$'s secret input is its secret key $\msk{k}$. The common output is a vector $\ve{v}'$ of plaintext values obtained after permuting and decrypting $\ve{c}$ (thus $\ve{v}' = (\vc{v}{\pi(i)})_{i \in [n]}$ for some permutation $\pi$). Each $\mixer{k}$'s secret output is a witness $\wit{k}$ to be used in proving correctness of the $\btrin/\btrout$ outputs (see below).

    \item[-] $\so{\querier}{\vc{c}{I^*}} \leftarrow \btrin(\mpk,$ $\ve{c},$ $\ve{v}',$ $I,$ $J,$ $(\si{\mixer{k}}{\msk{k},$ $\wit{k}})_{k \in [m]},$ $\si{\querier}{})$ is a protocol between $(\mixer{k})_{k \in [m]}$ and querier $\querier$, where $\ve{v}', \so{\mixer{k}}{\wit{k}}$ $\leftarrow$ $\mix(\mpk,$ $\ve{c},$ $\si{\mixer{k}}{\msk{k}})$, $\ve{c}$ $\leftarrow$ $(\enc(\mpk,$ $\vc{v}{i}))_{i \in [n]}$ and $I,J \subseteq [n]$. $\querier$'s secret output is a set of ciphertexts $\vc{c}{I^*} = \{\vc{c}{i} \in \vc{c}{I} \mid \vc{v}{i} \in \vc{v}{J}' \}$; $\querier$ may abort if it is not convinced about the correctness of $\vc{c}{I^*}$. 

    \item[-] $\so{\querier}{\vc{v}{J^*}'} \leftarrow \btrout(\mpk,$ $\ve{c},$ $\ve{v}',$ $I,$ $J,$ $(\si{\mixer{k}}{\msk{k},$ $\wit{k}})_{k \in [m]},$ $\si{\querier}{})$ is a protocol between $(\mixer{k})_{k \in [m]}$ and $\querier$, where all inputs are exactly the same as $\btrin$ and $\querier$'s secret output is a set of plaintexts $\vc{v}{J^*}' = \{\vc{v}{j}' \in \vc{v}{J}' \mid \vc{v}{j}' \in \vc{v}{I} \}$; $\querier$ may abort if it is not convinced about the correctness of $\vc{v}{J^*}'$.
\end{itemize}
\end{definition}

\subsection{Completeness}

The completeness definition for traceable mixnets (Definition \ref{def:compl} and Figure \ref{exp-completeness}) models that when all the parties are honest then $a)$ $\querier$'s output $\vc{c}{I^*}$ on a $\btrin$ query on $(\ve{c},\ve{v}', I, J)$ is exactly the set of ciphertexts in $\vc{c}{I}$ that encrypted some plaintext in $\vc{v}{J}'$ (line 7), and $b)$ its output $\vc{v}{J^*}'$ on a $\btrout$ query on $(\ve{c},\ve{v}', I, J)$ is exactly the set of plaintexts in $\vc{v}{J}'$ that were encrypted by some ciphertext in $\vc{c}{I}$ (line 8). Note that we only consider the case of distinct values. The case of repeated values is trivially reducible to this case if $S_i$ prefixes $\vc{v}{i}$ with a nonce drawn uniformly from a large set.

\begin{definition}[Completeness]
	\label{def:compl}
	A traceable mixnet is \emph{complete} if for each security parameter $\lambda \in \Nat$, number of ciphertexts $n \in \Nat$, vector $\ve{v} \in \Val^n$ of distinct plaintext values and index sets $I,J \subseteq [n]$, there exists a negligible function $\negl$ such that 
	\begin{equation}
		\label{eq:completeness}
		\Pr[\cmpl(1^{\lambda}, n, \ve{v}, I, J) = 1] \geq 1-\negl(\lambda),
	\end{equation} 
	where $\cmpl$ is as defined in Figure \ref{exp-completeness}.
\end{definition}

\begin{figure}
	\scalebox{0.8}{
		\begin{tabular}{ @{}rl@{} } 
			& $\underline{\cmpl(1^{\lambda}, n, \ve{v}, I, J)}:$ \\
			1 & $ \mpk, (\so{\mixer{k}}{\msk{k}})_{k \in [m]}) \leftarrow \keygen(1^{\lambda}, (\si{\mixer{k}}{})_{k \in [m]})$ \\
			2 & $\ve{c}:=(\vc{c}{i})_{i \in [n]} \leftarrow (\enc(\mpk, \vc{v}{i}))_{i \in [n]}$ \quad \mycomment{$\vc{c}{i}$ encrypts $\vc{v}{i}$} \\
			3 & $ \ve{v}', (\so{\mixer{k}}{\wit{k}})_{k \in [m]} \leftarrow \mix(\mpk, \ve{c}, (\si{\mixer{k}}{\msk{k}})_{k \in [m]})$ \\
			4 & $ \so{\querier}{\vc{c}{I^*}} \leftarrow \btrin(\mpk, \ve{c}, \ve{v}', I, J, (\si{\mixer{k}}{\msk{k}, \wit{k}})_{k \in [m]}, \si{\querier}{}) $ \\
			5 & $ \so{\querier}{\vc{v}{J^*}'} \leftarrow \btrout(\mpk, \ve{c}, \ve{v}', I, J, (\si{\mixer{k}}{\msk{k}, \wit{k}})_{k \in [m]}, \si{\querier}{}) $ \\
			6 & \textbf{return} 1 if all $\vc{c}{i}$s and all $\vc{v}{j}'$s are distinct, and \\
			7 & \quad 1) $\vc{c}{I^*} = \{ \vc{c}{i} \in \vc{c}{I} \mid \vc{v}{i} \in \vc{v}{J}' \}$; and \\
			8 & \quad 2) $\vc{v}{J^*}' = \{ \vc{v}{j}' \in \vc{v}{J}' \mid \vc{v}{j}' \in \vc{v}{I} \}$ \quad \mycomment{$\vc{c}{I} = \{\enc(pk, v) \mid v \in \vc{v}{I}\}$ }
		\end{tabular}
	}
	\caption{Completeness experiment}
	\label{exp-completeness}
\end{figure}

\subsection{Soundness}

The soundness definition (Definition \ref{def:snd}) models that as long as input ciphertexts are well-formed, even if all the mix-servers are dishonest, sets $\vc{c}{I^*}$ and $\vc{v}{J^*}'$ output by $\querier$ in $\btrin$ and $\btrout$ respectively must be ``correct,'' where the correctness of $\vc{c}{I^*}$ and $\vc{v}{J^*}'$ is exactly as defined in $\cmpl$. Thus, we do not allow the cheating mix-servers to force $\querier$ to produce incorrect output ($\querier$ may abort, though). We only consider well-formed input ciphertexts because ``correct'' processing of ill-formed inputs is undefined. Nevertheless, proofs of well-formedness of inputs are generally required for application-level correctness and these can be constructed by the senders at the time of uploading their ciphertexts. 

The experiment (Figure \ref{exp-soundness}) begins with the key generation protocol where the adversary $\adv$ controlling all the mix-servers $(\mixer{k})_{k \in [m]}$ provides the mixnet public key $\mpk$ (line 1). We let $\adv$ supply the plaintexts but create ciphertexts from them honestly (lines 2-3), to model that $\adv$ can supply plaintexts of its choice to cheat but the ciphertexts must be well-formed. We then allow $\adv$ to run the $\mix$ protocol and produce the output list $\ve{v}'$ (line 4). $\adv$ then outputs index sets $I$ and $J$ on which it wants to break the subsequent $\btrin/\btrout$ queries (line 3). During these queries, $\querier$ outputs $\vc{c}{I^*}$ and $\vc{v}{J^*}'$ respectively (lines 6-7). $\adv$ wins if it supplies distinct entries and $\querier$ produces valid outputs $\vc{c}{I^*}, \vc{v}{J^*}'$ (does not abort) but at least one of them is incorrect (lines 8-10).

\begin{definition}[Soundness]
	\label{def:snd}
	A traceable mixnet is \emph{sound} if for each PPT adversary $\adv$ and security parameter $\lambda \in \Nat$, there exists a negligible function $\negl$ such that 
	\begin{equation}
		\label{eq:soundness}
		\Pr[\snd^{\adv}(1^{\lambda}) = 1] \leq \negl(\lambda),
	\end{equation} 
	where $\snd^{\adv}$ is as defined in Figure \ref{exp-soundness}.
\end{definition}

\begin{figure}
	\scalebox{0.8}{
		\begin{tabular}{ @{}rl@{} } 
			& $\underline{\snd^{\adv}(1^{\lambda})}:$ \\
			1 & $ \mpk \leftarrow \keygen(1^{\lambda}, (\mixer{k}^{\adv})_{k \in [m]})$ \\
			2 & $\ve{v}:=(\vc{v}{i})_{i \in [n]} \leftarrow \adv()$ \\
			3 & $\ve{c}:=(\vc{c}{i})_{i \in [n]} \leftarrow (\enc(\mpk, \vc{v}{i}))_{i \in [n]}$ \\
			4 & $ \ve{v}' \leftarrow \mix(\mpk, \ve{c}, (\mixer{k}^{\adv})_{k \in [m]})$ \\
			5 & $I, J \leftarrow \adv(\ve{v}')$ \\
			6 & $\so{\querier}{\vc{c}{I^*}} \leftarrow \btrin(\mpk, \ve{c}, \ve{v}', I, J, (\mixer{k}^{\adv})_{k \in [m]}, \si{\querier}{})$ \\
			7 & $\so{\querier}{\vc{v}{J^*}'} \leftarrow \btrout(\mpk, \ve{c}, \ve{v}', I, J, (\mixer{k}^{\adv})_{k \in [m]}, \si{\querier}{})$ \\
			8 & \textbf{return} 1 if all $\vc{v}{i}$s and all $\vc{v}{j}'$s are distinct, and either \\
			9 & \quad 1) $\vc{c}{I^*} \neq \{\vc{c}{i} \in \vc{c}{I} \mid \vc{v}{i} \in \vc{v}{J}' \}$; or \\
			10 & \quad 2) $\vc{v}{J^*}' \neq \{\vc{v}{j}' \in \vc{v}{J}' \mid \vc{v}{j}' \in \vc{v}{I} \}$ \quad \mycomment{$\vc{c}{I} = \{\enc(pk, v) \mid v \in \vc{v}{I}\}$ }
		\end{tabular}
	}
	\caption{Soundness experiment}
	\label{exp-soundness}
\end{figure}

\subsection{Secrecy}

The secrecy definition (Definition \ref{def:secr}) extends a standard anonymity property \citep{benaloh-swap-privacy1,benaloh-swap-privacy2,delaune-swap-privacy,bpriv} to the case when the $\btrin/\btrout$ queries are also allowed. The standard anonymity property can be stated for our setting as follows: an adversary controlling all-but-two senders, the querier and any set of less than $m$ mix-servers should not be able to distinguish between a world where ciphertexts $(c_0, c_1)$ sent by the two honest senders encrypt values $(v_0, v_1)$ (world 0) and the world where they encrypt $(v_1, v_0)$ (world 1). When $\btrin/\btrout$ queries are allowed, distinguishing between the two worlds is trivial because of the query outputs (e.g., if $I,J$ given to a $\btrin$ query are such that $c_0,c_1 \in \vc{c}{I}$ and $v_0 \in \vc{v}{J}'$ but $v_1 \not\in \vc{v}{J}'$ then $\adv$ immediately knows it is in world $0$ if output $\vc{c}{I^*}$ includes $c_0$). Thus, we require that $a)$ in all the $\btrin$ queries either both $v_0,v_1 \in \vc{v}{J}'$ or both $v_0,v_1 \not\in \vc{v}{J}'$ and $b)$ in all the $\btrout$ queries either both $c_0,c_1 \in \vc{c}{I}$ or both $c_0,c_1 \not\in \vc{c}{I}$. 

In more detail, in this experiment (Figure \ref{exp-secrecy}), adversary $\adv$ engages in the key generation protocol where it controls all the mix-servers except one, i.e., $\mixer{k^*}$ (line 1). It then supplies input ciphertexts for all the senders except the two that it does not control, say $S_{i_0}$ and $S_{i_1}$. For these senders, it supplies the values $v_0,v_1$ (line 2). In world 0 ($b=0$), $S_{i_0}$'s ciphertext $\vc{c}{i_0}$ encrypts $v_0$ and $S_{i_1}$'s ciphertext $\vc{c}{i_1}$ encrypts $v_1$; in world 1 ($b=1$), this order is reversed (lines 3-4). The ciphertext list thus formed is processed through the $\mix$ protocol, where $\adv$ controls all mix-servers except $\mixer{k^*}$ and produces an output plaintext list $\ve{v}'$ (line 5). Then, $\adv$ obtains access to oracles $\otrin, \otrout$ that let it choose $I$, $J$, control $\querier$ and all mix-servers except $\mixer{k^*}$ and interact with $\mixer{k^*}$ in the $\btrin, \btrout$ protocols (lines 6-14). $\adv$ is required to respect the condition of including either both or none of the honest senders' ciphertexts/plaintexts in its oracle calls (lines 12 and 18). Finally, $\adv$ outputs a bit $b'$ as its guess of the bit $b$ (line 5) and wins if its advantage in making the correct guess is non-negligible.

\begin{definition}[Secrecy]
	\label{def:secr}
	A traceable mixnet protects \emph{secrecy} if for each PPT adversary $\adv$, security parameter $\lambda \in \Nat$, $k^* \in [m]$, and $i_0, i_1 \in [n]$, there exists a negligible function $\negl$ such that 
	\begin{align}
		\label{eq:secrecy}
		\begin{split}
			&|\Pr[\secr^{\adv}(1^{\lambda}, k^*, i_0, i_1, 0) = 1] - \\
			&\quad \Pr[\secr^{\adv}(1^{\lambda}, k^*, i_0, i_1, 1) = 1]| \leq \negl(\lambda),
		\end{split}
	\end{align} 
	where $\secr^{\adv}$ is as defined in Figure \ref{exp-secrecy}.
\end{definition}

Definition \ref{def:secr-mixers} below models that when $\adv$ does not control $\querier$, it should not even learn the query outputs.

\begin{definition}[Output secrecy]
	\label{def:secr-mixers}
	A traceable mixnet protects \emph{output secrecy} if it protects secrecy as per Definition \ref{def:secr} except that in experiment $\secr^{\adv}$ (Figure \ref{exp-secrecy}), $\adv$ does not control $\querier$ during the $\btrin/\btrout$ calls (lines 10 and 14) and the constraints on lines 9 and 13 are removed.
\end{definition}

Note that Definition \ref{def:secr} also models that if the honest mix-server follows a query policy then the adversary gains no information about the output of queries disallowed by the policy, since this case corresponds to an adversary that simply does not call the $\otrin$ or $\otrout$ oracles for the disallowed queries. The privacy risk associated with the outputs of allowed queries is outside the scope of formal security requirements of traceable mixnets, but we provide a mechanism to analyse this risk in Appendix \ref{privacy-risk-analysis}.

\begin{figure}
	\scalebox{0.8}{
		\begin{tabular}{ @{}rl@{} } 
			& $\underline{\secr^{\adv}(1^{\lambda}, k^*, i_0, i_1, b)}:$ \\
			1 & $\mpk, \so{\mixer{k^*}}{\msk{k^*}} \leftarrow \keygen(1^{\lambda}, \si{\mixer{k^*}}{}, (\mixer{k}^{\adv})_{k \neq k^* })$ \\
			2 & $v_0,v_1, (\vc{c}{i})_{i \in [n] \setminus \{i_0,i_1\}} \leftarrow \adv(\mpk)$ \\
			3 & \textbf{if $b=0$:} $\vc{c}{i_0} \leftarrow \enc(\mpk, v_0)$; $\vc{c}{i_1} \leftarrow \enc(\mpk, v_1)$ \\
			4 & \textbf{else:} $\vc{c}{i_0} \leftarrow \enc(\mpk, v_1)$; $\vc{c}{i_1} \leftarrow \enc(\mpk, v_0)$ \\
			5 & $\ve{v}', \so{\mixer{k^*}}{\wit{k^*}} \leftarrow \mix(\mpk, (\vc{c}{i})_{i \in [n]}, \si{\mixer{k^*}}{\msk{k^*}}, (\mixer{k}^{\adv})_{k \neq k^* })$ \\
			6 & \textbf{return} $b' \leftarrow \adv^{\otrin, \otrout}(\ve{v}')$ \\
			7 & ~ \\
			
			8 & $\underline{\otrin(I, J):}$ \\
			9 & \quad \textbf{assert $(v_0 \in \vc{v}{J}' \iff v_1 \in \vc{v}{J}')$} \\
			10 & \quad $\btrin(\mpk, \ve{c}, \ve{v}', I, J, \si{\mixer{k^*}}{\msk{k^*}, \wit{k^*}}, (\mixer{k}^{\adv})_{k \neq k^* }, \querier^{\adv})$ \\
			11 & ~ \\
			
			12 & $\underline{\otrout(I, J):}$ \\
			13 & \quad \textbf{assert $(\vc{c}{i_0} \in \vc{c}{I} \iff \vc{c}{i_1} \in \vc{c}{I})$} \\
			14 & \quad $\btrout(\mpk, \ve{c}, \ve{v}', I, J, \si{\mixer{k^*}}{\msk{k^*}, \wit{k^*}}, (\mixer{k}^{\adv})_{k \neq k^* }, \querier^{\adv})$ \\
		\end{tabular}
	}
	\caption{Secrecy experiment}
	\label{exp-secrecy}
\end{figure}

\section{Preliminaries}
\label{preliminaries}

\begin{paragraph}{Setup}
	We assume that the output of the following $\setup$ algorithm is implicitly available to all the parties: $(q,$ $\Gone,$ $\Gtwo,$ $\GT,$ $e,$ $f_1,$ $g_1,$ $h_1,$ $f_2,$ $g_2,$ $f_T)$ $\leftarrow$ $\setup(1^{\lambda}, m, n)$. $\setup$ takes as input a security parameter $\lambda \in \mathbb{N}$, integers $m$ and $n$ ($m$ represents the number of mix-servers and $n$ represents the number of input ciphertexts) and outputs the following setup parameters: a large prime number $q$ ($q \gg m,n$), cyclic groups $\Gone, \Gtwo, \GT$ of order $q$, generators $\{f_1, g_1, h_1\}$, $\{f_2, g_2\}$ and $f_T$ of groups $\Gone$, $\Gtwo$ and $\GT$ respectively, and an efficiently computable bilinear map $e: \Gone \times \Gtwo \rightarrow \GT$ \footnote{For all $a,b \in \Zq$ and generators $g_1,g_2$ of $\Gone$ and $\Gtwo$ respectively, $e(g_1^a, g_2^b)=e(g_1,g_2)^{ab}$ and $e(g_1,g_2)\neq 1_{\GT}$, where $1_{\GT}$ denotes the identity element of $\GT$.}. We assume that the $n$-Strong Diffie Hellman (SDH) assumption \citep{boneh-boyen} holds in groups $(\Gone, \Gtwo)$ and that the decisional Diffie-Hellman (DDH) and discrete logarithm (DL) problems are hard in $\Gone$. We assume that all generators are randomly generated, e.g., as the output of a hash function modelled as a random oracle.
\end{paragraph}

\subsection{Key cryptographic primitives}

\subsubsection{ (Basic) Boneh-Boyen (BB) signatures (Section 3.1; \cite{boneh-boyen})}
\label{sss:bb-sig}

In this signature scheme, the signer chooses its secret key (SK) as $x \xleftarrow{\$} \Zq$ and verification key (VK) as $y \leftarrow g_2^x$. To sign a message $m \in \Zq$, it computes $\sigma \leftarrow g_1^{\frac{1}{m+x}}$. The signature is verified if $e(\sigma,y g_2^m) \stackrel{?}= e(g_1,g_2)$. This scheme is unforgeable against weak chosen message attacks under the $n$-SDH assumption \cite{boneh-boyen}.

\subsubsection{BBS+ signatures \cite{bbsplus-sig}}
\label{sss:bbsplus-sig}

In this signature scheme, the signer chooses its SK as $x \xleftarrow{\$} \Zq^*$ and VK as $y \leftarrow f_2^x$. To sign a message $m \in \Zq$, it computes $c,r \xleftarrow{\$} \Zq$ and $S \leftarrow (f_1g_1^m h_1^r)^{\frac{1}{c+x}}$ and outputs $\sigma := (S, c, r)$. The signature is verified if $e(S,y f_2^c) \stackrel{?}= e(f_1g_1^m h_1^r,f_2)$. The scheme is unforgeable against adaptively chosen message attacks under the $n$-SDH assumption \cite{bbsplus-sig}.

\subsubsection{Signatures on committed values} 
\label{sss:sig-committed-value}

BBS+ signatures also let one reveal only a Pedersen commitment \citep{Pedersen} $\gamma=g_1^vh_1^r$ to a signer (and a PoK of $v,r$) and obtain a signature on the \emph{committed value} $v$: 
\begin{itemize}[leftmargin=*]
    \item[-] The requester sends $\gamma$ and $\rho_{\gamma} \leftarrow \pk{v,r}{\gamma=g_1^vh_1^r}$ to the signer. The signer verifies $\rho_{\gamma}$.
    \item[-] The signer (with SK $x$ and VK $y= f_2^x$) computes a \emph{quasi BBS+ signature} by choosing $c,\hat{r} \xleftarrow{\$} \Zq$ and computing $S \leftarrow (f_1h_1^{\hat{r}}\gamma)^{\frac{1}{c+ x }}$. It sends $\hat{\sigma}:= (S, c, \hat{r})$ to the requester.
    \item[-] The requester computes $\sigma \leftarrow (S, c, \hat{r} + r)$. ($\sigma$ $=$ $((f_1h_1^{\hat{r}+r}g_1^v)^{\frac{1}{c+x}},$ $c,$ $\hat{r} + r)$ is a valid BBS+ signature on message $v$ under VK $y$.) 
\end{itemize}
$\hat{\sigma}:=(S,c,\hat{r})$ can be verified by checking if $e(S,yf_2^c) \stackrel{?}= e(f_1h_1^{\hat{r}}\gamma, f_2)$.

\subsubsection{$(m,m)$-threshold secret sharing}
\label{ss:secret-sharing}

We consider the following standard $(m,m)$-threshold secret sharing scheme where $\mathsf{Share}$ allows sharing a secret $x$ among $m$ parties and $\mathsf{Recons}$ allows its reconstruction by all of them (fewer parties do not learn $x$):
\begin{itemize}[leftmargin=*]
	\item[-] $(\share{x}{k})_{k \in [m]} \leftarrow \sharealg_{m,m}(x \in \Zq):$ $(\share{x}{k})_{k \in [m-1]}$ $\xleftarrow{\$}$ $\Zq^{m-1};$ $\share{x}{m}$ $\leftarrow$ $(x - \sum_{k \in [m-1]} \share{x}{k}) \pmod q$
	\item[-] $x \leftarrow \recon((\share{x}{k})_{k \in [m]}): x \leftarrow \sum\limits_{k \in [m]} \share{x}{k} \pmod q$
\end{itemize} 
A secret sharing scheme is called \emph{additive} (resp. \emph{multiplicative}) if $\mathcal{P}_k$ on input its shares $\share{x}{k}$, $\share{y}{k}$ of secrets $x$ and $y$ respectively can obtain its share of $x+y$ (resp. $xy$) without any additional interaction with other parties. The above scheme is clearly additive. It can also be made multiplicative using Beaver's trick  \citep{mult-secret-sharing}, which employs an input-independent precomputation step and an algorithm $\mult$ such that $\mathcal{P}_k$ can obtain its share of $xy$ as $\mult(\share{x}{k}, \share{y}{k})$.

\subsubsection{$(m,m)$-threshold proofs of knowledge}
\label{sss:distr-pok}

An $(m,m)$-threshold proof of knowledge (also called a \emph{distributed proof of knowledge} or a DPK) is a protocol between provers $(\mathcal{P}_k)_{k \in [m]}$ and a verifier $\verifier$ that convinces $\verifier$ that for a given common input $x$, the provers know secret shares $\share{\omega}{k}$ of a secret $\omega$ such that a predicate $p(x,\omega)$ is true. We denote these DPKs as $\so{\verifier}{res}$ $\leftarrow$ $\mathsf{DPK}(x$, $p$, $(\si{\mathcal{P}_k}{\share{\omega}{k}})_{k \in [m]}$, $\si{\verifier}{})$, where $res=1$ means that $\verifier$ accepted the proof. The secrecy guarantee is that an adversary $\adv$ controlling $\verifier$ and all $(\mathcal{P}_{k})_{k \neq k^*}$ for some $k^*$ cannot learn anything about $\share{\omega}{k^*}$ (and thus $\omega$) \citep{distr-pok}.

We use DPKs where the predicate $p$ is of the form $\bigwedge_{i \in [\ell]} y_i = \prod_{j \in [\ell']} g_{ij}^{\omega_j}$ for $\ell,\ell' \in \Nat$, public values $y_i,g_{ij}\in \Gone$, $\Gtwo$ or $\GT$ and $\omega_j \in \Zq$. These DPKs can be constructed using standard $\Sigma$-protocol techniques \citep{desmedt-threshold-cryptosystems,distr-pok,camenisch-thesis} if each prover $\mathcal{P}_k$ knows share $\share{\omega}{k}_j$ of each $\omega_j$. We use the following NIZK variant obtained using the Fiat-Shamir heuristic \citep{fiat-shamir}:
\begin{itemize}[leftmargin=*]
	\item[-] $(\mathcal{P}_k)_{k \in [m]}$: Publish $\share{a}{k}_i \leftarrow \prod_{j \in [\ell']} g_{ij}^{\share{r}{k}_j}$, where $\share{r}{k}_j \xleftarrow{\$} \Zq$.
	\item[-] $(\mathcal{P}_k)_{k \in [m]}$: Compute $a_i \leftarrow \prod_{k \in [m]} \share{a}{k}_i$; $c \leftarrow H(p \Vert (a_i)_{i \in [\ell]})$, where $H$ is a cryptographic hash function modelled as a random oracle; $\share{z}{k}_j \leftarrow \share{r}{k}_j - c \share{\omega}{k}_j \pmod{q}$. Send $\share{z}{k}_j$ to $\verifier$.
	\item[-] $\verifier$: Obtain $((\share{a}{k}_i)_{i \in [\ell],k\in[m]},$ $c,$ $(\share{z}{k}_j)_{j \in [\ell'],k \in [m]})$ from $(\mathcal{P}_k)_{k \in [m]}$. Compute $a_i \leftarrow \prod_{k \in [m]} \share{a}{k}_i$; $z_j \leftarrow \sum_{k \in [m]}\share{z}{k}_j \pmod{q}$. Check $c \stackrel{?}= H(p \Vert (a_i)_{i \in [\ell]})$ and $\bigwedge_{i \in [\ell]} a_i \stackrel{?}= y_i^c\prod_{j \in [\ell']} g_{ij}^{z_j}$.
\end{itemize} 
In this variant, if $\adv$ controls $(\mathcal{P}_{k})_{k \neq k^*}$ but not $\verifier$, it does not even learn whether the statement was proved successfully or not, because it only sees $\share{a}{k^*}_i \leftarrow \prod_{j\in [\ell']} g_{ij}^{\share{r}{k^*}_j}$ and not $\share{z}{k^*}_j$.

\subsubsection{$(m,m)$-threshold homomorphic encryption}
\label{sss:thenc}

An $(m,m)$-threshold encryption scheme $\mathsf{E}^{\mathsf{th}}$ between parties $(\mathcal{P}_k)_{k \in [m]}$ with plaintext space $\mathbb{M}(\mathsf{E}^{\mathsf{th}})$ and ciphertext space $\mathbb{C}(\mathsf{E}^{\mathsf{th}})$ is a tuple $(\keygen,$ $\enc,$ $\tdec)$, where $\keygen$ is a key generation protocol, $\enc$ is an encryption algorithm and $\tdec$ is a threshold decryption protocol, such that for all $x \in \mathbb{M}(\mathsf{E}^{\mathsf{th}})$, security parameters $\lambda \in \Nat$, $(\mathsf{pk},$ $(\so{\mathcal{P}_k}{\share{\mathsf{sk}}{k}})_{k \in [m]})$ $\leftarrow$ $\keygen(1^{\lambda},$ $\si{\mathcal{P}_k}{})$, $\tdec(\thege(\mathsf{pk},$ $x),$ $(\si{\mathcal{P}_k}{\share{\mathsf{sk}}{k}})_{k \in [m]}) = x$. IND-CPA security of these schemes is analogous to the IND-CPA security of vanilla public-key encryption schemes \citep{semantic-security}, where the adversary controls less than $m$ parties.

We use the following threshold encryption schemes: $a)$ $\theg$: the threshold El Gamal encryption scheme \citep{threshold-cryptography} where $\mathbb{M}(\theg) = \Gone$ and $\mathbb{C}(\theg) = \Gone \times \Gone$, and $b)$ $\thez$: an optimised threshold Paillier encryption scheme from Damg{\aa}rd et al. \citep{dj-generalisation} where $\mathbb{M}(\thez) = \mathbb{Z}_N$ for an RSA modulus $N$ and $\mathbb{C}(\thez) = \mathbb{Z}_{N^2}^*$. $\theg$ is multiplicatively homomorphic in $\Gone$: for any two ciphertexts $c_1,c_2 \in \Gone \times \Gone$ encrypting messages $m_1,m_2 \in \Gone$, $c_1c_2$ (their component-wise multiplication in $\Gone$) decrypts to the message $m_1m_2$ (group multiplication in $\Gone$). $\thez$ is additively homomorphic in $\mathbb{Z}_N$: for any two ciphertexts $c_1,c_2 \in \mathbb{Z}_{N^2}^*$ encrypting messages $m_1,m_2 \in \mathbb{Z}_N$, $c_1c_2 \bmod N^2$ decrypts to the message $m_1+m_2 \mod N$. However, we require additive homomorphism in $\Zq$ (a prime order group). Thus, we let $N > q$, intepret messages in $\Zq$ as messages in $\mathbb{Z}_N$ and carefully use $\thez$'s homomorphic addition modulo $N$ to obtain homomorphic addition modulo $q$ (see Section \ref{sec:tech-details}). 

Both $\theg$ and $\thez$ are IND-CPA secure, respectively under the DDH assumption in $\Gone$ \citep{elgamal} and the decisional composite residuosity (DCR) assumption \citep{paillier}. Also, both support distributed key generation protocols. For $\theg$, each party already generates its key shares independently; for $\thez$, a secure key generation protocol can be designed \citep{damgaard-secure-keygen}. Further, the $\tdec$ protocols of both schemes provide \emph{simulation security}, i.e, the adversary's view in the $\tdec$ protocol can be simulated given access to a decryption oracle.

We also use a standard (non-threshold) IND-CPA secure public-key encryption scheme $\pke$ on message space $\Zq$.

\subsubsection{Shuffles} 

Let $\encscheme^{\mathsf{th}}$ be an $(m,m)$ threshold homomorphic encryption scheme, $\mathsf{pk}$ be a public key under $\encscheme^{\mathsf{th}}$, $\ve{\epsilon}$ be an $n$-length vector of ciphertexts against $\mathsf{pk}$, $\permfunc{1},\dots,\permfunc{m} \in \mathsf{Perm}(n)$ be secret permutations of parties $\mathcal{P}_{1},\dots,\mathcal{P}_{m}$, where $\mathsf{Perm}(n)$ denotes the space of permutation functions, and $\encscheme^{\mathsf{th}}.\renc(\mathsf{pk}, c)$ be re-encryption under $\encscheme^{\mathsf{th}}$ of ciphertext $c$ with fresh randomness. We let $\ve{\epsilon'} \leftarrow \shuffle(\encscheme^{\mathsf{th}}, \mathsf{pk}, \ve{\epsilon}, \si{\mathcal{P}_{1}}{\permfunc{1}}, \dots, \si{\mathcal{P}_{m}}{\permfunc{m}})$ be a shorthand for repeated re-encryption and permutation of $\ve{\epsilon}$ by each of $(\mathcal{P}_{k})_{k \in [m]}$ in sequence, such that for all $j \in [n]$, $\vc{\epsilon}{j}'$ $=$ $\encscheme^{\mathsf{th}}.\renc(\mathsf{pk}$, $\vc{\epsilon}{\pi(j)})$, where $\pi = \permfunc{m} \circ \dots \circ \permfunc{1}$. The order of parties in the $\shuffle$ protocol denotes that first $\mathcal{P}_{1}$ runs, then $\mathcal{P}_{2}$, etc.

\section{Our construction}
\label{construction}

Our traceable mixnet construction extends Camenisch et al.'s single-prover ZKPs of set membership \citep{Camenisch} and our novel ZKP of reverse set membership (Section \ref{sss:zkp-rev-membership-single}). We explain these protocols first.

\subsection{Single prover case}

\subsubsection{ZKP of set membership \citep{Camenisch}}
\label{sss:zkp-sm-single}

In this ZKP, given a Pedersen commitment \citep{Pedersen} $\gamma$ and a set of values $\phi$, a single prover proves knowledge of $v,r$ such that $\gamma = g_1^{v}h_1^r$ and $v \in \phi$. The main idea is that the verifier generates a \emph{fresh} BB signature key pair $x \xleftarrow{\$} \Zq, y \leftarrow g_2^x$ and sends to the prover the verification key $y$ and signatures $\sigma_{v'} \leftarrow g_1^{\frac{1}{x+v'}}$ on each $v' \in \phi$. The prover chooses a blinding factor $b \xleftarrow{\$} \Zq$ and sends to the verifier a blinded version $\tilde{\sigma}_v$ of the signature on the value $v$ committed by $\gamma$, as $\tilde{\sigma}_v \leftarrow \sigma_v^b = g_1^{\frac{b}{x+v}}$. Both then engage in a ZKP of knowledge $\pk{v,r,b}{\gamma=g_1^{v}h_1^{r} \wedge e(\tilde{\sigma}_v, y) = e(g_1,g_2)^{b}e(\tilde{\sigma}_v, g_2)^{-v}}$, which proves knowledge of a valid signature $(\tilde{\sigma}_v)^{1/b}$ on the value committed by $\gamma$ (see Section \ref{sss:bb-sig}). This is a proof of set membership because if $\gamma$ does not commit a member of $\phi$ then the proof fails since the prover does not obtain signatures on non-members of the set and cannot forge them. The scheme is an honest-verifier ZKP of set membership if $|\phi|$-Strong Diffie Hellman assumption holds in $(\Gone,\Gtwo)$ \citep{Camenisch}. 

A nice property of the scheme is that multiple proofs for a set $\Phi$ of commitments against the same set $\phi$ of values can be given efficiently by reusing verifier signatures. After obtaining signatures $(\sigma_v)_{v \in \phi}$, the prover can precompute $(\tilde{\sigma}_v)_{v \in \phi}$ in a stage 1. In stage 2, for each commitment $\gamma \in \Phi$ committing a value $v \in \phi$, the corresponding $\tilde{\sigma}_v$ can be looked up and the ZKP of knowledge can be constructed in $O(1)$ time. This results in an $O(|\phi|+|\Phi|)$ amortised complexity for proving set membership for $|\Phi|$ commitments.

\subsubsection{ZKP of reverse set membership}
\label{sss:zkp-rev-membership-single}

Now we show how to extend this idea to prove reverse set membership. Note that the BB signatures used above require messages to be in group $\Zq$. Since commitments are members of $\Gone$, one cannot use BB signatures to sign members of the set $\Phi$ of commitments for the reverse set membership proof. Recall, however, that the BBS+ signature scheme \citep{bbsplus-sig} lets one present a commitment $\gamma = g_1^vh_1^r$ along with $\rho_{\gamma}:=\nizkpk{v,r}{\gamma=g_1^vh_1^r}$ to the signer and obtain a BBS+ signature on the value $v$, without leaking $v$ to the signer. We exploit this property for the reverse set membership proof.

Our reverse set membership verifier generates fresh BBS+ signature key pairs $x \xleftarrow{\$} \Zq$, $y \leftarrow f_2^x$ and sends quasi-signatures $\hat{\sigma}_{\gamma} := (S, c, \hat{r}) \leftarrow ((f_1h_1^{\hat{r}}\gamma)^{\frac{1}{c+x}}, c, \hat{r})$ for each $\gamma = g_1^vh_1^r \in \Phi$, after verifying $\rho_{\gamma}$. If the prover knows commitment randomness $r$ for each $\gamma \in \Phi$, it can use $\hat{\sigma}_{\gamma}$ to derive a valid BBS+ signature $\sigma_v \leftarrow$ $(S, c, \mathsf{r}:=\hat{r}+r) =$ $((f_1g_1^vh_1^{\hat{r}+r})^{\frac{1}{c+x}}, c, \hat{r}+r)$ on the committed value $v$ and store $\sigma_v$ indexed by $v$. To prove that a given $v$ is committed by some commitment in $\Phi$, the prover looks up $\sigma_v$ in O(1) time, blinds each component of $\sigma_v$ to obtain a blinded signature $\tilde{\sigma}_v$ and proves knowledge of a BBS+ signature on $v$ by revealing only $\tilde{\sigma}_v$ to the verifier. This is a proof of reverse set membership because the prover can obtain valid BBS+ signatures only on values committed by $\gamma \in \Phi$ and cannot forge it for $v$ if no $\gamma \in \Phi$ committed $v$. This protocol also enjoys $O(|\phi|+|\Phi|)$ amortised complexity for multiple proofs for each $v \in \phi$ against the same set of commitments $\Phi$. See Appendix \ref{appendix:rsm-single} for the detailed protocol.

\subsection{Overview of our construction}
\label{construction:traceable-mixnets}

\begin{figure*}
	\centering
	\begin{subfigure}[t]{0.33\textwidth}
		\centering
		\scalebox{0.75}{
			\begin{tabular}{c@{\hskip 0.1cm} c@{\hskip 0.1cm} c@{\hskip 0.1cm} c@{\hskip 0.6cm}|}
				\begin{tabular}{|c|}
					\hline
					\cellcolor{gray!20}Input \\
					\hline
					$\vc{\gamma}{1}$ \\
					\hline
					$\vc{\gamma}{2}$ \\
					\hline
					$\vc{\gamma}{3}$ \\
					\hline
					$\vc{\gamma}{4}$ \\
					\hline
					$\vc{\gamma}{5}$ \\
					\hline
				\end{tabular} & 
				\begin{tabular}{c|c|}
					\multicolumn{2}{r}{$\mixer{1}$} \\
					\cline{2-2}
					$\rightarrow$ & \multirow{5}{*}{$\permfunc{1}$} \\
					$\vdots$ &  \\
					$\vdots$ & \\
					$\rightarrow$ & \\
					\cline{2-2}
				\end{tabular} &
				\begin{tabular}{c|c|c}
					\multicolumn{3}{c}{$\mixer{2}$} \\
					\cline{2-2}
					$\rightarrow$ & \multirow{5}{*}{$\permfunc{2}$} & $\rightarrow$\\
					$\vdots$ &  & $\vdots$ \\
					$\vdots$ & & $\vdots$\\
					$\rightarrow$ & & $\rightarrow$\\
					\cline{2-2}
				\end{tabular} &
				\begin{tabular}{|c|}
					\hline
					\multicolumn{1}{|c|}{\cellcolor{gray!20}Output} \\
					\hline
					$\vc{v}{1}'=\vc{v}{3}$ \\
					\hline
					$\vc{v}{2}'=\vc{v}{5}$ \\
					\hline
					$\vc{v}{3}'=\vc{v}{2}$ \\
					\hline
					$\vc{v}{4}'=\vc{v}{1}$ \\
					\hline
					$\vc{v}{5}'=\vc{v}{4}$ \\
					\hline
				\end{tabular}
			\end{tabular}
		}
		\caption{The mixing process}
		\label{fig:mixing}
	\end{subfigure}%
	~
	\begin{subfigure}[t]{0.33\textwidth}
		\centering
		\scalebox{0.75}{
			\begin{tabular}{c@{\hskip 0.1cm} c@{\hskip 0.1cm} c@{\hskip 0.1cm} c@{\hskip 0.6cm}|}
				\begin{tabular}{|l|}
					\hline
					\cellcolor{gray!20}Input \\
					\hline
					\cellcolor{green!30}$\vc{\gamma}{1},\tilde{\sigma}_1$ \\
					\hline
					\cellcolor{green!30}$\vc{\gamma}{2},\tilde{\sigma}_2$ \\
					\hline
					\cellcolor{green!30}$\vc{\gamma}{3},\color{red} \tilde{\sigma}_3$ \\
					\hline
					$\vc{\gamma}{4},\color{red} \tilde{\sigma}_4$ \\
					\hline
					$\vc{\gamma}{5},\tilde{\sigma}_5$ \\
					\hline
				\end{tabular} & 
				\begin{tabular}{c|c|}
					\multicolumn{2}{r}{$\mixer{1}$} \\
					\cline{2-2}
					$\leftarrow$ & \multirow{5}{*}{$\pi^{(1)^{-1}}$} \\
					$\vdots$ &  \\
					$\vdots$ & \\
					$\leftarrow$ & \\
					\cline{2-2}
				\end{tabular} &
				\begin{tabular}{c|c|c}
					\multicolumn{3}{c}{$\mixer{2}$} \\
					\cline{2-2}
					$\leftarrow$ & \multirow{5}{*}{$\pi^{(2)^{-1}}$} & $\leftarrow$\\
					$\vdots$ & & $\vdots$ \\
					$\vdots$ & & $\vdots$\\
					$\leftarrow$ & & $\leftarrow$\\
					\cline{2-2}
				\end{tabular} &
				\begin{tabular}{|c|}
					\hline
					\multicolumn{1}{|c|}{\cellcolor{gray!20}Output} \\
					\hline
					$\vc{v}{1}', \color{red} \sigma'_1$ \\
					\hline
					\cellcolor{green!30}$\vc{v}{2}',\sigma'_2$ \\
					\hline
					\cellcolor{green!30}$\vc{v}{3}',\sigma'_3$ \\
					\hline
					\cellcolor{green!30}$\vc{v}{4}',\sigma'_4$ \\
					\hline
					$\vc{v}{5}',\color{red} \sigma'_5$ \\
					\hline
				\end{tabular}
			\end{tabular} 
		}
		\caption{Protocol $\zkpsm$ (Figure \ref{fig:distr-set-membership})}
		\label{fig:summary-zkpsm}
	\end{subfigure}%
	~
	\begin{subfigure}[t]{0.33\textwidth}
		\centering
		\scalebox{0.75}{
			\begin{tabular}{c@{\hskip 0.1cm} c@{\hskip 0.1cm} c@{\hskip 0.1cm} c@{\hskip 0.2cm}}
				\begin{tabular}{|l|}
					\hline
					\cellcolor{gray!20}Input \\
					\hline
					\cellcolor{green!30}$\vc{\gamma}{1},\hat{\sigma}_1$ \\
					\hline
					\cellcolor{green!30}$\vc{\gamma}{2},\hat{\sigma}_2$ \\
					\hline
					\cellcolor{green!30}$\vc{\gamma}{3},\hat{\sigma}_3$ \\
					\hline
					$\vc{\gamma}{4},\color{red} \hat{\sigma}_4$ \\
					\hline
					$\vc{\gamma}{5},\color{red} \hat{\sigma}_5$ \\
					\hline
				\end{tabular} & 
				\begin{tabular}{c|c|}
					\multicolumn{2}{r}{$\mixer{1}$} \\
					\cline{2-2}
					$\rightarrow$ & \multirow{5}{*}{$\permfunc{1}$} \\
					$\vdots$ & \\
					$\vdots$ & \\
					$\rightarrow$ & \\
					\cline{2-2}
				\end{tabular} &
				\begin{tabular}{c|c|c}
					\multicolumn{3}{c}{$\mixer{2}$} \\
					\cline{2-2}
					$\rightarrow$ & \multirow{5}{*}{$\permfunc{2}$} & $\rightarrow$\\
					$\vdots$ & & $\vdots$ \\
					$\vdots$ & & $\vdots$\\
					$\rightarrow$ & & $\rightarrow$\\
					\cline{2-2}
				\end{tabular} &
				\begin{tabular}{|c|}
					\hline
					\multicolumn{1}{|c|}{\cellcolor{gray!20}Output} \\
					\hline
					$\vc{v}{1}', \tilde{\sigma}'_1$ \\
					\hline
					\cellcolor{green!30}$\vc{v}{2}',\color{red} \tilde{\sigma}'_2$ \\
					\hline
					\cellcolor{green!30}$\vc{v}{3}',\tilde{\sigma}'_3$ \\
					\hline
					\cellcolor{green!30}$\vc{v}{4}',\tilde{\sigma}'_4$ \\
					\hline
					$\vc{v}{5}',\color{red} \tilde{\sigma}'_5$ \\
					\hline
				\end{tabular}
			\end{tabular} 
		}
		\caption{Protocol $\zkprsm$ (Figure \ref{fig:distr-rev-set-membership})}
		\label{fig:summary-zkprsm}
	\end{subfigure}%
	\caption{{\small A summary of our construction: a) the mixing process with two mix-servers; b) a $\zkpsm$ call for index sets $I,J$ represented by the green entries in the mixnet input and output lists: in stage 1, the querier prepares valid BB signatures $\vc{\sigma}{j}'$ for $(\vc{v}{j}')_{j \in J}$ and invalid ones (marked red) for $(\vc{v}{j}')_{j \not\in J}$; blinded versions $(\vc{\tilde{\sigma}}{i})_{i \in [n]}$ of these signatures appear alongside commitments $(\vc{\gamma}{i})_{i \in [n]}$; the DPKs of stage 2 pass only for commitments $(\vc{\gamma}{i})_{i \in I}$ whose corresponding blinded signature was valid (not marked red); c) a $\zkprsm$ call: the querier prepares valid BBS+ quasi-signatures $(\vc{\hat{\sigma}}{i})_{i \in I}$ for commitments $(\vc{\gamma}{i})_{i \in I}$ and invalid ones for $(\gamma_i)_{i \not\in I}$; blinded versions $(\vc{\tilde{\sigma}}{j}')_{j \in [n]}$ of signatures on the respective committed values appear alongside the values $(\vc{v}{j}')_{j \in [n]}$; the DPKs of stage 2 pass only for values $(\vc{v}{j}')_{j \in J}$ whose blinded signature was valid.}}
	\label{fig:construction-sketch}
\end{figure*}

We now give an overview of our traceable mixnet construction (see Section \ref{sec:tech-details} for detailed protocol steps). In our construction, senders send threshold encryptions of their sensitive values as input ciphertexts which get shuffled by a series of mix-servers and eventually decrypted just like a standard re-encryption mixnet \citep{mixnet-sok}. However, in addition, the senders also upload Pedersen commitments to the encrypted values along with the input ciphertexts and secret-share the commitment openings among the mix-servers (see Figure \ref{fig:mixing}). The mix-servers use these shares to distributedly answer BTraceIn and BTraceOut queries via our distributed and batched ZKPs of set membership and reverse set membership ($\zkpsm$ and $\zkprsm$).

A $\zkpsm$ protocol for index sets $I,J$ allows the mix-servers to prove for each uploaded commitment at an index $i \in I$ that it commits a value in the set of output plaintexts at indices $J$; the querier's output is the indices $I^* \subseteq I$ where the statement holds. As in the single-prover ZKP of set membership, the querier is asked to provide BB signatures on the set of plaintexts at indices $J$. Note, however, that in the single prover case, the prover knows the commitment's committed value $v$, which allows it to look-up its blinded signature $\tilde{\sigma}_v$ in $O(1)$ time. In the distributed mixnet setting, no prover (mix-server) knows the committed value, randomness or the permutation between the list of commitments and plaintexts. The challenge is to efficiently identify the blinded signature on a given commitment's committed value without letting any set of less than $m$ mix-servers or the querier learn these secrets.

To solve this challenge, the querier signatures are encrypted and shuffled by the mix-servers \emph{in the reverse direction} as the forward mixnet shuffle --- following the inverse of the mixnet permutation -- and are homomorphically blinded by random blinding factors before decryption (see Figure \ref{fig:summary-zkpsm}). This process produces the blinded signatures next to the corresponding input commitments. To prove set membership, the mix-servers use the blinding factors and the commitment opening shares sent by the senders to jointly prove knowledge of a BB signature on the committed value via a DPK. 

A technical complication in the above outline is that in addition to supplying BB signatures for each plaintext at an index $j \in J$, the querier must also provide invalid signatures for plaintexts at indices $j \not\in J$. These invalid (``fake'') signatures should also be encrypted, reverse-shuffled and blinded in the same way as the valid signatures. Without these invalid signatures, the blinded signatures would appear against exactly the input list commitments that committed a plaintext in set $\vc{v}{J}'$. This would reveal even for commitments outside set $\vc{c}{I}$ whether they committed a plaintext in set $\vc{v}{J}'$ or not, violating our secrecy definition. With these invalid signatures, the DPKs pass only for commitments in $\vc{c}{I}$ that commit a value in set $\vc{v}{J}'$ but no information is revealed for commitments outside $\vc{c}{I}$.

A $\zkprsm$ protocol for index sets $I,J$ allows the mix-servers to prove for each output plaintext at an index $j \in J$ that it is committed by a member of the set of commitments at indices $I$; the querier's output is the indices $J^* \subseteq J$ where the statement holds. As in the single-prover ZKP of reverse set membership, the querier is asked to provide BBS+ quasi-signatures on the set of commitments at indices $I$ (and invalid quasi-signatures for indices $[n] \setminus I$). These quasi-signatures are encrypted, homomorphically converted to encrypted BBS+ signatures and then shuffled by the mix-servers \emph{in the forward direction} --- following the mixnet permutation --- to obtain encrypted BBS+ signatures next to the corresponding plaintexts (see Figure \ref{fig:summary-zkprsm}). These encrypted signatures are then homomorphically blinded before decryption and the mix-servers use the blinding factors to provide a DPK of a BBS+ signature on the corresponding plaintext.

Even when all the mix-servers are cheating, they cannot make make the proofs pass for an incorrect entry and include, e.g., a commitment that did not commit a value in set $\vc{v}{J}'$ in $\zkpsm$ output. However, this does not prevent them from deliberately failing proofs for commitments that actually committed a value in $\vc{v}{J}'$, producing a smaller-than-correct output set and violating Definition \ref{def:snd}. Thus, we run $\zkpsm$ against both $J$ and $[n] \setminus J$ in a $\btrin$ call and make the querier abort if proofs against both the runs failed for some commitment (similarly for $\btrout$).

\subsection{Technical details}
\label{sec:tech-details}

Figure \ref{fig:construction} shows our traceable mixnet construction in detail (this construction preserves secrecy only in the honest-but-curious (HBC) setting; see Section \ref{sec:hbc-to-malicious} for malicious security). We use the threshold $\thez$ scheme to create ciphertexts $\vc{\epsilon}{i}$, the $\theg$ scheme to reverse-shuffle BB signatures in $\zkpsm$ and both $\theg$ and $\thez$ to forward-shuffle BBS+ signatures in $\zkprsm$. Further, we use a standard public-key encryption scheme $\pke$ for securely sending shares of commitment openings to the individual mix-servers. The $\keygen$ step creates public/private keys for all these schemes. Secret keys for $\theg$ and $\thez$ are shared among the mix-servers and secret keys for $\pke$ for each public key is held by individual mix-servers. Via the $\enc$ algorithm, senders upload ciphertexts $\vc{\epsilon}{i}$ encrypting their secret values, along with commitments $\vc{\gamma}{i}$ and encryptions of secret shares of the commitment openings for each mix-server. They also upload proofs of knowledge $\vc{\pokcomm}{i}$ of the commitment openings and encryptions $\vc{\epsilon}{r_i}$ of commitment randomnesses to enable the $\zkprsm$ proofs. During the $\mix$ protocol, the mix-servers shuffle and threshold-decrypt $\vc{\epsilon}{i}$ to produce permuted plaintexts $\vc{v}{j}'=(\vc{v}{\pi(j)})_{j \in [n]}$, where $\pi$ is composed of secret permutations $\permfunc{k}$ of each mix-server $\mixer{k}$. Each $\mixer{k}$ stores $\permfunc{k}$ and decryptions of commitment opening shares to jointly answer $\btrin$/$\btrout$ queries via $\zkpsm$/$\zkprsm$. 

\begin{figure}
	\centering
	\scalebox{0.8}{
		\begin{tabular}{@{}|l|@{}}
			\hline
			$\underline{\keygen(1^{\lambda}, (\si{\mixer{k}}{})_{k \in [m]})}$: \\       
	        \enskip $(\mixer{k})_{k\in[m]}$: \\
	        \enskip \quad $\pkepk{k}, \pkesk{k} \leftarrow \pkeg(1^{\lambda})$; publish $\pkepk{k}$ \\
	        \enskip $\thegpk, (\si{\mixer{k}}{\thegsk{k}})_{k \in [m]} \leftarrow \thegg(1^{\lambda}, (\si{\mixer{k}}{})_{k \in [m]})$ \\		
			\enskip $\thezpk, (\so{\mixer{k}}{\thezsk{k}})_{k \in [m]} \leftarrow \thezg(1^{\lambda}, (\si{\mixer{k}}{})_{k\in[m]})$ \\
			\enskip \textbf{output} $\mpk:= ((\pkepk{k})_{k \in [m]}, \thegpk, \thezpk),$ \\
			\quad $(\so{\mixer{k}}{\msk{k} := (\pkesk{k}, \thegsk{k}, \thezsk{k})})_{k \in [m] }$ \\ 
			\\
			
			$\underline{\enc(\mpk:=((\pkepk{k})_{k \in [m]}, \cdot, \thezpk), v \in \Zq)}$: \quad \mycomment{run by $S_1, \dots, S_n$} \\
			\enskip $\epsilon \leftarrow \theze(\thezpk, v)$ \mycomment{interpret $v$ as an element of $\mathbb{Z}_N$} \\
			\enskip $r \xleftarrow{\$} \Zq$; $\gamma \leftarrow g_1^{v}h_1^{r}$; $\pokcomm \leftarrow \nizkpk{(v,r)}{\gamma = g_1^vh_1^r}$ \\
			\enskip $\epsilon_r \leftarrow \theze(\thezpk, r)$ \mycomment{interpret $r$ as an element of $\mathbb{Z}_N$} \\
			\enskip $(\share{v}{k})_{k \in [m]} \leftarrow \sharealg_{m,m}(v)$; $(\share{r}{k})_{k \in [m]} \leftarrow \sharealg_{m,m}(r)$ \\
			\enskip $(\share{\mathsf{ev}}{k})_{k \in [m]} \leftarrow (\pkee(\pkepk{k}, \share{v}{k}))_{k \in [m]}$ \\	
			\enskip $(\share{\mathsf{er}}{k})_{k \in [m]} \leftarrow (\pkee(\pkepk{k}, \share{r}{k}))_{k \in [m]}$ \\
			\enskip \textbf{output} $c := (\epsilon, \gamma, (\share{\mathsf{ev}}{k},\share{\mathsf{er}}{k})_{k \in [m]}, \pokcomm, \epsilon_r)$ \\ 
			\\
			
			$\mix(\mpk:=(\cdot, \cdot, \thezpk), \ve{c}:=(\vc{\epsilon}{i}, \cdot, (\vcs{\mathsf{ev}}{i}{k},\vcs{\mathsf{er}}{i}{k})_{k\in[m]}, \cdot, \cdot)_{i \in [n]},$ \\
			$\underline{\enskip(\si{\mixer{k}}{\msk{k}:=(\pkesk{k}, \cdot, \thezsk{k})})_{k \in [m]}):\quad\quad\quad\quad\quad\quad\quad\quad\quad}$ \\
			\enskip $(\mixer{k})_{k \in [m]}$: \quad $\permfunc{k} \xleftarrow{\$} \perm{n}$ \\ 
			\enskip $(\vc{\epsilon}{j}')_{j \in [n]} \leftarrow \shuffle(\thez, \thezpk, (\vc{\epsilon}{i})_{i \in [n]}, \si{\mixer{1}}{\permfunc{1}}, \dots, \si{\mixer{m}}{\permfunc{m}})$ \\
			\enskip \mycomment{$\vc{\epsilon}{j}' = \thezre(\thezpk, \vc{\epsilon}{\pi(j)})$, where $\pi = \permfunc{m} \circ \dots \circ \permfunc{1}$} \\
			\enskip $({\vc{v}{j}'})_{j \in [n]} \leftarrow \ve\thezd((\vc{\epsilon}{j}')_{j \in [n]}, (\si{\mixer{k}}{\thezsk{k}})_{k \in [m]})$ \\       
			\enskip $(\mixer{k})_{k \in [m]}$: \\
			\enskip \quad $\vs{v}{k}\leftarrow \ve\pked(\pkesk{k}, \vs{\mathsf{ev}}{k})$;\quad   $\vs{r}{k}\leftarrow \ve\pked(\pkesk{k}, \vs{\mathsf{er}}{k})$ \\ 
			\enskip \textbf{output} $\ve{v'}, (\so{\mixer{k}}{\wit{k}:= (\permfunc{k}, \vs{v}{k}, \vs{r}{k})})_{k \in [m]}$ \\ 
			\\
			
			$\btrin(\mpk:=(\cdot, \thegpk, \cdot), \ve{c}:=(\cdot, \vc{\gamma}{i}, \cdot, \cdot, \cdot)_{i \in [n]},$ \\
			\enskip$\ve{v'} \in \Zq^n, I \subseteq [n], J \subseteq [n],$ \\
			$\enskip(\si{\mixer{k}}{\msk{k}:=(\cdot, \thegsk{k}, \cdot), \wit{k} := (\permfunc{k}, \vs{v}{k}, \vs{r}{k})})_{k \in [m]},$ \\
			$\underline{\quad\si{\querier}{}):\quad\quad\quad\quad\quad\quad\quad\quad\quad\quad\quad\quad\quad\quad\quad\quad\quad\quad\quad\quad\quad\quad\quad}$ \\
			\enskip $\so{\querier}{I^*} \leftarrow \zkpsm(\thegpk, \ve{\gamma}, \ve{v'}, I, J,$ \\
			\enskip \quad\quad$(\si{\mixer{k}}{\thegsk{k}, \permfunc{k}, \vs{v}{k}, \vs{r}{k}})_{k \in [m]}, \si{\querier}{})$ \mycomment{see Fig. \ref{fig:distr-set-membership}} \\
			\enskip $\so{\querier}{I^*_{\mathsf{c}}} \leftarrow \zkpsm(\thegpk, \ve{\gamma}, \ve{v'}, I, [n]\setminus J,$ \\
			\enskip \quad\quad$(\si{\mixer{k}}{\thegsk{k}, \permfunc{k}, \vs{v}{k}, \vs{r}{k}})_{k \in [m]}, \si{\querier}{})$ \mycomment{see Fig. \ref{fig:distr-set-membership}} \\			
			\enskip $\querier:$ \textbf{if $I^* \cup I^*_{\mathsf{c}} \neq I$:} \textbf{abort} \\
			\enskip \textbf{output} $\so{\querier}{\vc{c}{I^*}}$ \\
			\\
			
			$\btrout(\mpk:=(\cdot, \thegpk, \thezpk), \ve{c}:=(\cdot, \vc{\gamma}{i}, \cdot, \vc{\epsilon_r}{i}, \vc{\pokcomm}{i})_{i \in [n]},$ \\
			\enskip$\ve{v'} \in \Zq^n, I \subseteq [n], J \subseteq [n],$ \\
			$\enskip(\si{\mixer{k}}{\msk{k}:=(\cdot, \thegsk{k}, \thezsk{k}), \wit{k} := (\permfunc{k}, \vs{v}{k}, \vs{r}{k})})_{k \in [m]},$ \\
			$\underline{\enskip\si{\querier}{}):\quad\quad\quad\quad\quad\quad\quad\quad\quad\quad\quad\quad\quad\quad\quad\quad\quad\quad\quad\quad\quad\quad\quad\enskip}$ \\      
			\enskip $\so{\querier}{J^*} \leftarrow \zkprsm(\thegpk, \thezpk, \ve{\gamma}, \ve{\pokcomm}, \vc{\epsilon}{r}, \ve{v'}, I, J,$ \\
			\enskip $\quad\quad(\si{\mixer{k}}{\thegsk{k}, \thezsk{k}, \permfunc{k}, \vs{v}{k}, \vs{r}{k}})_{k \in [m]}, \si{\querier}{})$ \mycomment{see Fig. \ref{fig:distr-rev-set-membership}} \\ 
			\enskip $\so{\querier}{J^*_{\mathsf{c}}} \leftarrow \zkprsm(\thegpk, \thezpk, \ve{\gamma}, \ve{\pokcomm}, \vc{\epsilon}{r}, \ve{v'}, [n] \setminus I, J,$ \\
			\enskip $\quad\quad(\si{\mixer{k}}{\thegsk{k}, \thezsk{k}, \permfunc{k}, \vs{v}{k}, \vs{r}{k}})_{k \in [m]}, \si{\querier}{})$ \mycomment{see Fig. \ref{fig:distr-rev-set-membership}}\\
			\enskip $\querier:$ \textbf{if $J^* \cup J^*_{\mathsf{c}} \neq J$:} \textbf{abort} \\
			\enskip \textbf{output} $\so{\querier}{\vc{v}{J^*}'}$ \\
			\hline
		\end{tabular}
	}
	\caption{{\small Our construction of a traceable mixnet.}}
	\label{fig:construction}
\end{figure}

Figure \ref{fig:distr-set-membership} shows our $\zkpsm$ protocol. In stage 1, the querier $\querier$ publishes valid BB signatures $\vc{\sigma}{j}'$ on $\vc{v}{j}'$ for each $j \in J$ and invalid signatures (a fixed group element) for $j \not\in J$. These ``signatures'' are encrypted under $\theg$ and shuffled by the mix-servers in the reverse direction from $\mixer{m}$ to $\mixer{1}$, with each $\mixer{k}$ using permutation $(\permfunc{k})^{-1}$, to produce encrypted signatures $\vc{\epsilon}{\ve{\sigma}_i}$ on the value $\vc{v}{i}$ committed by $\vc{\gamma}{i}$ (the encrypted signature being valid only if $\vc{v}{i} \in \vc{v}{J}'$). The mix-servers then use the multiplicative homomorphism of $\theg$ to jointly obtain encryptions $\vc{\tilde{\epsilon}}{\ve{\sigma}_i} \leftarrow \vc{\epsilon}{\ve{\sigma}_i}^{\sum_{k \in [m]} \vcs{b}{i}{k}}$, where each $\mixer{k}$ contributes blinding factors $\vs{b}{k} \xleftarrow{\$} \Zq^n$. The plaintext blinded signatures $\vc{\tilde{\sigma}}{i}$ are finally obtained by threshold decryption of $\vc{\tilde{\epsilon}}{\ve{\sigma}_i}$ and published alongside $\vc{\gamma}{i}$. In stage 2, $\vc{\tilde{\sigma}}{i}$ for each $(\vc{\gamma}{i})_{i \in I}$ are looked up in $O(1)$ time. A DPK of a BB signature on the value committed by $\vc{\gamma}{i}$ is given by proving joint knowledge of the commitment openings and blinding factors to unblind $\vc{\tilde{\sigma}}{i}$ to a valid signature. This DPK has the format of Section \ref{sss:distr-pok} and can be given efficiently since each $\mixer{k}$ knows additive shares $\vcs{v}{i}{k}, \vcs{r}{i}{k}$ for the commitment openings and $\vcs{b}{i}{k}$ for the blinding factors. All indices for which the DPK passed are included in $\querier$'s output $I^*$. If $\querier$ is not corrupted, $(\mixer{k})_{k \in [m]}$ do not learn $I^*$ since they do not learn which DPK passed. The amortised complexity of the entire protocol is $O(n)$.

\begin{figure}[t]
\centering
\scalebox{0.8}{
    \begin{tabular}{@{}|l|@{}}
    \hline
    \multicolumn{1}{|l|}{\textbf{Participants:} Mix-servers $(\prover{k})_{k \in [m]}$, Querier $\querier$} \\
    \multicolumn{1}{|l|}{\textbf{Common~input: $(\thegpk$, $\ve{\gamma} = (\vc{\gamma}{i})_{i \in [n]}, \ve{v'}=(\vc{v}{j}')_{j \in [n]}, I, J)$ s.t.:}} \\
    \multicolumn{1}{|l|}{\quad \quad $\vc{\gamma}{i} \in \Gone, \vc{v}{j}' \in \Zq, I,J \subseteq [n]$} \\
    \multicolumn{1}{|l|}{\textbf{$\prover{k}$'s input: $(\thegsk{k}$, $\permfunc{k}, \vs{v}{k}, \vs{r}{k})$ s.t. letting}} \\
    \multicolumn{1}{|l|}{\textbf{\quad $\ve{v}:= \sum_{k \in [m]} \vs{v}{k}$, $\ve{r} := \sum_{k \in [m]} \vs{r}{k}$, $\pi:= \permfunc{m} \circ \dots \circ \permfunc{1}$:}} \\
    \multicolumn{1}{|l|}{\quad \quad 1) $\forall i \in [n]: \vc{\gamma}{i} = g_1^{\vc{v}{i} }h_1^{\vc{r}{i} }$} \\
    \multicolumn{1}{|l|}{\quad \quad 2) $\forall j \in [n]: \vc{v}{j}' :=  \vc{v}{\pi(j)}$} \\
    \multicolumn{1}{|l|}{\textbf{$\querier$'s output: $I^* := \{i \in I \mid \vc{v}{i} = \vc{v}{j}' \text{ for some } j \in J\}$ }} \\
    
	\hline

    \textbf{\underline{Stage 1:}} \\
    \underline{\emph{Signature generation}} \\
    \enskip $\querier$: \enskip $x \xleftarrow{\$} \Zq$; $y \leftarrow g_2^x$ \\
    \enskip \quad\quad $\ve{\sigma'} \leftarrow (\vc{\sigma}{j}')_{j \in [n]},$ where $\vc{\sigma}{j}' := \begin{cases}g_1^{\frac{1}{x+ \vc{v}{j}' }} & \text{if}~ j \in J \\ g_1  & \text{otherwise} \end{cases}$ \\
    \enskip \quad\quad $\ve{\epsilon}_{\sigma}' \leftarrow \ve{\thege}(\thegpk, \ve{\sigma'})$ \\
    \enskip \quad\quad publish $y, \ve{\sigma'}, \ve{\epsilon}_{\sigma}'$ \\ 
    
	\underline{\emph{Shuffling}} \\
	\enskip $\ve{\epsilon}_{\sigma} \leftarrow \shuffle(\theg, \thegpk, \ve{\epsilon}_{\sigma}', \si{\mixer{m}}{(\permfunc{m})^{-1}}, \dots, \si{\mixer{1}}{(\permfunc{1})^{-1}})$ \\
	\enskip \mycomment{$\vc{\epsilon}{\sigma_i} = \thegre(\thegpk, \vc{\sigma}{i})$; $\vc{\sigma}{i}:=g_1^{\frac{1}{x+ \vc{v}{i}}}$ if $\pi^{-1}(i) \in J$ else $g_1$} \\
	
	\underline{\emph{Homomorphic blinding}}	\\
 	\enskip $(\prover{k})_{k\in[m]}$: \\
    \enskip \quad $\share{\ve{b}}{k} \xleftarrow{\$} \Zq^n$\\
	\enskip \quad publish $\vs{\tilde{\epsilon}}{k}_{\sigma} \leftarrow \ve{\thegre}(\thegpk, {\ve{\epsilon}}^{\vs{b}{k}}_{\sigma})$ \quad \mycomment{$\vs{\tilde{\epsilon}}{k}_{\sigma_i} = \thege(\thegpk, \vc{\sigma}{i}^{\vcs{b}{i}{k}})$} \\
    \enskip $(\prover{k})_{k\in[m]}$: $\ve{\tilde{\epsilon}}_{\sigma} \leftarrow \prod_{k \in [m]} \vs{\tilde{\epsilon}}{k}_{\sigma}$ \quad \mycomment{$\ve{\tilde{\epsilon}}_{\sigma_i} = \thege(\thegpk, \vc{\sigma}{i}^{\sum_{k\in[m]}\vcs{b}{i}{k}})$}\\
    
    \underline{\emph{Threshold decryption}} \\
    \enskip $\ve{\tilde{\sigma}} \leftarrow \ve{\thegd}(\ve{\tilde{\epsilon}}_{\sigma}, (\si{\mixer{k}}{\thegsk{k}})_{k \in [m]})$ \\
    \enskip \mycomment{$\vc{\tilde{\sigma}}{i} = \vc{\sigma}{i}^{\vc{b}{i}} = g_1^{\frac{\vc{b}{i}}{x+ \vc{v}{i}}}$ if $\pi^{-1}(i) \in J$ else $g_1^{\vc{b}{i}}$} \\
    \enskip publish $\ve{\tilde{\sigma}}$ \\
     \\ 

    \textbf{\underline{Stage 2:}} \\
    \enskip $I^* \leftarrow \emptyset$ \\
    \enskip \textbf{for} $i \in I$: \\
    \enskip \enskip $\so{\querier}{res} \leftarrow \mathsf{DPK}((\ve{\gamma}_i, \vc{\tilde{\sigma}}{i}, y),p_{\mathsf{BB}_i},\si{\mixer{k}}{\vcs{v}{i}{k}, \vcs{r}{i}{k}, \vcs{b}{i}{k}}, \si{\querier}{})$ \\
    \enskip \enskip \enskip where $p_{\mathsf{BB}_i}:=(\ve{\gamma}_i = g_1^{\vc{v}{i}}h_1^{\vc{r}{i}} \wedge e(\vc{\tilde{\sigma}}{i}, y) = e(g_1, g_2)^{\vc{b}{i}} e(\vc{\tilde{\sigma}}{i}, g_2)^{-\vc{v}{i}})$ \\
    \enskip \enskip $\querier$: \enskip \textbf{if $res = 1$:} $I^* \leftarrow I^* \cup \{i\}$ \\
    \enskip \textbf{endfor} \\
    \enskip \textbf{output} $\so{\querier}{I^*}$ \\
    \hline
    \end{tabular}
}
\caption{{\small Protocol $\zkpsm$ (see summary in Figure \ref{fig:summary-zkpsm}).}}
\label{fig:distr-set-membership}
\end{figure}

\begin{figure}
\centering
\scalebox{0.8}{
    \begin{tabular}{@{}|l|@{}}
    \hline    
    \multicolumn{1}{|l|}{\textbf{Participants:} Mix-servers $(\prover{k})_{k \in [m]}$, Querier $\querier$} \\
    \multicolumn{1}{|l|}{\textbf{Common~input: $(\thegpk, \thezpk, \ve{\gamma} = (\vc{\gamma}{i})_{i \in [n]}, \ve{\pokcomm} = (\vc{\pokcomm}{i})_{i \in [n]},$}} \\
    \multicolumn{1}{|l|}{\textbf{\quad $\ve{\epsilon_r} = (\vc{\epsilon_r}{i})_{i \in [n]},\ve{v'}=(\vc{v}{j}')_{j \in [n]}, I, J)$ s.t.:}} \\
    \multicolumn{1}{|l|}{\quad \quad $\vc{\gamma}{i} \in \Gone, \vc{\pokcomm}{i} = \nizkpk{v,r}{\vc{\gamma}{i} = g_1^{v}h_1^r}, \vc{\epsilon}{r_i} \in \mathbb{C}(\thez)$} \\
    \multicolumn{1}{|l|}{\quad \quad $\vc{v}{j}' \in \Zq, I,J \subseteq [n]$} \\
    \multicolumn{1}{|l|}{\textbf{$\prover{k}$'s input:~~ $(\thegsk{k}, \thezsk{k}, \permfunc{k}, \vs{v}{k}, \vs{r}{k})$ s.t. letting}} \\
    \multicolumn{1}{|l|}{\textbf{\quad $\ve{v}:= \sum_{k \in [m]} \vs{v}{k}$, $\ve{r}:=\sum_{k \in [m]} \vs{r}{k}$, $\pi:= \permfunc{m} \circ \dots \circ \permfunc{1}$:}} \\
    \multicolumn{1}{|l|}{\quad \quad 1) $\forall i \in [n]: \vc{\gamma}{i} = g_1^{\vc{v}{i} }h_1^{\vc{r}{i}}\ \text{ and } \vc{\epsilon}{r_i} \leftarrow \theze(\thezpk, \vc{r}{i})$} \\
    \multicolumn{1}{|l|}{\quad \quad 2) $\forall j \in [n]: \vc{v}{j}' :=  \vc{v}{\pi(j)}$} \\
    \multicolumn{1}{|l|}{\textbf{$\querier$'s output: $J^* := \{j \in J \mid \vc{v}{j}' = \vc{v}{i} \text{ for some } i \in I\}$ }} \\
    \hline

    \textbf{\underline{Stage 1:}} \\    
    \underline{\emph{Signature generation}} \\
    \enskip $\querier$: \enskip \textbf{for each} $i \in I$: \textbf{abort} if $\nizkver(\vc{\gamma}{i}, \vc{\pokcomm}{i}) \neq 1$ \\ 
    \enskip \quad\quad  $x \xleftarrow{\$} \Zq$; $y \leftarrow f_2^x$ ; $\ve{c} \xleftarrow{\$} \Zq^n; \ve{\hat{r}} \xleftarrow{\$} \Zq^n $  \\ 
    \enskip \quad\quad $\ve{S} \leftarrow (\vc{S}{i})_{i\in[n]}$ where $\vc{S}{i}:=\begin{cases} (f_1 h_1^{\vc{\hat{r}}{i}} \vc{\gamma}{i})^{\frac{1}{x+\vc{c}{i}}} & \text{if}~ i \in I \\ f_1^0 & \text{otherwise}\end{cases}$ \\ 
    \enskip \quad\quad $(\ve{\epsilon_S}, \ve{\epsilon_c},\ve{\epsilon_{\hat{r}}}) \leftarrow (\ve{\thege}(\thegpk, \ve{S}), \ve{\theze}(\thezpk, \ve{c}),$ \\
    \enskip \quad\quad\quad $\ve{\theze}(\thezpk, \ve{\hat{r}}))$ \\
    \enskip \quad\quad publish $y, \ve{\hat{\sigma}}:=(\ve{S}, \ve{c}, \ve{\hat{r}}), \ve{\epsilon_{\hat{\sigma}}}:=(\ve{\epsilon_S}, \ve{\epsilon_c},\ve{\epsilon_{\hat{r}}})$ \\ 
    \enskip $(\mixer{k})_{k \in [m]}$: \quad $\ve{\epsilon_{\mathsf{r}}} \leftarrow  \ve{\epsilon_{\hat{r}}}\ve{\epsilon_r}$ \\
    \enskip \mycomment{$(\vc{\epsilon}{\vc{S}{i}}, \vc{\epsilon}{\vc{c}{i}},\vc{\epsilon}{\vc{{\mathsf{r}}}{i}})$ encrypt $((f_1g_1^{\vc{v}{i}}h_1^{\vc{\hat{r}}{i}+\vc{r}{i}})^{\frac{1}{x+\vc{c}{i}}}, \vc{c}{i}, \vc{\hat{r}}{i}+\vc{r}{i})$ if $i \in I$} \\
    \enskip \mycomment{else $(f_1^0,\vc{c}{i}, \vc{\hat{r}}{i}+\vc{r}{i})$} \\
    
    \underline{\emph{Shuffling}} \\   
    \enskip $(\ve{\epsilon_S}', \ve{\epsilon_c}', \ve{\epsilon_{\mathsf{r}}}') \leftarrow \shuffle((\theg,\thez,\thez), (\thegpk, \thezpk, \thezpk), (\ve{\epsilon_S}, \ve{\epsilon_c}, \ve{\epsilon_{\mathsf{r}}}),$ \\    
    \enskip \quad \quad $\si{\mixer{1}}{\permfunc{1}}, \dots, \si{\mixer{m}}{\permfunc{m}})$ \mycomment{shuffle all three components}\\
	
	\underline{\emph{Homomorphic blinding}} \\	
    \enskip $(\prover{k})_{k \in [m]}$:  \\
    \enskip \quad $(\vcs{b}{\ve{S}}{k}, \vcs{b}{\ve{c}}{k}, \vcs{b}{\ve{r}}{k}) \xleftarrow{\$} (\mathbb{Z}^{n}_{q} \times \mathbb{Z}^{n}_{q} \times \mathbb{Z}^{n}_{q})$; $\vcs{\chi}{\ve{c}}{k}, \vcs{\chi}{\ve{r}}{k} \xleftarrow{\$} \mathbb{Z}_{q-1}^{n}$ \\
    \enskip \quad publish $(\vs{\epsilon}{k}_{\ve{bS}}, \vs{\epsilon}{k}_{\ve{bc}}, \vs{\epsilon}{k}_{\ve{br}}) \leftarrow (\ve\thege(\thegpk, g_1^{\vs{b}{k}_{\ve{S}}}),$ \\
    \enskip \quad \quad $\ve\theze(\thezpk, \vcs{b}{\ve{c}}{k} + q\vcs{\chi}{\ve{c}}{k}), 
        \ve\theze(\thezpk, \vcs{b}{\ve{r}}{k} + q\vcs{\chi}{\ve{r}}{k}))$ \\ 
    \enskip $(\prover{k})_{k \in [m]}$: $(\ve{\tilde{\epsilon}_S}', \ve{\tilde{\epsilon}_c}', \ve{\tilde{\epsilon}_{\mathsf{r}}}') \leftarrow (\ve{\epsilon_S}' \prod\limits_{k \in [m]} \vs{\epsilon}{k}_{\ve{bS}},  \ve{\epsilon_c}' \prod\limits_{k \in [m]} \vs{\epsilon}{k}_{\ve{bc}},\ve{\epsilon_{\mathsf{r}}}' \prod\limits_{k \in [m]} \vs{\epsilon}{k}_{\ve{br}})$\\
    
    \underline{\emph{Threshold decryption}} \\     
    \enskip $\ve{\tilde{S}}' \leftarrow \ve\thegd(\ve{\tilde{\epsilon}_S}', (\si{\mixer{k}}{\thegsk{k}})_{k \in [m]})$ \\
    \enskip $\ve{\tilde{c}}'' \leftarrow \ve\thezd(\ve{\tilde{\epsilon}_c}', (\si{\mixer{k}}{\thezsk{k}})_{k \in [m]}); \quad \ve{\tilde{c}}' \leftarrow \ve{\tilde{c}}'' \ve{\bmod{\ q}}$ \\
    \enskip $\ve{\tilde{r}}'' \leftarrow  \ve\thezd(\ve{\tilde{\epsilon}_{\mathsf{r}}}', (\si{\mixer{k}}{\thezsk{k}})_{k \in [m]}); \quad \ve{\tilde{r}}' \leftarrow \ve{\tilde{r}}''  \ve{\bmod{\ q}}$\\
    \enskip publish $\ve{\tilde{\sigma}}':=(\ve{\tilde{S}}', \ve{\tilde{c}}', \ve{\tilde{r}}')$ \\
    \enskip \mycomment{$\vc{\tilde{S}}{j}' = (f_1g_1^{\vc{v}{\pi(j)}}h_1^{\vc{\hat{r}}{\pi(j)}+\vc{r}{\pi(j)}})^{\frac{1}{x+\vc{c}{\pi(j)}}}g_1^{\vc{b_S}{j}}$ if $\pi(j)\in I$ else $g_1^{\vc{b_S}{j}}$} \\
    \enskip \mycomment{$\quad\enskip = (f_1g_1^{\vc{v}{j}'}h_1^{\vc{\hat{r}}{\pi(j)}+\vc{r}{\pi(j)}})^{\frac{1}{x+\vc{c}{\pi(j)}}}g_1^{\vc{b_S}{j}}$ if $\pi(j)\in I$ else $g_1^{\vc{b_S}{j}}$} \\
    \enskip \mycomment{$\vc{\tilde{c}}{j}' = \vc{c}{\pi(j)} + \vc{b_c}{j}; \enskip \vc{\tilde{r}}{j}' = \vc{\hat{r}}{\pi(j)}+\vc{r}{\pi(j)} + \vc{b_r}{j}$} \\
    \enskip \mycomment{where $\vc{b}{\vc{S}{j}} := \sum\limits_{k \in [m]}\vcs{b}{\vc{S}{j}}{k}; \vc{b}{\vc{c}{j}} := \sum\limits_{k \in [m]}\vcs{b}{\vc{c}{j}}{k}; \vc{b}{\vc{r}{j}} := \sum\limits_{k \in [m]}\vcs{b}{\vc{r}{j}}{k}$} \\
    \\

    \textbf{\underline{Stage 2:}} \\
    \enskip $\querier$: \quad $J^* \leftarrow \emptyset$ \\ 
    \enskip $(\mixer{k})_{k \in [m]},\querier$: \quad $\mathfrak{h}_1 \leftarrow e(h_1, f_2)^{-1}$; $\mathfrak{h}_2 \leftarrow e(g_1, f_2)^{-1}$; $\mathfrak{h}_3 \leftarrow f_T$ \\
    \enskip \textbf{for} $j \in J$: \\
    \enskip \enskip $(\prover{k})_{k \in [m]}$: \\ 
    \enskip \enskip \quad $\share{\delta}{k}_{0} \xleftarrow{\$} \Zq$; $(\share{\delta}{k}_{1},\share{\delta}{k}_{2}) \leftarrow (\mathsf{Mult}(\vcs{b}{\vc{S}{j}}{k}, \vcs{b}{\vc{c}{j}}{k}), \mathsf{Mult}(\share{\delta}{k}_{0}, \vcs{b}{\vc{c}{j}}{k}))$ \\ 
    \enskip \enskip \quad publish $\share{\mathfrak{z}}{k}_1 \leftarrow \mathfrak{h}_2^{\vcs{b_S}{j}{k}}\mathfrak{h}_3^{\share{\delta_{0}}{k}}$ \\
    \enskip \enskip $(\mixer{k})_{k \in [m]},\querier$:  $\mathfrak{z}_1 \leftarrow \prod_{k \in [m]} \share{\mathfrak{z}}{k}_1$;\quad $\mathfrak{z}_2 \leftarrow e(\vc{\tilde{S}}{j}', yf_2^{\vc{\tilde{c}}{j}'}) / e(f_1 g_1^{\vc{v}{j}'} h_1^{\vc{\tilde{r}}{j}'}, f_2)$ \\
    \enskip \quad\quad\quad\quad\quad\quad\quad $\mathfrak{g}_1 \leftarrow e(\vc{\tilde{S}}{j}', f_2)$; \quad $\mathfrak{g}_2 \leftarrow e(g_1, yf_2^{\vc{\tilde{c}}{j}'})$   \\ 
    \enskip \enskip $\so{\querier}{res} \leftarrow \mathsf{DPK}((\mathfrak{g}_1,\mathfrak{g}_2,\mathfrak{h}_1,\mathfrak{h}_2,\mathfrak{h}_3, \mathfrak{z}_1,\mathfrak{z}_2),p_{\mathsf{BBS+}_j},$ \\
    \enskip \enskip \enskip $ (\si{\mixer{k}}{\vcs{b}{\vc{S}{j}}{k}, \vcs{b}{\vc{c}{j}}{k}, \vcs{b}{\vc{r}{j}}{k}, \share{\delta}{k}_0, \share{\delta}{k}_1, \share{\delta}{k}_2})_{k \in [m]}, \si{\querier}{})$ \\
    \enskip \enskip \quad where $p_{\mathsf{BBS+}_j}:=(\mathfrak{z}_1 = \mathfrak{h}_2^{\vc{b}{\vc{S}{j}}}\mathfrak{h}_3^{\delta_{0}} \wedge 1_{\GT} = \mathfrak{z}_1^{\vc{-b}{\vc{c}{j}}} \mathfrak{h}_2^{\delta_{1}} \mathfrak{h}_3^{\delta_{2}} \wedge$ \\
	\enskip \enskip \quad \enskip $\mathfrak{z}_2 = \mathfrak{g}_1^{\vc{b}{\vc{c}{j}}} \mathfrak{g}_2 ^{\vc{b}{\vc{S}{j}}} \mathfrak{h}_1 ^ {\vc{b}{\vc{r}{j}}} \mathfrak{h}_2 ^ {\delta_{1}})$ \\	    
    \enskip \enskip $\querier:$ \enskip \textbf{if $res = 1$:} $J^* \leftarrow J^* \cup \{j\}$ \\
    \enskip \textbf{endfor} \\
    \enskip \textbf{output} $\so{\querier}{J^*}$ \\
    \hline
    \end{tabular}
}
\caption{{\small Protocol $\zkprsm$ (see summary in Figure \ref{fig:summary-zkprsm}).}}
\label{fig:distr-rev-set-membership}
\end{figure}

Figure \ref{fig:distr-rev-set-membership} shows our $\zkprsm$ protocol. Note that to obtain BBS+ signatures on committed values, knowledge of commitment openings must be first shown. For this, $\querier$ checks the NIZKs $\vc{\pokcomm}{i}$ uploaded by the senders for each $i \in I$. $\querier$ then sends valid BBS+ quasi-signatures $\vc{\hat{\sigma}}{i}$ on $\vc{\gamma}{i}$ for $(\vc{\gamma}{i})_{i \in I}$ and invalid ones for $(\vc{\gamma}{i})_{i \not\in I}$. It encrypts each component $(\vc{S}{i},\vc{c}{i},\vc{\hat{r}}{i})$ of $\vc{\hat{\sigma}}{i}$ independently, using $\theg$ for $\vc{S}{i}$ and $\thez$ for $\vc{c}{i}$ and $\vc{\hat{r}}{i}$, to create encrypted quasi-signatures $\vc{\epsilon}{\vc{\hat{\sigma}}{i}}:=(\vc{\epsilon}{\vc{S}{i}}, \vc{\epsilon}{\vc{c}{i}},\vc{\epsilon}{\vc{\hat{r}}{i}})$. Using $\vc{\epsilon}{\vc{\hat{r}}{i}}$ and the sender-uploaded encryptions $\vc{\epsilon}{\vc{r}{i}}$, encrypted BBS+ signatures on the committed values are derived by homomorphically adding the commitment randomness $\vc{r}{i}$ to the signature's $\vc{\hat{r}}{i}$ component. Thus, encrypted (valid and invalid) BBS+ signatures $\vc{\epsilon}{\vc{\sigma}{i}}:=(\vc{\epsilon}{\vc{S}{i}}, \vc{\epsilon}{\vc{c}{i}}, \vc{\epsilon}{\vc{\mathsf{r}}{i}}:=\vc{\epsilon}{\vc{\hat{r}}{i}}\vc{\epsilon}{\vc{r}{i}})_{i \in [n]}$ are obtained next to each commitment $\vc{\gamma}{i}$ in the input list.

To obtain blinded BBS+ signatures on plaintext values in the permuted list $\ve{v}'$, each component $(\ve{\epsilon_S}, \ve{\epsilon_c}, \ve{\epsilon_{\mathsf{r}}})$ of the BBS+ signature is re-encrypted individually using encryption schemes $(\theg, \thez, \thez)$ respectively and shuffled in the forward direction. The permuted and re-encrypted signatures $(\ve{\epsilon_S}', \ve{\epsilon_c}', \ve{\epsilon_{\mathsf{r}}}')$ are individually blinded to obtain $(\ve{\tilde{\epsilon_S}}', \ve{\tilde{\epsilon_c}}', \ve{\tilde{\epsilon_{\mathsf{r}}}}')$, which are individually threshold-decrypted to obtain blinded BBS+ signatures $\vc{\tilde{\sigma}}{j}':=(\vc{\tilde{S}}{j}',\vc{\tilde{c}}{j}',\vc{\tilde{r}}{j}')$ alongside $\vc{v}{j}'$.

Note that $\thez$ used for encrypting $\ve{\epsilon_c}', \ve{\epsilon_{\mathsf{r}}}'$ is homomorphic in group $\mathbb{Z}_N$, which induces addition modulo $N$ in the plaintext space, not modulo $q$. Thus, blinding by blinding factors drawn from $\Zq$ would not be perfectly hiding. To circumvent this issue, we follow an approach similar to \citep{gennaro-threshold,gennaro-threshold2}. First, we ensure that $N$ is much larger than $q$ so that all additions remain integer additions and do not wrap around $N$ (this is anyway the case since $N$ is usually a 2048-bit modulus, $q$ is of $254$ bits and we perform a small number of homomorphic additions per ciphertext). Second, we pad blinding factors $\vc{b_c}{j},\vc{b_r}{j} \in \Zq$ for $\vc{\epsilon}{\vc{c}{j}}', \vc{\epsilon}{\vc{\mathsf{r}}{j}}'$ by much larger offsets $q\vc{\chi}{\vc{c}{j}}, q\vc{\chi}{\vc{r}{j}} \in \mathbb{Z}_{q(q-1)}$ so that the padded blinding factors $\vc{b_c}{j} + q\vc{\chi}{\vc{c}{j}}, \vc{b_r}{j} + q\vc{\chi}{\vc{r}{j}}$ are identically distributed to uniform samples from $\mathbb{Z}_{q^2}$ and provide an almost perfect integer blinding for the messages $\vc{c}{\pi(j)} \in \Zq$ and $\vc{\hat{r}}{\pi(j)}+\vc{r}{\pi(j)} \in \mathbb{Z}_{2q}$. The offsets are removed by reducing decryptions $\vc{\tilde{c}}{j}'',\vc{\tilde{r}}{j}''$ of $\vc{\epsilon}{\vc{c}{j}}', \vc{\epsilon}{\vc{\mathsf{r}}{j}}'$ modulo $q$.

In stage 2, for each $j \in J$, a DPK is given that proves that the mix-servers know shares of $\vc{b}{\vc{S}{j}}, \vc{b}{\vc{c}{j}}, \vc{b}{\vc{r}{j}}$ such that $(\vc{\tilde{S}}{j}'g_1^{-\vc{b}{\vc{S}{j}}},$ $\vc{\tilde{c}}{j}' - \vc{b}{\vc{c}{j}} \bmod q,$ $\vc{\tilde{r}}{j}' - \vc{b}{\vc{r}{j}} \bmod q)$ is a valid BBS+ signature on message $\vc{v}{j}'$ under $\querier$'s verification key. The actual DPK used in Figure \ref{fig:distr-rev-set-membership} is designed to follow the format of Section \ref{sss:distr-pok}. Since each mix-server knows shares of $\vc{b_S}{j}, \vc{b_c}{j}, \vc{b_r}{j}, \delta_{0}$ and gets shares of $\delta_{1}$ and $\delta_{2}$ via Beaver's $\mathsf{Mult}$ algorithm, these DPKs can be given efficiently.

\subsubsection{HBC to malicious security}
\label{sec:hbc-to-malicious}

Now we highlight steps to maintain secrecy even when the corrupted parties maliciously deviate from the protocol (see Appendix \ref{honest-but-curious-to-full-zk} for detailed steps). Any party that publishes an encryption - senders during $\enc$, the querier, or the mix-servers during $\mix$ or $\btrin$/$\btrout$ protocols - must provide proofs of knowledge of underlying plaintexts. This is to avoid attacks where re-encryptions of honest senders' ciphertexts are published to get them decrypted. Moreover, before participating in threshold decryption protocols, honest mix-servers must verify that all encryptions are correctly created and consistently shuffled using the same permutation in $\zkprsm$ as in $\mix$ and the inverse permutation in $\zkpsm$ (say, using techniques of \citep{commitment-consistent-proof-shuffle,terelius-restricted-shuffles}). Mix-servers should also verify that the querier's signatures are valid for all elements in the requested set and invalid for the complement set. The DPKs can be converted using Fiat-Shamir heuristic \citep{fiat-shamir}.

\section{Security analysis}
\label{analysis}

\begin{theorem}[Completeness]
	\label{thm:completeness}
    Let $\Pi_{\mathsf{TM}}$ be the protocol of Figure \ref{fig:construction}. $\Pi_{\mathsf{TM}}$ is complete (Definition \ref{def:compl}). (Proof in Appendix \ref{pf:completeness}).
\end{theorem}

\begin{theorem}[Soundness]
	\label{thm:soundness}
	Under the DL assumption in $\Gone$ and $n$-SDH assumption in $(\Gone, \Gtwo)$ \citep{boneh-boyen}, $\Pi_{\mathsf{TM}}$ is sound (Definition \ref{def:snd}).
\end{theorem}
\emph{Proof sketch (full proof in Appendix \ref{pf:soundness}):} If an adversary won $\snd$ then $\querier$ must have output either 1) a wrong $\vc{c}{I^*}$ in a $\btrin(I,J)$ query; or 2) a wrong $\vc{v}{J^*}'$ in a $\btrout(I,J)$ query. Case 1 implies that either $a)$ $\vc{c}{I^*}$ included a $\vc{c}{i} \in \vc{c}{I}$ not encrypting a plaintext in $\vc{v}{J}'$ or $b)$ it excluded a $\vc{c}{i} \in \vc{c}{I}$ encrypting a plaintext in $\vc{v}{J}'$. Case 1a directly reduces to breaking the soundness of DB-SM for $\vc{\gamma}{i}$ against set $\vc{v}{J}'$. Further, since each $\vc{c}{i} \in \vc{c}{I} \setminus \vc{c}{I^*}$ must be in $\vc{c}{I_{\mathsf{c}}^*}$ for $\querier$ to not abort, case 1b also reduces to breaking the soundness of DB-SM (because if $\vc{c}{i} \not\in \vc{c}{I^*}$ encrypts a plaintext in $\vc{v}{J}'$, it does not encrypt a plaintext in $\vc{v}{[n] \setminus J}'$ due to distinctness of $\vc{v}{j}'$s, but $\vc{c}{i} \in \vc{c}{I_{\mathsf{c}}^*}$). Case 2 similarly reduces to breaking the soundness of DB-RSM.
	
\emph{Soundness of $\zkpsm$ for $I,J$:} The DPK proves that $(\mixer{k})_{k \in [m]}$ know blinding factors to unblind $\vc{\tilde{\sigma}}{i}$ to a valid BB signature on the value $\vc{v}{i}$ committed by $\vc{\gamma}{i}$ under $\querier$'s fresh public key. Since $\querier$ issued valid signatures only for plaintexts in $\vc{v}{J}'$, passing the DPK if $\vc{v}{i} \not\in \vc{v}{J}'$ requires $(\mixer{k})_{k \in [m]}$ to forge a BB signature under $\querier$'s public key, which is hard under the stated assumptions.

\emph{Soundness of $\zkprsm$ for $I,J$:} The DPK proves that $(\mixer{k})_{k \in [m]}$ know blinding factors to unblind $\vc{\tilde{\sigma}}{j}'$ to a valid BBS+ signature on $\vc{v}{j}'$ under $\querier$'s fresh public key. Since $\querier$ issued valid signatures only for commitments in $\vc{\gamma}{I}$, deriving a signature on a plaintext not committed in any commitment in $\vc{\gamma}{I}$ reduces to breaking the soundness of the BBS+ scheme for obtaining signatures on committed values, which is hard under the stated assumptions.

\begin{theorem}[Secrecy]
	\label{thm:secrecy}
	Under the IND-CPA security of $\pke$, the DDH assumption in $\Gone$ and the DCR assumption \citep{paillier}, $\Pi_{\mathsf{TM}}$ protects secrecy (Definition \ref{def:secr}) against HBC adversaries in the random oracle model.
\end{theorem}
\emph{Proof sketch (full proof in Appendix \ref{pf:secrecy}):} We need to show that for any pair of values $v_{i_0},v_{i_1}$, no PPT adversary controlling $(\mixer{k})_{k \not\eq k^*}$, $(S_{i})_{i \not\in \{i_0,i_1\}}$ and $\querier$ can distinguish between world $0$ where $S_{i_0}$ sends $v_{i_0}$ and $S_{i_1}$ sends $v_{i_1}$ and world $1$ where this order is reversed, if the assert conditions in all $\otrin$ and $\otrout$ calls of $\secr$ are respected (and all adversarial parties are honest-but-curious). We do this by simplifying world $b$ for each $b \in \{0,1\}$ by the following sequence of indistinguishability arguments:
\begin{itemize}[leftmargin=*]
	\item[-] For each $i \in \{i_0,i_1\}$, NIZK $\vc{\pokcomm}{i}$ can be simulated, shares $\vcs{v}{i}{k},\vcs{r}{i}{k}$ given to $(\mixer{k})_{k \neq k^*}$ can be replaced by random elements drawn from $\Zq$, and encryptions $\vcs{\mathsf{ev}}{i}{k^*}$, $\vcs{\mathsf{er}}{i}{k^*}$ can be replaced by encryptions of $0$ by the IND-CPA security of $\pke$.
	\item[-] Threshold decryption protocols for $\theg$ and $\thez$ do not reveal any information beyond the decryption output, which can be obtained by permuting the list of input values, where $S_{i_0}$ and $S_{i_1}$'s input values are $v_{i_b}$ and $v_{i_{1-b}}$ respectively. With the decryption oracles now eliminated, all $\theg$ and $\thez$ encryptions/re-encryptions can be replaced by encryptions of dummy values, by the IND-CPA security of $\theg$ and $\thez$ under the DDH and DCR assumptions respectively.
	\item[-] The only information leaked during a DPK for $S_{i_0}/S_{i_1}$'s commitment/plaintext is whether the DPK passed or not, but the assert conditions ensure that this information is the same in both the worlds. Thus, these DPKs can be simulated by a ZK simulator that does not know which specific world it is in.
	\item[-] For $i \in \{i_0,i_1\}$, $\vcs{\mathsf{ev}}{i}{k^*}$, $\vcs{\mathsf{er}}{i}{k^*}$ can be replaced by encryptions of $0$ since the encrypted values $\vcs{v}{i}{k^*},\vcs{r}{i}{k^*}$ are not used anywhere anymore (they were used in the DPKs/encryptions earlier). Similarly, $\vc{\gamma}{i_0},\vc{\gamma}{i_1}$ can now be replaced with commitments of $0$.
	\item[-] The blinded signatures corresponding to $S_{i_0}$ or $S_{i_1}$'s commitments or plaintexts during $\zkpsm$ and $\zkprsm$ can now be replaced with random group elements because they are not used anywhere anymore and the blinding factors chosen by $\mixer{k^*}$ are random (note: for the $c,r$ components of the BBS+ signature, this holds because of flooding with large blinding factors). 
\end{itemize} 
The two worlds obtained now are indistinguishable because now only the mixnet output list $\ve{v}'$ depends on $v_{i_0}$ and $v_{i_1}$ and this list is identically distributed in the two worlds because of the uniformly chosen permutation $\permfunc{k^*}$ by $\mixer{k^*}$.

\begin{theorem}[Output secrecy]
	\label{thm:output-secrecy}
	Under the same assumptions as Theorem \ref{thm:secrecy}, $\Pi_{\mathsf{TM}}$ protects output secrecy (Definition \ref{def:secr-mixers}). (Proof in Appendix \ref{pf:output-secrecy}).
\end{theorem}

In Appendix \ref{pf:secrecy-general}, we also sketch a proof that after applying the additional steps mentioned in Section \ref{sec:hbc-to-malicious} and Appendix \ref{honest-but-curious-to-full-zk}, our construction protects secrecy against general malicious adversaries. 

\subsection{Privacy risk analysis of query outputs} 

Theorem \ref{thm:secrecy} guarantees that a traceable mixnet does not reveal any information beyond the output of the $\btrin/\btrout$ queries. To evaluate the privacy risk impact of these outputs themselves, we provide a mechanism in Appendix \ref{privacy-risk-analysis} to statically analyse information leaked by a given set of queries. Specifically, given a set $Q$ of proposed TraceIn/TraceOut queries in an application, the mechanism outputs information potentially leaked by them: $a)$ for each $i \in [n]$, the smallest set $J_{\mathsf{min}_i}$ representing potential plaintexts that $\vc{c}{i}$ might encrypt and $b)$ for each $j \in [n]$, the smallest set $I_{\mathsf{min}_j}$ representing potential ciphertexts that $\vc{v}{j}'$ might decrypt from ($J_{\mathsf{min}_i}=[n]$ denotes that queries in $Q$ reveal no additional information about $\vc{c}{i}$; likewise for $I_{\mathsf{min}_j}$). The $J_{\mathsf{min}}/I_{\mathsf{min}}$ information can then directly be used to analyse application-level security.

\section{Implementation and benchmarks}
\label{practicalities}

We implemented a proof-of-concept for our traceable mixnet construction, with the primary goal of evaluating its runtime query performance. Given this focus, we mainly implemented the $\zkpsm$ and $\zkprsm$ protocols. This allows us to directly estimate the total time for $\btrin$ or $\btrout$ queries as that of two $\zkpsm$ or $\zkprsm$ invocations. However, an optimisation where the $\zkpsm$ calls for $J$ and $[n]\setminus J$ (similarly for $\zkprsm$) are combined to a single call as follows considerably improves this estimate: $\querier$ generates signature key pairs $x,y$ for $J$ and $x_{\mathsf{c}}, y_{\mathsf{c}}$ for $[n]\setminus J$ and sends signatures using key $x_{\mathsf{c}}$ for $j \not\in J$ instead of invalid signatures, thus avoiding extraneous shuffling/decryption of invalid signatures.

We used standard threshold ElGamal encryption \citep{threshold-cryptography} for $\theg$ and Damg{\aa}rd et al.'s \cite{dj-generalisation} optimised threshold Paillier encryption for $\thez$.\footnote{We slightly simplified \cite{dj-generalisation} by skipping the factorial trick used for general $t$-out-of-$m$ threshold cryptography in RSA groups, focusing on the $m$-out-of-$m$ case.} For key generation, we implemented a simplified protocol where a trusted dealer distributes key shares to the mix-servers. Secure distributed key generation is trivial for $\theg$ but requires a special protocol \citep{damgaard-secure-keygen} for $\thez$. However, our simplification is justified as key generation is a one-time setup step that can be executed ahead of time and does not affect runtime query performance. 

We also implemented all the steps mentioned in Section \ref{sec:hbc-to-malicious} and Appendix \ref{honest-but-curious-to-full-zk} to evaluate performance in the realistic malicious setting. We used the following techniques to optimise these steps: 1) standard $\Sigma$-protocol techniques to let senders efficiently prove knowledge of their uploaded ciphertexts and mix-servers prove knowledge of blinding factors during homomorphic blinding; 2) batch-verification techniques of \citep{bellare-batch-verification,fast-batch-sigverif} to let each mix-server efficiently verify the querier's signatures/quasi-signatures and the correctness of other mix-servers' decryption shares during threshold decryption; and 3) \emph{permutation commitment}-based techniques of \citep{commitment-consistent-proof-shuffle,terelius-restricted-shuffles} to let each mix-server efficiently prove that they created their shuffles consistently during $\mix$, $\zkpsm$ and $\zkprsm$.

We implemented all sender proofs of knowledge required to achieve our soundness and secrecy requirements. However, we did not implement proofs of well-formedness of input ciphertexts (Section \ref{formalism}): proofs that $\vc{\gamma}{i}$ commit the same value as encrypted by $\vc{\epsilon}{i}$ and that $\vc{\epsilon}{i}$, $\vc{\epsilon}{r_i}$ encrypt values in the range $[0,q]$.\footnote{Senders do not need to prove distinctness of their encrypted messages because duplicate values can be detected at the mixnet output and the offending senders could be identified (by, say, a $\btrin$ query) on demand.} This is primarily due to our focus on runtime query performance and because these costs are incurred by individual senders and mix-servers as and when senders send their data. Our proofs of knowledge incur $\sim$0.15 seconds cost per-sender for both the sender and the mix-servers. With standard $\Sigma$-protocol techniques and FO commitments \citep{focomm} for proving equality of committed and encrypted values and \citep{boudot-rangeproof,lindell-rangeproof-impl} for range proofs, the total time is expected to remain <1 s per sender.

Finally, we did not implement the authenticated broadcast channel. Since our uploaded datasets are moderate in size and the round complexity is small, we do not expect this to be a bottleneck.

Our implementation \citep{tm-impl} is based on the Charm library \citep{charm-crypto} with the PBC backend \citep{pbc} for pairing operations. We chose the BN254 curve \citep{barreto-naehrig,johnson2021bn254} to instantiate pairing groups $(\Gone,$ $\Gtwo,$ $\GT)$, which gives a group order $q$ of 254 bits.\footnote{With NFS attacks \citep{bn254-attacks}, the security of BN254 curves has dropped to ~100-110 bits \citep{bn254-secdrop,bls12-381}. We are limited to BN254 because of our chosen Charm library, but we estimate that the overall performance hit in per-mix-server time on switching to a more secure curve BLS12-381 \citep{bls12-381} is <1.3x, given that BLS12-381 operations are <2x slower than BN254 \citep{comparison-bls12-381-bn254} and that curve operations predominantly only affect our stage 2 DPKs.} We attempted to minimise the number of pairing operations wherever possible and used pre-computation of powers of fixed bases to speed-up exponentiation operations.

\begin{figure}
	\centering
	\scalebox{0.8}{
			\begin{tabular}{l}
				\begin{tabular}{|@{ }l@{ }|c@{  }|c@{  }|c@{  }|c@{  }|c@{  }|c@{  }|}
					\hline
					\multicolumn{7}{|c|}{\textbf{$\mixer{k}$ and $\querier$ times (sec) in $\zkpsm$ and $\zkprsm$ for different $n$ and $m$}} \\
					\hline
												& \multicolumn{3}{c|}{$m=2$}    & \multicolumn{3}{c|}{$m=4$}    \\
					\hline							  
												& $n=10^3$ & $n=10^4$ & $n=10^5$ & $n=10^3$ & $n=10^4$ & $n=10^5$ \\
					\hline
					DB-SM - $\mixer{k}$         & 35       & 340      & 3400     & 49       & 490      & 4916     \\
					DB-SM (HBC) - $\mixer{k}$   & 23       & 225      & 2268     & 23       & 227      & 2274     \\     
					DB-SM - $\querier$          & 20       & 190      & 2000     & 20       & 200      & 2000     \\
					collab-zkSNARK-SM           & 4120     & 42960    & 446800   & 4120     & 42960    & 446800   \\
					
					\hline 
					DB-RSM - $\mixer{k}$        & 168      & 1600     & 20000    & 240      & 2400     & 32000    \\
					DB-RSM (HBC) - $\mixer{k}$  & 122      & 1188     & 11981    & 129      & 1291     & 12949    \\     
					DB-RSM - $\querier$         & 62       & 610      & 6200     & 62       & 620      & 6200     \\
					collab-zkSNARK-RSM          & 4160     & 43100    & 448920   & 4160     & 43100    & 448920   \\
				\hline
				\end{tabular} \\

				\\

				\begin{tabular}{ |@{ }m{8.8cm} l@{ }|  } 
					\hline
					\multicolumn{2}{|c|}{\textbf{Detailed breakdown for $n=10^4$ and $m=4$}} \\
					\hline
					- Size of $n$ input ciphertexts & 200 MB \\
					\hline
					- ($\mixer{k}$) Mixing (Fig. \ref{fig:construction}): & 343 s \\
					\hline
					$\zkpsm$ (Fig. \ref{fig:distr-set-membership}): &  \\
					- ($\querier$) Generating $n$ BB signatures/encryptions & 8.3 s \\
					- ($\mixer{k}$) Verifying $n$ BB signatures/encryptions & 15 s \\
					- ($\mixer{k}$) Re-encryption of encrypted signatures & 7.3 s \\
					- ($\mixer{k}$) Proof-of-shuffle of encrypted signatures & 139 s \\
					- ($\mixer{k}$) Homomorphic blinding of encrypted signatures & 130 s \\
					- ($\mixer{k}$) Threshold decryption of encrypted signatures & 22 s \\
					- ($\mixer{k}$) Generating $n$ DPK proofs for $p_{\mathsf{BB}}$ & 170 s \\
					- ($\querier$) Verifying $n$ DPK proofs for $p_{\mathsf{BB}}$ & 190 s \\ 
					- Size of $n$ BB signatures  & 0.3 MB \\
					- Size of $n$ DPK proofs for $p_{\mathsf{BB}}$ & 3.8 MB \\
					\hline
					$\zkprsm$ (Fig. \ref{fig:distr-rev-set-membership}): & \\
					- ($\querier$) Verifying $n$ PoKs of commitments & 7.9 s \\
					- ($\querier$) Generating $n$ BBS+ quasi-signatures & 23 s \\
					- ($\mixer{k}$) Verifying $n$ BBS+ quasi-signatures/encryptions & 27 s \\
					- ($\mixer{k}$) Re-encryption of encrypted signatures & 16 s \\
					- ($\mixer{k}$) Proof-of-shuffle of encrypted signatures & 349 s \\
					- ($\mixer{k}$) Homomorphic blinding of encrypted signatures & 880 s \\
					- ($\mixer{k}$) Threshold decryption of encrypted signatures & 441 s \\
					- ($\mixer{k}$) Generating $n$ DPK proofs for $p_{\mathsf{BBS}+}$ & 673 s \\
					- ($\querier$) Verifying $n$ DPK proofs for $p_{\mathsf{BBS}+}$ & 590 s \\
					- Size of $n$ BBS+ signatures & 0.9 MB \\
					- Size of $n$ DPK proofs for $p_{\mathsf{BBS}+}$ & 11 MB \\
					\hline
				\end{tabular}
			\end{tabular}
	}
	\caption{{\small Performance of $\zkpsm$ and $\zkprsm$ ($n$: number of input ciphertexts, $m$: number of mix-servers). $\mixer{k}$ and $\querier$ denote per-mix-server and querier times respectively; collab-zkSNARK denotes estimated per-prover times in collaborative zkSNARKs \citep{collaborative-zksnarks}.}}
	\label{fig:benchmarks}
\end{figure}

We ran all our benchmarks on an Intel(R) Xeon(R) W-1270 CPU @ 3.40GHz with 64 GB RAM on a single core. Figure \ref{fig:benchmarks} shows the performance of our $\zkpsm/\zkprsm$ implementation in the worst case when $I=J=[n]$, with the overall mix-server and querier times for different $n$ and $m$ at the top and the detailed breakdown for $n=10000$ ciphertexts and $m=4$ mix-servers at the bottom. All reported values are averages over 3 runs. The deviation from the average in any run is <1.5\%. We report per-mix-server times, which accurately capture real-world latencies as the heavy operations like proofs-of-shuffle, homomorphic blinding, threshold decryption and stage 2 DPKs can be run in parallel by the mix-servers. The only sequential operation is re-encryption, but it is a negligible fraction of other steps. We also report timings for the HBC case to highlight the overhead introduced by the malicious security steps.

Our ZKPs are practical for offline batch processing tasks, finishing within an hour and requiring moderate amount of data to be published for $10000$ ciphertexts. They scale linearly with $n$. The scale-up with $m$ is constant for the HBC mix-server time and for the querier, but in the malicious case, each mix-server needs to verify other mix-servers' output, which leads to a $\sim$1.4x increase from $m=2$ to $m=4$. Our main bottlenecks are expensive pairing and exponentiation computations in our DPKs and expensive Paillier operations in $\zkprsm$. The additional proofs for malicious security add an overhead of roughly 1.5-2x over the HBC case where these steps are skipped. 

There exists a high degree of task parallelism in our construction, since DPKs in stage 2 are independent of each other and stage 1 operations incur at most constant communication overhead. Thus, we expect significant speedups if each mix-server and the querier are given multiple cores. With a moderate parallel cluster of 100 nodes, recovery for an election with $10^6$ votes can thus be performed within a few hours. Further, if the signer in stage 1 of our ZKPs could be a separate trusted entity different than the querier and pre-sign all set entries, then our ZKPs become completely noninteractive and only incur stage 2 costs for verification. \\

\begin{paragraph}{Comparison}
	The only technique comparable to our distributed setting is collaborative zkSNARKs \citep{collaborative-zksnarks}. We indirectly estimate our performance against them by employing a thumbrule given by \citep{collaborative-zksnarks} that the per-prover time in a collaborative zkSNARK, assuming each prover already has a share of the SNARK witness, is $\sim$2x the prover time in the corresponding single-prover zkSNARK. Thus, we implemented zkSNARKs for $\rho_{\mathsf{SM}\text{-}\mathsf{Acc}}$ and $\rho_{\mathsf{RSM}\text{-}\mathsf{Acc}}$ via Merkle accumulators (see Section \ref{ss:review-sm}). We used the ZoKrates toolchain \citep{zokrates} and the Groth16 proof system \citep{groth16}. We used the Baby Jubjub curve \citep{babyjubjub}, which has similar order as BN254 and allows efficient computation of Merkle hashes for commitments for $\rho_{\mathsf{RSM}\text{-}\mathsf{Acc}}$. For creating Merkle hashes, we used the Poseidon hash function \citep{poseidon}. 

	Figure \ref{fig:benchmarks} (top) also shows the per-prover times in a collaborative zkSNARK approach estimated as above (averaged over 3 runs with deviation <1.1\%). We find the prover time for one $\rho_{\mathsf{SM}\text{-}\mathsf{Acc}}$ or $\rho_{\mathsf{RSM}\text{-}\mathsf{Acc}}$ proof against a set of size 10000 is $\sim$2.15 s, which when scaled to $10000$ commitments takes $\sim$21500 s. From this, the per-prover time in collaborative zkSNARKs is estimated to be $\sim$43000 s, as shown in the column for $n=10^4$ (for any $m$). This estimate is conservative as it does not count the time taken to securely distribute shares of the SNARK witness among the collaborating provers. This makes our $\zkpsm$ and $\zkprsm$ proofs $\sim$86x and $\sim$18x faster than collaborative zkSNARKs. We note that our verification times (200 s and 620 s for $n=10000$) are slower than zkSNARKs' $\sim$50 s, but the dominant prover times in zkSNARKs imply that our techniques still bring drastic overall improvements.

	We also ran Benarroch et al.'s \citep{modular-zkp-sm} official implementation \citep{cpsnarks-set-impl}, which proves $\rho_{\mathsf{SM}\text{-}\mathsf{Acc}}$ for a single prover (but not $\rho_{\mathsf{RSM}\text{-}\mathsf{Acc}}$). This takes $\sim$2200 s for $10000$ $\rho_{\mathsf{SM}\text{-}\mathsf{Acc}}$ proofs, excluding the $O(n^2)$ time taken to generate the RSA accumulator witnesses (see Section \ref{ss:review-sm}).	
\end{paragraph}

\section{Conclusion}

We introduced and formalised the notion of traceable mixnets, extending traditional mixnets to provably answer useful subset queries in zero knowledge. We also proposed a traceable mixnet construction using novel distributed ZKPs of set membership and reverse set membership, which are useful in other settings too. We implemented these ZKPs and showed that they are significantly faster than the state-of-the-art techniques. Nevertheless, our current implementation is practical only for offline batch processing such as recovery in elections. Constructing traceable mixnets for real-time privacy applications is a challenging open problem.

\begin{acks}
	We wish to thank Rohit Vaish, Kabir Tomer and Mahesh Sreekumar Rajasree for their helpful discussions and comments, and Aarav Varshney for assistance in setting up the benchmarks. Our gratitude also extends to the anonymous reviewers whose suggestions significantly improved the manuscript's presentation.

	This research was partially supported by the Pankaj Gupta Chair in Privacy and Decentralisation. Prashant Agrawal received additional support from the Pankaj Jalote Doctoral Grant, and Abhinav Nakarmi was supported by a research grant from the MPhasis F1 Foundation.
\end{acks}

\bibliography{traceable-mixnets}
\bibliographystyle{ACM-Reference-Format}

\appendix

\section{Single prover reverse set membership}
\label{appendix:rsm-single}

Figure \ref{fig:rev-set-membership} shows the ZKP of reverse set membership in the single prover case. Note that in this protocol, the prover knows commitment openings for each $\gamma \in \Phi$ (not only for the commitment committing $v$).

\begin{figure}[t]
	
	\centering
	
	\scalebox{0.8}{
		
		\begin{tabular}{|l|}
			\hline    
			\multicolumn{1}{|l|}{\textbf{Participants:} Prover $\mathcal{P}$, Verifier $\mathcal{V}$} \\			
			\multicolumn{1}{|l|}{\textbf{Common~input:} $\Phi \in 2^{\Gone} \text{(set of commitments)}, v \in \Zq$} \\
			\multicolumn{1}{|l|}{\textbf{$\mathcal{P}$'s input:}~~For each $\gamma_v \in \Phi$ ($\Phi$ indexed by $v$): $(v,r_v)$ s.t. $\gamma_v = g_1^{v}h_1^{r_{v}}$} \\
			\multicolumn{1}{|l|}{\textbf{$\mathcal{V}$'s output:} $1$ if $\mathcal{V}$ accepts else $0$} \\
			\hline
			
			\textbf{\underline{Stage 1:}} \\
			\enskip $\mathcal{P}$: \textbf{for each $\gamma \in \Phi$:}  $\rho_{\gamma} \leftarrow \nizkpk{v,r}{\gamma = g_1^{v}h_1^{r}}$ \\
			\enskip \quad \enskip send $\{\rho_{\gamma} \mid \gamma \in \Phi \}$ to $\verifier$ \\
			\enskip $\mathcal{V}$: \\
			\enskip \quad \textbf{for each} $\gamma \in \Phi$: \\
			\enskip \quad \quad \textbf{abort} \textbf{if} $\nizkver(\gamma, \rho_{\gamma}) \neq 1$\\ 
			\enskip \quad \quad $x \xleftarrow{\$} \Zq$; $y \leftarrow f_2^x$ \\
			\enskip \quad \quad $c_{\gamma} \xleftarrow{\$} \Zq; \hat{r}_{\gamma} \xleftarrow{\$} \Zq$; $S_{\gamma} \leftarrow (f_1 h_1^{\hat{r}_{\gamma}} \gamma)^{\frac{1}{x+{c}_{\gamma}}}$ \\
			\enskip \quad \textbf{endfor} \\ 
			\enskip \quad send $y, \hat{\sigma}_{\gamma}:=(S_{\gamma}, c_{\gamma}, \hat{r}_{\gamma})$ to $\mathcal{P}$\\ 
			\enskip $\mathcal{P}$: \\
			\enskip \quad Find $\gamma_v \in \Phi$ s.t. $\gamma_v = g_1^vh_1^{r_{v}}$ \mycomment{$O(1)$ look-up if $\Phi$ is indexed by $v$} \\
			\enskip \quad $\hat{\sigma}_v := (S_v, c_v, \hat{r}_v) := \hat{\sigma}_{\gamma_v}$ \\
			\enskip \quad $(b_S, b_c, b_r) \xleftarrow{\$} (\Zq \times \Zq \times \Zq)$ \\
			\enskip \quad $(\tilde{S}_v, \tilde{c}_v, \tilde{r}_v) \leftarrow (S_vg_1^{b_S}, c_v + b_c \bmod q, \hat{r}_{v} + r_v + b_r \bmod q)$ \\
			\enskip \quad send $\tilde{\sigma}_v:=(\tilde{S}_v, \tilde{c}_v, \tilde{r}_v)$ to $\mathcal{V}$ \\
			\textbf{Stage 2:} \\
			\mycomment{$\mathcal{P}$ proves (in ZK) knowledge of a BBS+ signature} \\
			\mycomment{$(\tilde{S}_vg_1^{-b_S}, \tilde{c}_v-b_c \bmod q, \tilde{r}_v-b_r \bmod q)$  on $v$, i.e., } \\
			\underline{\mycomment{$e(\tilde{S}_vg_1^{-b_S}, yf_2^{\tilde{c}_v-b_c}) = e(f_1 g_1^{v} h_1^{\tilde{r}_v-b_r}, f_2)$ (Sec. \ref{sss:bbsplus-sig}):}} \\
			\enskip $\mathcal{P}, \mathcal{V}:$\\
			\enskip \quad $\mathfrak{h}_1 \leftarrow e(h_1, f_2)^{-1}$; $\mathfrak{h}_2 \leftarrow e(g_1, f_2)^{-1}$; $\mathfrak{h}_3 \leftarrow f_T$  \\
			\enskip \quad $\mathfrak{z}_2 \leftarrow e(\tilde{S}_v, yf_2^{\tilde{c}_v}) / e(f_1 g_1^{v} h_1^{\tilde{r}_v}, f_2)$ \\
			\enskip \quad $\mathfrak{g}_1 \leftarrow e(\tilde{S}_v, f_2)$; $\mathfrak{g}_2 \leftarrow e(g_1, yf_2^{\tilde{c}_v})$ \\
			\enskip $\mathcal{P}$: \\
			\enskip \quad $\delta_{0} \xleftarrow{\$} \Zq$; $\delta_{1} \leftarrow b_Sb_c$; $\delta_{2} \leftarrow \delta_{0}b_c$ \\
			\enskip \quad send $\mathfrak{z}_1:=\mathfrak{h}_2^{b_S}\mathfrak{h}_3^{\delta_{0}}$ to $\mathcal{V}$ \\
			\enskip $\mathcal{P}, \mathcal{V}:$\\
			\enskip \quad  $res \leftarrow \mathsf{PK}\{(b_S, b_c, b_r, {\delta_{0}}, \delta_{1}, \delta_{2}): (\mathfrak{z}_1 = \mathfrak{h}_2^{b_S}\mathfrak{h}_3^{\delta_{0}}) \wedge$\\
			\enskip \quad \quad \quad \quad $ (1_{\GT} = \mathfrak{z}^{-b_c}_1 \mathfrak{h}_2^{\delta_{1}} \mathfrak{h}_3^{\delta_{2}}) \wedge (\mathfrak{z}_2 = \mathfrak{g}_1^{b_c} \mathfrak{g}_2 ^{b_S} \mathfrak{h}_1 ^ {b_r} \mathfrak{h}_2 ^ {\delta_{1}}) \}$ \\
			\enskip $\mathcal{V}:$\\
			\enskip \quad \textbf{output} $res$ \\
			\hline
		\end{tabular}
		
	}
	
	\caption{ZKP of reverse set membership $\rho_{\mathsf{RSM}}(\Phi, v):=\pk{r}{\gamma = g_1^vh_1^r \wedge \gamma \in \Phi}$, if the prover knows openings for each $\gamma \in \Phi$.}
	
	\label{fig:rev-set-membership}
	
\end{figure}

\section{From honest-but-curious to malicious model}
\label{honest-but-curious-to-full-zk}

In this section, we mention steps required to derive secrecy (Definition \ref{def:secr}) in the general case when the adversary allows the corrupted parties (all-but-two senders, all-but-one mix-servers and the querier) to deviate from the protocol:
\begin{itemize}[leftmargin=*]
	\item[-] Each sender $S_i$ must attach NIZK proofs of knowledge $(\vc{\rho}{\epsilon_i}$, $\vc{\rho}{\epsilon_{r_i}}$, $(\vs{\rho}{k}_{\mathsf{ev}_i},\vs{\rho}{k}_{\mathsf{er}_i})_{k \in [m]})$ of plaintexts encrypted by encryptions $(\vc{\epsilon}{i}$, $\vc{\epsilon}{r_i}$, $(\vs{\mathsf{ev}}{k}_{i}, \vs{\mathsf{er}}{k}_{i})_{k \in [m]})$ sent by it. Each $(\mixer{k})_{k \in [m]}$ must verify these proofs before processing anything.
	
	\item[-] For encryptions $\ve{\epsilon}_{\ve{\sigma}}'$ in $\zkpsm$ and $\ve{\epsilon_{S}},\ve{\epsilon_{c}},\ve{\epsilon_{\hat{r}}}$ in $\zkprsm$, the querier must publish their randomnesses and each $(\mixer{k})_{k \in [m]}$ must verify that they were created correctly. Each $(\mixer{k})_{k \in [m]}$ must also verify that $\querier$ gave valid signatures/quasi-signatures for each element in the requested set and invalid ones for its complement. Note that this also involves verifying that $\ve{c},\ve{\hat{r}}$ in $\zkprsm$ contain only elements in the range $[0,q]$.

	\item[-] Each $(\mixer{k})_{k \in [m]}$ must provide proofs of correct shuffle in all the shuffle protocols. In a proof of shuffle, $\mixer{k}$ for a given input ciphertext list $\ve{\epsilon}$ and output ciphertext list $\ve{\epsilon}'$ proves that $\ve{\epsilon}'$ is a permutation and re-encryption of $\ve{\epsilon}$ under a permutation that is consistent across $\mix$, $\zkpsm$ and $\zkprsm$ protocols (i.e., the permutation used during $\zkprsm$ is the same as that used during $\mix$ and the permutation used during $\zkpsm$ is the inverse of it). Such proofs can be given efficiently using the permutation-commitment based techniques of \citep{commitment-consistent-proof-shuffle,terelius-restricted-shuffles}. Each $(\mixer{k})_{k \in [m]}$ must verify proofs given by other mix-servers before participating in the corresponding threshold decryption protocols.
	
	\item[-] Each $(\mixer{k})_{k \in [m]}$ must provide proofs of knowledge of the blinding factors of homomorphically blinded signatures. During $\zkpsm$, this involves giving proofs $\vcs{\rho}{\tilde{\sigma}_i}{k}$ for each $\vcs{\tilde{\epsilon}}{{\sigma}_i}{k}$ published by $\mixer{k}$ as a blinding of $\vc{\epsilon}{{\sigma}_i}$ with blinding factor $\vcs{b}{i}{k}$. Note that $\vcs{\rho}{\tilde{\sigma}_i}{k}:=\nizkpk{r,\vcs{b}{i}{k}}{\tilde{c}_0 = g_1^r c_0^{\vcs{b}{i}{k}} \wedge \tilde{c}_1 = \thegpk^r c_1^{\vcs{b}{i}{k}}}$, where $\vc{\epsilon_{\sigma}}{i}$ and $\vcs{\tilde{\epsilon}_{\sigma}}{i}{k}$ are parsed as ElGamal ciphertexts $(c_0,c_1)$ and $(\tilde{c}_0, \tilde{c}_1)$ respectively. During $\zkprsm$, this involves giving NIZK proofs $(\vcs{\rho}{bS_j}{k},$ $\vcs{\rho}{bc_j}{k},$ $\vcs{\rho}{br_j}{k})$ for each $(\vcs{\epsilon_{bS}}{j}{k},$ $\vcs{\epsilon_{bc}}{j}{k},$ $\vcs{\epsilon_{br}}{j}{k})$, which are encryptions of $g_1^{\vcs{b_S}{j}{k}}$ under $\theg$ and of $\vcs{b_c'}{j}{k}:=\vcs{b_c}{j}{k} + q \vcs{\chi_c}{j}{k}$ and $\vcs{b_r'}{j}{k}:=\vcs{b_r}{j}{k} + q \vcs{\chi_r}{j}{k}$ under $\thez$ respectively. Here, $\vcs{\rho}{bS_j}{k}:=\nizkpk{r,\vcs{b_S}{j}{k}}{ \tilde{c}_0 = g_1^r \wedge \tilde{c}_1 = g_1^{\vcs{b_S}{j}{k}} \thegpk^r }$, where $\vcs{\epsilon_{bS}}{j}{k}$ is parsed as the ElGamal ciphertext $(\tilde{c}_0, \tilde{c}_1)$ and $\vcs{\rho}{bc_j}{k}$, $\vcs{\rho}{br_j}{k}$ are proofs of knowledge of the plaintexts encrypted by $\vcs{\epsilon_{bc}}{j}{k}$, $\vcs{\epsilon_{br}}{j}{k}$ respectively.

	\item[-] During the $\thegd$ and $\thezd$ protocols, each $(\mixer{k})_{k \in [m]}$ must provide proofs that they produced correct decryption shares. Each $(\mixer{k})_{k \in [m]}$ should proceed with stage 2 only if these proofs pass.

	\item[-] During the $\zkprsm$ protocol, each $(\mixer{k})_{k \in [m]}$ must provide a proof of knowledge of the opening of $\share{\mathfrak{z}_1}{k^*}$. Each $(\mixer{k})_{k \in [m]}$ should proceed only if the proof passes.
\end{itemize}

Note that the DPKs in $\zkpsm$ and $\zkprsm$ already employ the Fiat-Shamir heuristic \citep{fiat-shamir} which makes them general ZKPs in the random oracle model (see Section \ref{sss:distr-pok}).

In Section \ref{pf:secrecy-general}, we sketch a proof that our construction with the above steps protects secrecy (Definition \ref{def:secr}) against general malicious adversaries.

\section{Proofs}
\label{appendix:proofs}

\subsection{Proof for Theorem \ref{thm:completeness}}
\label{pf:completeness}

It can be inspected that when all the parties are honest, inputs to protocols $\zkpsm$ and  $\zkprsm$ satisfy the preconditions mentioned in Figures \ref{fig:distr-set-membership} and \ref{fig:distr-rev-set-membership}, respectively. This is also true for the reruns of $\zkpsm$ and $\zkprsm$ in the $\btrin/\btrout$ calls against the complement sets. Thus, by Lemma \ref{lem:compl-zkpsm}, $I^*$ and $I^*_{\mathsf{c}}$ obtained by $\querier$ in a $\btrin$ call satisfy $I^* = \{ i \in I \mid \vc{v}{i} \in \vc{v}{J}' \}$ and $I^*_{\mathsf{c}} = \{ i \in I \mid \vc{v}{i} \in \vc{v}{[n]\setminus J}' \}$. Also, the correctness of $\mix$ implies that for each $i \in I \subseteq [n]$, $\vc{v}{i} \in \{\vc{v}{j}' \mid j \in [n]\}$, i.e., $\vc{v}{i} \in \vc{v}{J}' \cup \vc{v}{[n]\setminus J}'$ for any $J \subseteq [n]$. Thus, $I^* \cup I^*_{\mathsf{c}} = I$, which implies that $\querier$ does not abort and outputs a $\vc{c}{I^*}$ that satisfies the first condition of $\cmpl$ (Figure \ref{exp-completeness}). By a similar argument using Lemma \ref{lem:compl-zkprsm}, it follows that $\querier$ does not abort in a $\btrout$ call and outputs a $\vc{v}{J^*}'$ that satisfies the second condition of $\cmpl$.
	
\begin{lemma} 
	\label{lem:compl-zkpsm}
	If inputs $(\thegpk,$ $\ve{\gamma},$ $\ve{v}',$ $I,$ $J,$ $(\si{\mixer{k}}{\thegsk{k},$ $\permfunc{k},$ $\vs{v}{k},$ $\vs{r}{k}})_{k \in [m]})$ of a $\zkpsm$ invocation satisfy the preconditions mentioned in Figure \ref{fig:distr-set-membership} and all the parties are honest then $\querier$ outputs $I^* = \{ i \in I \mid \exists j \in J: \sum_{k \in [m]} \vcs{v}{i}{k} = \vc{v}{j}' \}$.
\end{lemma}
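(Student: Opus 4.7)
The plan is to trace each ciphertext through Stage 1 of $\zkpsm$ to identify exactly what $\vc{\tilde{\sigma}}{i}$ is, and then show in Stage 2 that the DPK predicate $p_{\mathsf{BB}_i}$ holds with probability essentially $1$ precisely when $\pi^{-1}(i) \in J$, which coincides with the set described in the lemma.

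First I would argue about Stage 1. Using the $\shuffle$ specification, the composition of $((\permfunc{k})^{-1})_{k=m,\dots,1}$ equals $\pi^{-1}$ where $\pi = \permfunc{m} \circ \cdots \circ \permfunc{1}$, so after the reverse shuffle $\vc{\epsilon}{\sigma_i}$ re-encrypts $\vc{\sigma}{\pi^{-1}(i)}'$. Combined with the precondition $\vc{v}{j}' = \vc{v}{\pi(j)}$, this gives $\vc{\sigma}{\pi^{-1}(i)}' = g_1^{1/(x+\vc{v}{i})}$ when $\pi^{-1}(i) \in J$, and $g_1$ otherwise. Correctness of the homomorphic product in $\theg$ and of $\ve{\thegd}$ then yields $\vc{\tilde{\sigma}}{i} = \vc{\sigma}{i}^{\vc{b}{i}}$, where $\vc{b}{i} = \sum_{k \in [m]} \vcs{b}{i}{k}$. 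So $\vc{\tilde{\sigma}}{i} = g_1^{\vc{b}{i}/(x+\vc{v}{i})}$ when $\pi^{-1}(i) \in J$ and $\vc{\tilde{\sigma}}{i} = g_1^{\vc{b}{i}}$ otherwise.

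Next I would analyze Stage 2. The first conjunct of $p_{\mathsf{BB}_i}$ is immediate from the precondition. For the second conjunct, a direct pairing computation shows that when $\pi^{-1}(i) \in J$, both sides equal $e(g_1,g_2)^{x\vc{b}{i}/(x+\vc{v}{i})}$, so the predicate holds and by completeness of the DPK (Section \ref{sss:distr-pok}) the honest $\querier$ accepts, placing $i$ in $I^*$. When $\pi^{-1}(i) \notin J$, the same pairing computation reduces the equation to $e(g_1,g_2)^{x\vc{b}{i}} = e(g_1,g_2)^{\vc{b}{i}(1-\vc{v}{i})}$, which holds only if $\vc{b}{i}(x + \vc{v}{i} - 1) \equiv 0 \pmod q$; since $\vc{b}{i}$ is uniform in $\Zq$ and independent of $x$, this happens with probability at most $2/q$, so by soundness of the DPK, $\querier$ rejects except with negligible probability and $i \notin I^*$.

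Finally I would combine the two cases. Under the precondition, $\pi^{-1}(i) \in J$ is equivalent to the existence of $j \in J$ with $\vc{v}{j}' = \vc{v}{i}$; noting $\vc{v}{i} = \sum_{k \in [m]} \vcs{v}{i}{k}$, this yields precisely the stated characterisation of $I^*$.

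The main obstacle is the second case of Stage 2: arguing that when $\pi^{-1}(i) \notin J$, $\querier$ correctly rejects. This is not merely protocol-following but uses that the blinding factor $\vc{b}{i}$ is information-theoretically random across the honest mix-servers, so the ``fake'' $\vc{\tilde{\sigma}}{i} = g_1^{\vc{b}{i}}$ is extremely unlikely to satisfy the BB verification equation for $\vc{v}{i}$; this produces the negligible error term required by Definition \ref{def:compl}.
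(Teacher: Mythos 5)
Your proposal is correct and follows essentially the same route as the paper's proof: trace the reverse shuffle and homomorphic blinding to conclude $\vc{\tilde{\sigma}}{i} = g_1^{\vc{b}{i}/(x+\vc{v}{i})}$ when $\pi^{-1}(i) \in J$ and $g_1^{\vc{b}{i}}$ otherwise, then check the two conjuncts of $p_{\mathsf{BB}_i}$ case by case. The only difference is that you explicitly bound the probability (at most $2/q$, from $\vc{b}{i}=0$ or $x+\vc{v}{i}=1$) that a fake blinded signature accidentally satisfies the verification equation, a point the paper's proof states without quantification; this is a welcome refinement rather than a different argument.
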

\begin{proof}
	\label{proof-compl-zkpsm}
	Note that for honest mix-servers, the $i^{\text{th}}$ DPK passes iff $\mixer{k}$ uses $(\vcs{v}{i}{k}, \vcs{r}{i}{k}, \vcs{b}{i}{k})$ such that $(\vc{v}{i}, \vc{r}{i}, \vc{b}{i}):=(\sum_{k \in [m]} \vcs{v}{i}{k},$ $\sum_{k \in [m]}\vcs{r}{i}{k},$ $\sum_{k \in [m]}\vcs{b}{i}{k})$ satisfy the predicate $p_{\mathsf{BB}_i}$. The equation $\vc{\gamma}{i} = g_1^{\vc{v}{i}}h_1^{\vc{r}{i}}$ of $p_{\mathsf{BB}_i}$ is trivially satisfied by the correctness of $(\vcs{v}{i}{k},\vcs{r}{i}{k})$. Next, note that  $\vc{\tilde{\sigma}}{i} = (\vc{\sigma'}{\pi^{-1}(i)})^{{\sum_{k \in [m]} \vcs{b}{i}{k}}}$, by the homomorphism of $\theg$. Let $j \in [n]$ be the index to which index $i$ is mapped after permutation, i.e., $i = \pi(j)$ or equivalently $\pi^{-1}(i)=j$. Correctness of input conditions implies $\vc{v}{j}' = \sum_{k \in [m]} \vcs{v}{\pi(j)}{k} = \sum_{k \in [m]} \vcs{v}{i}{k}$. Thus, $\vc{\tilde{\sigma}}{i}= (\vc{\sigma}{j}')^{\sum_{k \in [m]} \vcs{b}{i}{k}}$, which equals $g_1^{\frac{\sum_{k \in [m]} \vcs{b}{i}{k}}{x + \vc{v}{j}' }}$ $=$ $g_1^{\frac{\sum_{k \in [m]} \vcs{b}{i}{k}}{x + \sum_{k \in [m]} \vcs{v}{i}{k} }}$ if $j \in J$ and $g_1^{\sum_{k \in [m]} \vcs{b}{i}{k}}$ if $j \not\in J$. In the first case, the second equation of $p_{\mathsf{BB}_i}$ passes; in the second case, it fails. Since the DPK is run only for $i \in I$, $I^*$ is exactly as claimed.
\end{proof}

\begin{lemma}
	\label{lem:compl-zkprsm}
	If inputs $(\thegpk,$ $\thezpk,$ $\ve{\gamma},$ $\ve{\pokcomm},$ $\ve{\epsilon_r},$ $\ve{v}',$ $I,$ $J,$ $(\si{\mixer{k}}{\thegsk{k},$ $\thezsk{k}, \permfunc{k}, \vs{v}{k},$ $\vs{r}{k}})_{k \in [m]})$ of a $\zkprsm$ invocation satisfy the preconditions mentioned in Figure \ref{fig:distr-rev-set-membership} and all the parties are honest then $\querier$ outputs $J^* = \{ j \in J \mid \exists i \in I: \sum_{k \in [m]} \vcs{v}{i}{k} = \vc{v}{j}' \}$.
\end{lemma}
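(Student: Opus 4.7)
The plan is to mirror the proof of Lemma~\ref{lem:compl-zkpsm}, adapted to the BBS+ setting: first I unroll Stage~1 to pin down the form of the blinded signature $\vc{\tilde{\sigma}}{j}'$ published next to each $\vc{v}{j}'$, then I argue that the Stage-2 DPK accepts exactly when $\pi(j) \in I$, and finally I convert this condition into the existential statement of the lemma using distinctness of the $\vc{v}{i}$'s (inherited from the caller in Theorem~\ref{thm:completeness}).

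I begin by tracing Stage~1 component-wise. Using the multiplicative homomorphism of $\theg$ on $\ve{\epsilon_S}$ and the additive homomorphism of $\thez$ on $\ve{\epsilon_c}$ and $\ve{\epsilon_{\mathsf{r}}}$ (with the latter first combined with $\ve{\epsilon_r}$ so that the encrypted $\mathsf{r}$-component equals $\vc{\hat{r}}{i}+\vc{r}{i}$), then the $\shuffle$, and finally the homomorphic blinding and threshold decryption, I verify the annotated identities $\vc{\tilde{S}}{j}' = \vc{S}{\pi(j)} \cdot g_1^{\vc{b_S}{j}}$, $\vc{\tilde{c}}{j}' \equiv \vc{c}{\pi(j)} + \vc{b_c}{j} \pmod{q}$, and $\vc{\tilde{r}}{j}' \equiv \vc{\hat{r}}{\pi(j)} + \vc{r}{\pi(j)} + \vc{b_r}{j} \pmod{q}$, where $\vc{b_S}{j}, \vc{b_c}{j}, \vc{b_r}{j}$ denote the joint blinders $\sum_k \vcs{b}{\vc{S}{j}}{k}$, etc. The mod-$q$ reductions go through because $\thez$ is additively homomorphic in $\mathbb{Z}_N$ with $N \gg q$ and the offsets $q\vc{\chi}{\vc{c}{j}}, q\vc{\chi}{\vc{r}{j}}$ are multiples of $q$. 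Substituting $\vc{\gamma}{\pi(j)} = g_1^{\vc{v}{\pi(j)}} h_1^{\vc{r}{\pi(j)}}$ and $\vc{v}{\pi(j)} = \vc{v}{j}'$ then gives $\vc{\tilde{S}}{j}' g_1^{-\vc{b_S}{j}} = (f_1 g_1^{\vc{v}{j}'} h_1^{\vc{\hat{r}}{\pi(j)} + \vc{r}{\pi(j)}})^{1/(x+\vc{c}{\pi(j)})}$ when $\pi(j) \in I$ and $1_{\Gone}$ otherwise.

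Next I show that for honest provers the DPK on $p_{\mathsf{BBS+}_j}$ accepts iff the unblinded triple $(\vc{\tilde{S}}{j}' g_1^{-\vc{b_S}{j}},\ \vc{\tilde{c}}{j}' - \vc{b_c}{j} \bmod q,\ \vc{\tilde{r}}{j}' - \vc{b_r}{j} \bmod q)$ is a valid BBS+ signature on $\vc{v}{j}'$ under $y$. The first two conjuncts of $p_{\mathsf{BBS+}_j}$ hold identically by the construction of $\mathfrak{z}_1$ and by the definitions $\delta_1 = \vc{b_S}{j}\vc{b_c}{j}$, $\delta_2 = \delta_0 \vc{b_c}{j}$ produced by $\mult$. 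The third conjunct $\mathfrak{z}_2 = \mathfrak{g}_1^{\vc{b_c}{j}} \mathfrak{g}_2^{\vc{b_S}{j}} \mathfrak{h}_1^{\vc{b_r}{j}} \mathfrak{h}_2^{\delta_1}$ I recognise as an algebraic rewriting of the BBS+ verification equation $e(\vc{\tilde{S}}{j}' g_1^{-\vc{b_S}{j}}, yf_2^{\vc{\tilde{c}}{j}' - \vc{b_c}{j}}) = e(f_1 g_1^{\vc{v}{j}'} h_1^{\vc{\tilde{r}}{j}' - \vc{b_r}{j}}, f_2)$: expanding both pairings via bilinearity distributes the unblindings into six factors, and the nonlinear cross-product $e(g_1, f_2)^{-\vc{b_S}{j}\vc{b_c}{j}}$ is absorbed by $\mathfrak{h}_2^{\delta_1}$. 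Hence, if $\pi(j) \in I$, the unblinded triple is a genuine BBS+ signature by the derivation in Section~\ref{sss:sig-committed-value} and the DPK accepts; if $\pi(j) \notin I$, the left side of the BBS+ equation collapses to $1_{\GT}$ while the right side does not (with overwhelming probability over the random-oracle generation of $f_1, g_1, h_1$ in $\setup$), so the DPK rejects. Since the DPK is executed only for $j \in J$, and distinctness of the $\vc{v}{i}$'s makes $\pi(j) \in I$ equivalent to $\exists i \in I: \sum_k \vcs{v}{i}{k} = \vc{v}{j}'$, $\querier$'s output $J^*$ coincides with the claimed set.

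The main obstacle is the algebraic reduction in the third conjunct of $p_{\mathsf{BBS+}_j}$. The nonlinear cross-product $\vc{b_S}{j}\vc{b_c}{j}$ cannot appear directly as a monomial in a linear DPK witness, so it is represented by the auxiliary $\delta_1$ whose value is pinned down to $\vc{b_S}{j}\vc{b_c}{j}$ only indirectly, via the second conjunct $\mathfrak{z}_1^{-\vc{b_c}{j}} \mathfrak{h}_2^{\delta_1} \mathfrak{h}_3^{\delta_2} = 1_{\GT}$ together with $\mathfrak{z}_1 = \mathfrak{h}_2^{\vc{b_S}{j}}\mathfrak{h}_3^{\delta_0}$ and the auxiliary $\delta_2 = \delta_0 \vc{b_c}{j}$ that cancels the $\mathfrak{h}_3 = f_T$ term. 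Checking that this indirect encoding aligns with the six bilinear factors obtained by expanding the BBS+ verification equation is the computationally heavy step of the argument.
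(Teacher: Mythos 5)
Your proposal is correct and follows essentially the same route as the paper's proof: unroll Stage~1 homomorphically to obtain the identities $\vc{\tilde{S}}{j}' = \vc{S}{\pi(j)}g_1^{\vc{b_S}{j}}$, $\vc{\tilde{c}}{j}' = \vc{c}{\pi(j)} + \vc{b_c}{j}$, $\vc{\tilde{r}}{j}' = \vc{\hat{r}}{\pi(j)}+\vc{r}{\pi(j)}+\vc{b_r}{j}$, then show via the bilinearity expansion (the paper's Equation~\ref{eq:dpk2}) that the predicate $p_{\mathsf{BBS+}_j}$ with $\delta_1=\vc{b_S}{j}\vc{b_c}{j}$, $\delta_2=\delta_0\vc{b_c}{j}$ holds exactly when $\pi(j)\in I$. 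Your explicit handling of the rejection case for fake signatures and of the distinctness needed to pass from $\pi(j)\in I$ to the existential form is slightly more careful than the paper's terse "the claim follows," but the argument is the same.
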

\begin{proof}
	\label{proof-compl-zkprsm}
	As in Lemma \ref{lem:compl-zkpsm}, let $i$ be s.t. $i = \pi(j)$. By correctness of inputs, $\vc{v}{j}' = \sum_{k \in [m]} \vcs{v}{i}{k}$, $\vc{\epsilon}{r_i} \leftarrow \theze(\thezpk, \sum_{k \in [m]} \vcs{r}{i}{k})$. Then, it can be inspected that the following equalities hold:
	\begin{itemize}[leftmargin=*]
		\item[-] $\vc{\tilde{S}}{j}'$ $=$ $(f_1 g_1^{ \sum_{k \in [m]}  \vcs{v}{i}{k} }$ $h_1^{ \vc{\hat{r}}{i} + \sum_{k \in [m]}  \vcs{r}{i}{k} })^{\frac{1}{x +  \vc{c}{i}}}$ $g_1^{\sum_{k \in [m]}  \vcs{b_S}{j}{k} }$ if $i \in I$ else $f_1^0$,
		\item[-] $\vc{\tilde{c}}{j}' = \vc{c}{i} + \sum_{k \in [m]} \vcs{b_c}{j}{k}$, 
		\item[-] $\vc{\tilde{r}}{j}' = \vc{\hat{r}}{i} + \sum_{k \in [m]} \vcs{r}{i}{k} + \sum_{k \in [m]} \vcs{b_r}{j}{k}$
	\end{itemize} 
	Therefore, $(\vc{\tilde{S}}{j}'g_1^{-\vc{b}{\vc{S}{j}}},\vc{\tilde{c}}{j}' - \vc{b}{\vc{c}{j}}, \vc{\tilde{r}}{j}' - \vc{b}{\vc{r}{j}})$ satisfies the BBS+ verification equation $e(\vc{\tilde{S}}{j}'g_1^{-\vc{b}{\vc{S}{j}}}, yf_2^{\vc{\tilde{c}}{j}' - \vc{b}{\vc{c}{j}}}) = e(f_1g_1^{\vc{v}{j}'} h_1^{\vc{\tilde{r}}{j}' - \vc{b}{\vc{r}{j}}},f_2)$ if $i \in I$, where $(\vc{b_S}{j}, \vc{b_c}{j}, \vc{b_r}{j})$ $:=$ $(\sum_{k \in [m]} \vcs{b_S}{j}{k},$ $\sum_{k \in [m]} \vcs{b_c}{j}{k},$ $\sum_{k \in [m]} \vcs{b_r}{j}{k})$. So: 
	\begin{align}
	\begin{split}
	\label{eq:dpk2}
	& \quad\quad e(\vc{\tilde{S}}{j}'g_1^{-\vc{b_S}{j}}, yf_2^{\vc{\tilde{c}}{j}'-\vc{b_c}{j}}) = e(f_1 g_1^{\vc{v}{j}'} h_1^{\vc{\tilde{r}}{j}'-\vc{b_r}{j}}, f_2) \\
	& \Leftrightarrow \frac{ e(\vc{\tilde{S}}{j}', yf_2^{\vc{\tilde{c}}{j}'-\vc{b_c}{j}}) }{ e(g_1^{\vc{b_S}{j}}, yf_2^{\vc{\tilde{c}}{j}'-\vc{b_c}{j}}) } = e(f_1 g_1^{\vc{v}{j}'} h_1^{\vc{\tilde{r}}{j}'-\vc{b_r}{j}}, f_2)  \\
	& \Leftrightarrow \frac{ e(\vc{\tilde{S}}{j}', yf_2^{\vc{\tilde{c}}{j}'}) }{ e(f_1 g_1^{\vc{v}{j}'} h_1^{\vc{\tilde{r}}{j}'}, f_2) } = \\
	& \quad\quad e(\vc{\tilde{S}}{j}', f_2)^{\vc{b_c}{j}} e(g_1,  yf_2^{\vc{\tilde{c}}{j}'})^{\vc{b_S}{j}}  e(h_1,  f_2)^{-\vc{b_r}{j}}e(g_1, f_2)^{-\vc{b_S}{j}\vc{b_c}{j}} \\
	& \Leftrightarrow \mathfrak{z}_2 = \mathfrak{g}_1^{ \vc{b_c}{j} } \mathfrak{g}_2 ^{\vc{b_S}{j}} \mathfrak{h}_1 ^ {\vc{b_r}{j}} \mathfrak{h}_2 ^ {\vc{b_S}{j}\vc{b_c}{j}}
	\end{split}
	\end{align}
	Thus, with the mix-servers holding shares of $(\vc{b_S}{j}, \vc{b_c}{j}, \vc{b_r}{j})$ and $\delta_1 := \vc{b_S}{j}\vc{b_c}{j}$ and $\delta_2 := \delta_0 \delta_1$, all equations of the predicate $p_{\mathsf{BBS}+_j}$ of the DPK are satisfied if $i \in I$. If $i \not\in I$, the last equation of $p_{\mathsf{BBS}+_j}$  is not satisfied. The claim follows.
\end{proof}

\subsection{Proof for Theorem \ref{thm:soundness}}
\label{pf:soundness}

Suppose for contradiction that there is a PPT adversary $\adv$ such that $\snd$ (Figure \ref{exp-soundness}) outputs $1$ with non-negligible probability. Note first that $\querier$ always outputs $I^* \subseteq I$ in $\zkpsm$ and $J^* \subseteq J$ in $\zkprsm$. Thus, $\vc{c}{I^*} \subseteq \vc{c}{I}$ and $\vc{v}{J^*}' \subseteq \vc{v}{J}'$. We now consider the following cases:

\noindent \emph{Case 1: $\vc{c}{I^*} \neq \{ \vc{c}{i} \in \vc{c}{I} \mid \vc{v}{i} \in \vc{v}{J}'\}.$} This leads to the following sub-cases:
\begin{itemize}[leftmargin=*]
	\item[-] \emph{Case 1.1: $\exists \vc{c}{i} \in \vc{c}{I^*}: \vc{v}{i} \not\in \vc{v}{J}'.$} As per Figure \ref{fig:construction}, $\vc{c}{i}$ contains $\vc{\gamma}{i} = g_1^{\vc{v}{i}}h_1^{\vc{r}{i}}$ for some $\vc{r}{i} \in \Zq$. Since $\vc{c}{i} \in \vc{c}{I^*}$, $\vc{\gamma}{i} \in \vc{\gamma}{I^*}$. Thus, by Lemma \ref{lem:snd-zkpsm}, a PPT extractor can extract a tuple $(j^*,r^*)$ such that $\vc{\gamma}{i} = g_1^{\vc{v}{j^*}'}h_1^{r^*}$ and $\vc{v}{j^*}' \in \vc{v}{J}'$. The requirement $\vc{v}{i} \not\in \vc{v}{J}'$ thus implies that $\vc{v}{i} \neq \vc{v}{j^*}'$. This allows producing two different openings $(\vc{v}{j^*}', r^*)$ and $(\vc{v}{i}, \vc{r}{i})$ for Pedersen commitment $\vc{\gamma}{i}$, which is a contradiction under the discrete logarithm assumption in $\Gone$.
	\item[-] \emph{Case 1.2: $\exists \vc{c}{i} \in \vc{c}{I\setminus I^*}: \vc{v}{i} \in \vc{v}{J}'.$} Note that since $\querier$ produces a $\vc{c}{I^*}$ and does not abort, it must be that $I^* \cup I^*_{\mathsf{c}} = I$ during the $\btrin$ call. Thus, $I^*_{\mathsf{c}} = I \setminus I^*$. Further, since all $\vc{v}{i}'$s are distinct, $\vc{v}{i} \in \vc{v}{J}' \implies \vc{v}{i} \not\in \vc{v}{[n] \setminus J}'$. Thus, this case can be restated as follows: $\exists \vc{c}{i} \in \vc{c}{I^*_{\mathsf{c}}}: \vc{v}{i} \not\in \vc{v}{[n] \setminus J}'$. Thus, by applying Lemma \ref{lem:snd-zkpsm} for the second $\zkpsm$ call in $\btrin$ and proceeding as the previous case, we conclude that this case is not possible.
\end{itemize}

\noindent \emph{Case 2: $\vc{v}{J^*}' \neq \{ \vc{v}{j}' \in \vc{v}{J}' \mid \vc{v}{j}' \in \vc{v}{I}\}.$} This leads to the following sub-cases:
\begin{itemize}[leftmargin=*]
	\item[-] \emph{Case 2.1: $\exists \vc{v}{j}' \in \vc{v}{J^*}': \vc{v}{j}' \not\in \vc{v}{I}.$} Since $\vc{v}{j}' \in \vc{v}{J^*}'$, by Lemma \ref{lem:snd-zkprsm}, a PPT extractor can extract a tuple $(i^*,r^*)$ such that $\vc{\gamma}{i^*} = g_1^{\vc{v}{j}'}h_1^{r^*}$ and $i^* \in I$. Since $i^* \in I$, the requirement $\vc{v}{j}' \not\in \vc{v}{I}$ implies $\vc{v}{j}' \neq \vc{v}{i^*}$. As per Figure \ref{fig:construction}, $\vc{\gamma}{i^*} \leftarrow g_1^{\vc{v}{i^*}}h_1^{\vc{r}{i^*}}$ for some $\vc{r}{i^*} \in \Zq$.  This allows producing two different openings $(\vc{v}{j}', r^*)$ and $(\vc{v}{i^*}, \vc{r}{i^*})$ for $\vc{\gamma}{i^*}$, which leads to a contradiction.
	\item[-] \emph{Case 2.2: $\exists \vc{v}{j}' \in \vc{v}{J \setminus J^*}': \vc{v}{j}' \in \vc{v}{I}.$} Note that since $\querier$ produces a $\vc{v}{J^*}'$ and does not abort, it must be that $J^* \cup J^*_{\mathsf{c}} = J$ during the $\btrin$ call. Thus, $J^*_{\mathsf{c}} = J \setminus J^*$. Further, since all $\vc{v}{i}$s are distinct, $\vc{v}{j}' \in \vc{v}{I} \implies \vc{v}{j}' \not\in \vc{v}{[n] \setminus I}$. Thus, this case can be restated as follows: $\exists \vc{v}{j}' \in \vc{v}{J^*_{\mathsf{c}}}: \vc{v}{j}' \not\in \vc{v}{[n] \setminus I}$. Thus, by applying Lemma \ref{lem:snd-zkprsm} for the second $\zkprsm$ call in $\btrout$ and proceeding as the previous case, we conclude that this case is not possible.
\end{itemize}

\begin{lemma}
	\label{lem:snd-zkpsm}
	If $\querier$ participates in the $\zkpsm$ protocol with common input $(\thegpk, \ve{\gamma}, \ve{v}', I, J)$ and outputs $I^*$, then for all PPT adversaries $\adv$ controlling $(\prover{k})_{k \in [m]}$ and for all $\vc{\gamma}{i} \in \vc{\gamma}{I^*}$, there exists a PPT extractor $\mathcal{E}$ that outputs a $(j^*, r^*)$ such that $\vc{\gamma}{i} = g_1^{ \vc{v}{j^*}' } h_1^{ r^* }  \wedge \vc{v}{j^*}' \in \vc{v}{J}'$.
\end{lemma}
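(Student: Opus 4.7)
The plan is to reduce the lemma to two standard ingredients --- knowledge soundness of the underlying $\mathsf{DPK}$ and unforgeability of Boneh--Boyen signatures under the $n$-SDH assumption --- adapting Camenisch et al.'s single-prover set-membership argument \citep{Camenisch} to our distributed setting in which $\adv$ controls all $m$ mix-server provers. Since a combined rewinding adversary can be built from $\adv$'s interactions with all provers, the usual DPK knowledge extractor applies unchanged.

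I would begin by invoking the DPK extractor on the $i^{\text{th}}$ DPK (for predicate $p_{\mathsf{BB}_i}$) to obtain witnesses $(\vc{v}{i}^*, \vc{r}{i}^*, \vc{b}{i}^*)$ satisfying the predicate. The first conjunct directly supplies the opening $\vc{\gamma}{i} = g_1^{\vc{v}{i}^*} h_1^{\vc{r}{i}^*}$ needed in the lemma's conclusion. Rewriting the second conjunct as $e(\vc{\tilde{\sigma}}{i}, y g_2^{\vc{v}{i}^*}) = e(g_1, g_2)^{\vc{b}{i}^*}$ and, assuming $\vc{b}{i}^* \neq 0$, raising both sides to the $(1/\vc{b}{i}^*)$-th power shows that $(\vc{\tilde{\sigma}}{i})^{1/\vc{b}{i}^*} = g_1^{1/(x + \vc{v}{i}^*)}$ is a valid Boneh--Boyen signature on $\vc{v}{i}^*$ under $\querier$'s fresh verification key $y$.

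Next I would package this as a reduction $\mathcal{B}$ to $n$-SDH in $(\Gone, \Gtwo)$: $\mathcal{B}$ plays $\querier$'s role in $\zkpsm$ against $\adv$, embedding the $n$-SDH instance into $y$ and producing each $\vc{\sigma}{j}' = g_1^{1/(x + \vc{v}{j}')}$ for $j \in J$ via the standard Boneh--Boyen trick \cite{boneh-boyen}, while executing $\querier$'s remaining stage-1 steps (encryption of $\ve{\sigma'}$, publication of $y, \ve{\sigma'}, \ve{\epsilon}_{\sigma}'$) honestly. If the extracted $\vc{v}{i}^*$ did not lie in $\vc{v}{J}'$, the extracted signature would be a forgery on a fresh message, which the Boneh--Boyen reduction converts to an $n$-SDH solution, a contradiction. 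Hence, under $n$-SDH, $\vc{v}{i}^* = \vc{v}{j^*}'$ for some $j^* \in J$, and the extractor $\mathcal{E}$ outputs $(j^*, r^* := \vc{r}{i}^*)$, which by construction meets both requirements of the lemma.

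The most delicate step, which I expect to be the main obstacle, is the degenerate subcase $\vc{b}{i}^* = 0$: the second conjunct then collapses to $e(\vc{\tilde{\sigma}}{i}, y g_2^{\vc{v}{i}^*}) = 1_{\GT}$, which --- with overwhelming probability over $\querier$'s random $x$ --- forces $\vc{\tilde{\sigma}}{i} = 1_{\Gone}$ and imposes no further constraint on $\vc{v}{i}^*$, so no Boneh--Boyen signature can be extracted. I would close this gap either by augmenting the DPK predicate with an additional clause that proves $\vc{b}{i}^* \neq 0$ (e.g., by also proving knowledge of $(\vc{b}{i}^*)^{-1}$, a standard technique), which eliminates the subcase outright, or by leveraging Pedersen binding under DL in $\Gone$ to pin the extracted opening $(\vc{v}{i}^*, \vc{r}{i}^*)$ to the honestly generated sender pair, shifting the mismatch into the binding branch of the surrounding Theorem~\ref{thm:soundness} argument. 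The remainder of the proof is routine algebraic bookkeeping.
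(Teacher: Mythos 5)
Your proof is, in substance, the same as the paper's: the paper disposes of this lemma in one paragraph by observing that $\querier$'s checks for each $\vc{\gamma}{i}$ are exactly those of the single-prover verifier of Camenisch et al.\ against the set $\phi := \vc{v}{J}'$, and then invoking the special soundness of that protocol under $n$-SDH as a black box; you have simply inlined what that black box contains (DPK extraction of $(\vc{v}{i}^*,\vc{r}{i}^*,\vc{b}{i}^*)$, unblinding to a Boneh--Boyen signature, and a reduction to $n$-SDH when $\vc{v}{i}^* \notin \vc{v}{J}'$). So the core of your argument matches the intended one and is correct.

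Your ``delicate step'' is a genuinely good catch rather than an artifact of your write-up: if $\vc{b}{i}^* = 0$ then the second conjunct of $p_{\mathsf{BB}_i}$ forces $\vc{\tilde{\sigma}}{i} = 1_{\Gone}$ (by non-degeneracy of $e$, except on the negligible event $\vc{v}{i}^* = -x$), and with $\vc{\tilde{\sigma}}{i} = 1_{\Gone}$ the predicate is satisfied by $\vc{b}{i}^*=0$ together with an \emph{arbitrary} opening of $\vc{\gamma}{i}$, so no signature can be extracted and membership is not enforced. The paper's proof does not treat this case --- it is silently inherited from the cited single-prover protocol, where the standard remedy is exactly your first fix: either have $\querier$ reject $\vc{\tilde{\sigma}}{i} = 1_{\Gone}$ before running the DPK, or augment the predicate to prove $\vc{b}{i} \neq 0$. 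Note, however, that your second proposed fix does not close the gap: Pedersen binding pins the extracted opening to the sender's honest $(\vc{v}{i},\vc{r}{i})$, but when $\vc{b}{i}^*=0$ nothing ties that value to the set $\vc{v}{J}'$, so there is no second opening to contradict binding with, and Case 1.1 of Theorem~\ref{thm:soundness} would still fail. Stick with the first fix; with it in place the remainder of your argument goes through as written.
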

\begin{proof}
	Note that the verification steps of $\querier$ in $\zkpsm$ for a given $\vc{\gamma}{i} \in \vc{\gamma}{I}$ are exactly those of the verifier of the single-prover ZKP of set membership \cite{Camenisch} for commitment $\vc{\gamma}{i}$ against set $\phi := \vc{v}{J}'$ when extended to our asymmetric pairing setting. Thus, for each $\vc{\gamma}{i} \in \vc{\gamma}{I^*}$, by the special soundness of the single-prover ZKP \citep{Camenisch} under the $n$-Strong Diffie-Hellman assumption in $(\Gone,\Gtwo)$, a PPT extractor $\mathcal{E}'$ can extract $v^*, r^*$ such that $\vc{\gamma}{i} = g_1^{v^*}h_1^{r^*} \wedge v^* \in \phi$. $\mathcal{E}$ simply runs $\mathcal{E}'$, finds $j^*$ such that $v^* = \vc{v}{j^*}'$ and outputs $(j^*, r^*)$.
\end{proof}

\begin{lemma}
	\label{lem:snd-zkprsm}
	If $\querier$ participates in the $\zkprsm$ protocol with common input $(\thegpk, \thezpk, \ve{\gamma}, \ve{\pokcomm}, \ve{\epsilon_r}, \ve{v}', I, J)$ and outputs $J^*$, then for all PPT adversaries $\adv$ controlling $(\prover{k})_{k \in [m]}$ and for all $\vc{v}{j}' \in \vc{v}{J^*}'$, there exists a PPT extractor $\mathcal{E}$ that outputs an $(i^*, r^*)$ such that $\vc{\gamma}{i^*} = g_1^{ \vc{v}{j}' } h_1^{ r^* }  \wedge i^* \in I$.
\end{lemma}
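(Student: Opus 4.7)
The plan is to mirror the structure of Lemma~\ref{lem:snd-zkpsm}: combine the special soundness of the stage-2 $\mathsf{DPK}$ with the unforgeability of BBS+ in the signatures-on-committed-values regime (Section~\ref{sss:sig-committed-value}) and the binding of Pedersen commitments. Concretely, since $\vc{v}{j}' \in \vc{v}{J^*}'$, the DPK in stage 2 of $\zkprsm$ accepted for $j$, so by the standard special soundness of the Fiat--Shamir-transformed $\Sigma$-protocol of Section~\ref{sss:distr-pok}, a PPT extractor recovers scalars $(\vc{b_S}{j}, \vc{b_c}{j}, \vc{b_r}{j}, \delta_0, \delta_1, \delta_2)$ satisfying predicate $p_{\mathsf{BBS+}_j}$.

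Tracing Equation~\ref{eq:dpk2} in reverse, the first two clauses of $p_{\mathsf{BBS+}_j}$ enforce $\delta_1 = \vc{b_S}{j}\vc{b_c}{j}$, and substituting this into the third clause shows that the triple $(S^*, c^*, \mathsf{r}^*) := (\vc{\tilde{S}}{j}' g_1^{-\vc{b_S}{j}},\ \vc{\tilde{c}}{j}' - \vc{b_c}{j},\ \vc{\tilde{r}}{j}' - \vc{b_r}{j})$ satisfies the BBS+ verification equation $e(S^*, y f_2^{c^*}) = e(f_1 g_1^{\vc{v}{j}'} h_1^{\mathsf{r}^*}, f_2)$ under $\querier$'s freshly generated verification key $y = f_2^x$. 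The only messages that $\querier$ ever (implicitly) signs during $\zkprsm$ correspond to the commitments $(\vc{\gamma}{i})_{i \in I}$; for $i \notin I$, $\vc{S}{i} = f_1^0 = 1_{\Gone}$ cannot appear as the first component of any valid BBS+ signature on a message $v$ with $f_1 g_1^{v} h_1^{r} \neq 1_{\Gone}$, which holds for all relevant $(v,r)$ under DL in $\Gone$. Hence by the unforgeability of BBS+ under $n$-SDH, $(S^*, c^*)$ must match some $(\vc{S}{i^*}, \vc{c}{i^*})$ with $i^* \in I$. Equating the two expressions for $S^*$ then yields $f_1 h_1^{\vc{\hat{r}}{i^*}} \vc{\gamma}{i^*} = f_1 g_1^{\vc{v}{j}'} h_1^{\mathsf{r}^*}$, i.e., $\vc{\gamma}{i^*} = g_1^{\vc{v}{j}'} h_1^{\mathsf{r}^* - \vc{\hat{r}}{i^*}}$, so the extractor outputs $(i^*, r^* := \mathsf{r}^* - \vc{\hat{r}}{i^*})$.

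The main obstacle is making the reduction to BBS+ unforgeability precise in this commitment-based setting, since $\querier$ technically signs commitments rather than messages and the mix-servers never open any $\vc{\gamma}{i}$ explicitly. I plan to handle this via an outer reduction that first extracts openings $(\vc{v}{i}, \vc{r}{i})$ from the honestly generated sender NIZKs $\vc{\pokcomm}{i}$ (which are well-formed in $\snd$ because ciphertexts are constructed honestly from adversarial plaintexts), then queries a standard BBS+ signing oracle on the extracted $\vc{v}{i}$'s to simulate $\querier$'s quasi-signatures for $i \in I$, and finally runs the adversary together with the inner DPK extractor. A recovered $(S^*, c^*, \mathsf{r}^*)$ whose $(S^*, c^*)$ does not match any $(\vc{S}{i}, \vc{c}{i})$ yields an outright BBS+ forgery and contradicts $n$-SDH; a match to some $(\vc{S}{i^*}, \vc{c}{i^*})$ but with $\vc{v}{j}' \neq \vc{v}{i^*}$ gives two distinct openings of $\vc{\gamma}{i^*}$ and contradicts Pedersen binding (hence DL in $\Gone$); any remaining case establishes the required $(i^*, r^*)$.
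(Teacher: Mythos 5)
Your overall strategy is the right one and matches the paper's: extract openings $(\vc{v}{i},\vc{r}{i})$ from the sender NIZKs $\vc{\pokcomm}{i}$, simulate $\querier$'s quasi-signatures by querying a standard BBS+ signing oracle on the extracted $\vc{v}{i}$ and re-randomising with $\vc{r}{i}$, extract the blinding factors from the stage-2 DPK, unblind, and argue via Equation \ref{eq:dpk2} (using Pedersen binding to force $\delta_1 = \vc{b_S}{j}\vc{b_c}{j}$) that the unblinded triple verifies as a BBS+ signature on $\vc{v}{j}'$ under $y$. The gap is in your final case analysis, which is keyed on whether the recovered components $(S^*,c^*)$ \emph{match} an issued pair $(\vc{S}{i},\vc{c}{i})$. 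The claim that a non-matching $(S^*,c^*,\mathsf{r}^*)$ ``yields an outright BBS+ forgery'' (and, earlier, that unforgeability forces $(S^*,c^*)$ to equal some issued pair) is not a consequence of the cited security notion: BBS+ is proved existentially unforgeable under adaptively chosen message attacks, which permits the adversary to output a \emph{new} signature on a message that \emph{was} queried. If $\vc{v}{j}'$ happens to equal some queried $\vc{v}{i}$, a valid signature on $\vc{v}{j}'$ with fresh components is not a forgery, so your first case produces neither a contradiction nor the required tuple. Ruling this out would need strong unforgeability, which the paper does not assume.

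The paper avoids this by case-splitting on the \emph{message} rather than on the signature components: for each $\vc{v}{j}' \in \vc{v}{J^*}'$, if $\vc{v}{j}' = \vc{v}{i}$ for some $i \in I$ the extractor simply outputs $(i,\vc{r}{i})$ from the already-extracted opening of $\vc{\gamma}{i}$ (no appeal to the recovered signature is needed), and only when $\vc{v}{j}'$ differs from every queried $\vc{v}{i}$ is the unblinded signature a forgery on a fresh message, contradicting $n$-SDH. Your proposal is repairable by reorganising the cases this way; note also that your case (b) is not actually a contradiction but a success case, since $S^* = \vc{S}{i^*}$ and $c^* = \vc{c}{i^*}$ already give $\vc{\gamma}{i^*} = g_1^{\vc{v}{j}'}h_1^{\mathsf{r}^*-\vc{\hat{r}}{i^*}}$ with $i^* \in I$, which is exactly what the lemma asks for. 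As written, however, the reduction to unforgeability does not go through.
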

\begin{proof}
	\begin{figure}
		\scalebox{0.8}{
			\begin{tabular}{rll}
				1 & \multicolumn{2}{l}{\underline{$\mathcal{E}(\thegpk, \thezpk, \ve{\gamma}, \ve{\pokcomm}, \ve{v}', I, J)$:}} \\
				2 & $\mathcal{E}$: & \textbf{for each $i \in I$:} $(\vc{v}{i}, \vc{r}{i}) \leftarrow \mathcal{E}_1(\vc{\gamma}{i}, \vc{\pokcomm}{i})$ \\
				3 & $\mathcal{E} \rightarrow \mathcal{C}$: & $(\vc{v}{i})_{i \in I}$ (BBS+ signature queries) \\
				4 & $\mathcal{C} \rightarrow \mathcal{E}$: & $y, (\vc{\sigma}{i})_{i \in I}$ \\
				5 & $\mathcal{E}$: & \textbf{for each $i \in [n]$:} \\
				6 &	& \quad \textbf{if $i \in I$:} $(\vc{S}{i}, \vc{c}{i}, \vc{\mathsf{r}}{i}) := \vc{\sigma}{i}$; \quad $\vc{\hat{r}}{i} \leftarrow \vc{\mathsf{r}}{i} - \vc{r}{i}$ \\
				7 &	& \quad \textbf{else:} $\vc{S}{i} \leftarrow f_1^0$; $\vc{c}{i} \xleftarrow{\$} \Zq$; $\vc{\hat{r}}{i} \xleftarrow{\$} \Zq$ \\
				8 &	& \textbf{endfor} \\
				9 & & $(\ve{\epsilon}_S, \ve{\epsilon}_c, \ve{\epsilon}_{\hat{r}}) \leftarrow (\thege(\thegpk, \ve{S}), \theze(\thezpk, \ve{c}),$ \\ 
				& & \quad \quad $\theze(\thezpk, \ve{\hat{r}}))$ \\
				10 & $\mathcal{E} \rightarrow \adv$: & $y, \ve{\hat{\sigma}}:=(\ve{S}, \ve{c}, \ve{\hat{r}}), \ve{\epsilon}_{\hat{\sigma}}:=(\ve{\epsilon}_S, \ve{\epsilon}_c, \ve{\epsilon}_{\hat{r}})$ \\
				11 & $\adv \rightarrow \mathcal{E}$: & $\ve{\tilde{\sigma}}' := (\ve{\tilde{S}}', \ve{\tilde{c}}', \ve{\tilde{r}}')$ \\ 
				12 & $\mathcal{E}$: & \textbf{for each $j \in J$:} \\ 
				13 & & \quad Compute $\mathfrak{g}_1, \mathfrak{g}_2, \mathfrak{h}_1, \mathfrak{h}_2, \mathfrak{h}_3, \mathfrak{z}_1,\mathfrak{z}_2$ from $\vc{\tilde{\sigma}}{j}':=(\vc{\tilde{S}}{j}', \vc{\tilde{c}}{j}', \vc{\tilde{r}}{j}')$ \\
				14 & & \quad $(\vc{b_S}{j}, \vc{b_c}{j}, \vc{b_r}{j}, \delta_0, \delta_1, \delta_2) \leftarrow$ $\mathcal{E}_{2}(\mathfrak{g}_1, \mathfrak{g}_2, \mathfrak{h}_1, \mathfrak{h}_2, \mathfrak{h}_3, \mathfrak{z}_1, \mathfrak{z}_2)$ \\
				15 & & \quad $\vc{\sigma}{j}':=(\vc{\tilde{S}}{j}' g_1^{ -\vc{b_S}{j} }, \vc{\tilde{c}}{j}' - \vc{b_c}{j}, \vc{\tilde{r}}{j}' - \vc{b_r}{j})$ \\
				16 & & \quad \textbf{for each $i \in I$:} \\
				17 & & \quad \quad \textbf{if $(\vc{v}{i} = \vc{v}{j}')$:} output $(i,\vc{r}{i})$ \\
				18 & & \quad \textbf{endfor} \\
				19 & & \quad output $(\vc{v}{j}', \vc{\sigma}{j}')$ \\
				20 & & \textbf{endfor}
			\end{tabular}
		}
		\caption{Extractor $\mathcal{E}$.}
		\label{fig:extractor-rev-set-membership}
	\end{figure}
	
	We construct an algorithm $\mathcal{E}$ that for each $\vc{v}{j}' \in \vc{v}{J^*}'$ either outputs a desired tuple $(i^*,r^*)$ or forges a BBS+ signature, using extractors $\mathcal{E}_1$ for the NIZK proofs of knowledge of commitment openings $\vc{\rho}{\gamma_i}$ and $\mathcal{E}_2$ for the DPKs for $p_{\mathsf{BBS}+_j}$. On one end, $\mathcal{E}$ interacts with adversary $\adv$ controlling $(\mixer{k})_{k \in [m]}$ in the $\zkprsm$ protocol; on the other, with the challenger $\mathcal{C}$ of the BBS+ signature unforgeability game. See Figure \ref{fig:extractor-rev-set-membership}.
	
	For each $i \in I$, $\mathcal{E}$ first extracts opening $(\vc{v}{i}, \vc{r}{i})$ of commitment $\vc{\gamma}{i}$ from the NIZK proof $\vc{\pokcomm}{i}$ using extractor $\mathcal{E}_1$ (line 2). It then obtains BBS+ public key $y$ and signatures for each $(\vc{v}{i})_{i \in I}$ from the BBS+ challenger $\mathcal{C}$ (lines 3-4), derives quasi-signatures from them using $\vc{r}{i}$ (line 6) and forwards the quasi-signatures and their encrypted versions to $\adv$, along with invalid quasi-signatures for $i\not\in I$ --- similar to $\querier$ (lines 5-10). $\adv$ responds with blinded permuted signatures $\ve{\tilde{\sigma}}'$ (line 11), as $(\mixer{k})_{k \in [m]}$ do in the real protocol at the end of stage 1. In stage 2, for each $j \in J$, $\mathcal{E}$ extracts blinding factors for the blinded signature $\vc{\tilde{\sigma}}{j}'$ using extractor $\mathcal{E}_2$ (lines 13-14), from which it obtains an unblinded signature $\vc{\sigma}{j}'$ (line 15). $\mathcal{E}$ then attempts to find an opening $\vc{v}{i}$ extracted by $\mathcal{E}_1$ for some commitment $\vc{\gamma}{i} \in \vc{\gamma}{I}$ such that $\vc{v}{i}=\vc{v}{j}'$, and returns the corresponding tuple $(i,\vc{r}{i})$ (lines 16-18). If no such $\vc{v}{i}$ exists, it outputs the message-signature tuple $(\vc{v}{j}', \vc{\sigma}{j}')$ (line 19). 
	
	Note that $\querier$ produced some $J^*$ and did not abort. In this case, the view produced by $\mathcal{E}$ to $\adv$ is identical to that produced by $\querier$ to $(\mixer{k})_{k \in [m]}$. Further, for some $\vc{v}{j}' \in \vc{v}{J^*}'$, if $\vc{v}{i}=\vc{v}{j}'$ for some $i\in I$, $(i,\vc{r}{i})$ output in line 17 is a desired tuple since $\vc{\gamma}{i}=g_1^{\vc{v}{i}}h_1^{\vc{r}{i}} = g_1^{\vc{v}{j}'}h_1^{\vc{r}{i}}$ and $i \in I$ (the first equality follows by the soundness of the proof of knowledge of commitment openings; the second because $\vc{v}{i}=\vc{v}{j}'$). If no such $\vc{v}{i}$ exists, we show that the tuple $(\vc{v}{j}', \vc{\sigma}{j}')$ output in line 19 is a valid BBS+ signature forgery:
	\begin{itemize}[leftmargin=*]
		\item[-] Since for all $i\in I$, $\vc{v}{i} \neq \vc{v}{j}'$, a BBS+ signature for $\vc{v}{j}'$ was not queried from $\mathcal{C}$ in line 3. \label{enum:fresh-sig}
		\item[-] Since $\vc{v}{j}' \in \vc{v}{J^*}'$, the DPK for $p_{\mathsf{BBS}+_j}$ must have passed. Thus, by the soundness of DPKs: \label{enum:valid-sig}
		\begin{align}
		\mathfrak{z}_1 &= \mathfrak{h}_2^{ \vc{b_S}{j} }\mathfrak{h}_3^{\delta_{0}} \label{eq:z} \\
		\mathfrak{z}^{ \vc{b_c}{j} }_1 &= \mathfrak{h}_2^{\delta_{1}} \mathfrak{h}_3^{\delta_{2}}, \label{eq:z-power-c} \\
		\mathfrak{z}_2 &= \mathfrak{g}_1^{ \vc{b_c}{j} } \mathfrak{g}_2 ^{ \vc{b_S}{j} } \mathfrak{h}_1 ^ { \vc{b_r}{j} } \mathfrak{h}_2 ^ {\delta_{1}} \label{eq:y}
		\end{align}
		From Equations \ref{eq:z} and \ref{eq:z-power-c}, $\mathfrak{h}_2^{ \vc{b_S}{j} \vc{b_c}{j} }\mathfrak{h}_3^{\delta_{0} \vc{b_c}{j} }  = \mathfrak{h}_2^{\delta_{1}} \mathfrak{h}_3^{\delta_{2}}$. It must be that $\delta_{1} = \vc{b_S}{j}\vc{b_c}{j}$, otherwise two different openings $(\delta_{1}, \delta_{2})$ and $(\vc{b_S}{j}\vc{b_c}{j}, \delta_{0}\vc{b_c}{j})$ for the Pedersen commitment $\mathfrak{z}^{ \vc{b_c}{j} }_1$ can be produced. Equation \ref{eq:y} thus implies $\mathfrak{z}_2 = \mathfrak{g}_1^{ \vc{b_c}{j} } \mathfrak{g}_2 ^{\vc{b_S}{j}} \mathfrak{h}_1 ^ {\vc{b_r}{j}} \mathfrak{h}_2 ^ {\vc{b_S}{j}\vc{b_c}{j}}$, which implies that $\vc{\sigma}{j}':=(\vc{\tilde{S}}{j}'g_1^{-\vc{b_S}{j}}, \vc{\tilde{c}}{j}'-\vc{b_c}{j}, \vc{\tilde{r}}{j}'-\vc{b_r}{j})$ satisfies the BBS+ signature verification equation on message $\vc{v}{j}'$ under public key $y$: $e(\vc{\tilde{S}}{j}'g_1^{-\vc{b_S}{j}}, yf_2^{\vc{\tilde{c}}{j}'-\vc{b_c}{j}}) = e(f_1 g_1^{\vc{v}{j}'} h_1^{\vc{\tilde{r}}{j}'-\vc{b_r}{j}}, f_2)$ (see Equation \ref{eq:dpk2}; Lemma \ref{lem:compl-zkprsm}).
	\end{itemize}
	Since at most $n$ signature queries could have been made, forging a BBS+ signature is not possible under the $n$-Strong Diffie Hellman assumption in $(\Gone,\Gtwo)$ \cite{bbsplus-sig}. Thus, for each $\vc{v}{j}' \in \vc{v}{J^*}'$, some $i \in I$ such that $\vc{v}{i}=\vc{v}{j}'$ must exist and a desired tuple $(i,\vc{r}{i})$ must have been produced.
\end{proof}

\subsection{Proof for Theorem \ref{thm:secrecy}}
\label{pf:secrecy}

We prove the theorem by considering the following sequence of hybrid experiments (the complete hybrid experiments are shown in Figures \ref{fig:e1} to \ref{fig:e19}; solid boxes denote lines changed in a given hybrid and dashed boxes denote lines that change in the next hybrid):
\begin{itemize}[leftmargin=*]
	\item[-] $E_1$ (Figure \ref{fig:e1}): $E_1$ is the original secrecy game $\secr$ instantiated for our protocol.
	
	\item[-] $E_2$ (Figure \ref{fig:e2}): In $E_2$, NIZKPKs $\vc{\pokcomm}{i_0}$ and $\vc{\pokcomm}{i_1}$ are simulated. $E_2$ is indistinguishable from $E_1$ in the random oracle model because of the ZK property of the NIZK proof.
	
	\item[-] $E_3$ (Figure \ref{fig:e3}): In $E_3$, shares of $\vc{v}{i}, \vc{r}{i}$ for $i \in \{i_0,i_1\}$ are drawn as $(\vcs{v}{i}{k},\vcs{r}{i}{k})_{k \neq k^*} \xleftarrow{\$} \Zq$ and $\vcs{v}{i}{k^*} \leftarrow \vc{v}{i} - \sum_{k \neq k^*} \vcs{v}{i}{k}$, $\vcs{r}{i}{k^*} \leftarrow \vc{r}{i} - \sum_{k \neq k^*} \vcs{r}{i}{k}$. $E_3$ is indistinguishable from $E_2$ because the additive secret sharing is information-theoretically secure.
	
	\item[-] $E_4$ (Figure \ref{fig:e4}): In $E_4$, instead of decrypting $\vcs{\mathsf{ev}}{i}{k^*}$, $\vcs{\mathsf{er}}{i}{k^*}$ for $i \in \{i_0,i_1\}$ during $\mix$, $\vcs{v}{i}{k^*}$, $\vcs{r}{i}{k^*}$ are obtained by directly using their corresponding values in the $\enc$ call for $i_0/i_1$. $E_4$ is indistinguishable from $E_3$ because of correctness of decryption.
	
	\item[-] $E_5$ (Figure \ref{fig:e5}): In $E_5$, instead of decrypting $\vcs{\mathsf{ev}}{i}{k^*}$, $\vcs{\mathsf{er}}{i}{k^*}$ for $i \in [n] \setminus \{i_0,i_1\}$ during $\mix$, $\vcs{v}{i}{k^*}$, $\vcs{r}{i}{k^*}$ are obtained by rerunning the $\enc$ algorithm on input $\vc{v}{i}$ for sender $S_i$ using the randomnesses from the random tape issued to $\adv$. $E_5$ is indistinguishable from $E_4$ because of correctness of decryption.
	
	\item[-] $E_6$ (Figure \ref{fig:e6}): In $E_6$, encrypted shares $\vcs{\mathsf{ev}}{i_0}{k^*},\vcs{\mathsf{er}}{i_0}{k^*},\vcs{\mathsf{ev}}{i_1}{k^*},\vcs{\mathsf{er}}{i_1}{k^*}$ in the $\enc$ calls for $i_0,i_1$ are replaced by encryptions of $0$. $E_6$ is indistinguishable from $E_5$ by the IND-CPA security of $\pke$.
	
	\item[-] $E_7$ (Figure \ref{fig:e7}): In $E_7$, responses of $\mixer{k}^*$ in $\thegd$ and $\thezd$ protocols are simulated by first obtaining the correct decryption $m$ of the given ciphertext $c$ using ideal decryption oracles $\thegfunc$, $\thezfunc$ and then simulating $\mixer{k}^*$'s responses using $c,m$ and secret keys for $(\mixer{k})_{k\neq k^*}$. $E_7$ is indistinguishable from $E_6$ by the security of the threshold decryption protocols of $\theg$ and $\thez$. For $\theg$, it follows straightforwardly because given $c=(c_0,c_1) \in \mathbb{C}(\theg)$ and $m=c_1 / \prod_{k \in [m]} c_0^{\thegsk{k}}$ obtained from $\thegfunc$, the simulator can output $\frac{c_1}{m\prod\limits_{k \neq k^*}c_0^{\thegsk{k}}}$ $=$ $\frac{c_1\prod\limits_{k \in [m]}c_0^{\thegsk{k}}}{c_1 \prod\limits_{k \neq k^*}c_0^{\thegsk{k}}}$ $=$ $c_0^{\thegsk{k^*}}$, which is exactly the decryption share output by the real $\mixer{k^*}$. For $\thez$, Damg{\aa}rd et al. provide a proof (see Theorem 4; \citep{dj-generalisation}).
    
    \item[-] $E_8$ (Figure \ref{fig:e8}): In $E_8$, instead of using $\ve{v}'$ given by $\thezfunc$ during $\mix$, it is set as $\ve{v}' \leftarrow (\vc{v}{\pi(j)})_{j \in [n]}$. Here, $\pi \leftarrow \permfunc{m} \circ \dots \circ \permfunc{1}$ is obtained using the experimenter-selected $\permfunc{k^*}$ and $(\permfunc{k})_{k \neq k^*}$ obtained from the random tape issued to $\adv$, $(\vc{v}{i})_{i \in \{i_0,i_1\}}$ are obtained from their values in $\enc$ calls for $i_0,i_1$, and $(\vc{v}{i})_{i \in [n]\setminus \{i_0,i_1\}}$ are obtained from $\adv$'s input for senders $(S_i)_{i \in [n] \setminus \{i_0,i_1\}}$. $E_8$ is indistinguishable from $E_7$ by the correctness of $\shuffle$ and $\thezd$ protocols.
    
    \item[-] $E_9$ (Figure \ref{fig:e9}): In $E_9$, instead of using $\ve{\tilde{\sigma}}$ given by $\thegfunc$ during $\zkpsm$, it is set as $\ve{\tilde{\sigma}} \leftarrow ((\vc{\sigma}{\pi^{-1}(i)}')^{\vc{b}{i}})_{i \in [n]}$. Here $\ve{\sigma}'$ denotes BB signatures sent by $\adv$ at the beginning of $\zkpsm$, $\pi$ is as obtained in $E_8$ and $\vc{b}{i} := \sum_{k \in [m]} \vcs{b}{i}{k}$ is obtained using the experimenter-selected $\vcs{b}{i}{k^*}$ and $(\vcs{b}{i}{k})_{k \neq k^*}$ obtained from $\adv$'s random tape. $E_9$ is indistinguishable from $E_8$ by the correctness of $\shuffle$ and $\thegd$ protocols.
    
    \item[-] $E_{10}$ (Figure \ref{fig:e10}): In $E_{10}$, instead of using $\ve{\tilde{S}}',\ve{\tilde{c}}'', \ve{\tilde{r}}''$ given by $\thegfunc$ and $\thezfunc$ during $\zkprsm$, they are set as 
	$\ve{\tilde{S}}'$ $\leftarrow$ $(\vc{S}{\pi(j)}g_1^{\vc{b_S}{j}})_{j \in [n]}$, $\ve{\tilde{c}}''$ $\leftarrow$ $(\vc{c}{\pi(j)} + \vc{b_c}{j}')_{j \in [n]}$, $\ve{\tilde{r}}'' \leftarrow (\vc{\hat{r}}{\pi(j)} + \vc{r}{\pi(j)} + \vc{b_r}{j}')_{j \in [n]}$. Here, $(\ve{S}, \ve{c}, \ve{\hat{r}})$ denote BBS+ quasi-signatures sent by $\adv$ at the beginning of $\zkprsm$, $\pi$ is as obtained in $E_8$, 
	$\vc{b_S}{j} \leftarrow \sum_{k \in [m]} \vcs{b_S}{j}{k}$, $\vc{b_c}{j}' \leftarrow \sum_{k \in [m]} \vcs{b_c'}{j}{k} \bmod N$, 
	$\vc{b_r}{j}' \leftarrow \sum_{k \in [m]} \vcs{b_r'}{j}{k} \bmod N$ are obtained using experimenter-selected 
	$\vcs{b_S}{j}{k^*}, \vcs{b_c'}{j}{k^*}:=\vcs{b_c}{j}{k^*} + q \vcs{\chi_c}{j}{k^*}, \vcs{b_r'}{j}{k^*}:=\vcs{b_r}{j}{k^*} + q \vcs{\chi_r}{j}{k^*}$ and 
	$(\vcs{b_S}{j}{k}, \vcs{b_c'}{j}{k}:= \vcs{b_c}{j}{k} + q \vcs{\chi_c}{j}{k}, \vcs{b_r'}{j}{k}:=\vcs{b_r}{j}{k} + q \vcs{\chi_r}{j}{k})_{k \neq k^*}$ computed using $\adv$'s random tape, 
	$(\vc{r}{i})_{i \in \{i_0,i_1\}}$ are obtained from the $\enc$ calls for $i_0,i_1$ and $(\vc{r}{i})_{i \in [n]\setminus \{i_0,i_1\}}$ are obtained from the 
	random tape given to $\adv$. $E_{10}$ is indistinguishable from $E_9$ by the correctness of $\shuffle$, $\thegd$ and $\thezd$ protocols.

	\item[-] $E_{11}$ (Figure \ref{fig:e11}): In $E_{11}$, all encryptions/re-encryptions under $\thez$ are replaced by encryptions of $0$. $E_{11}$ is indistinguishable to $E_{10}$ by the IND-CPA security of $\thez$ under the DCR assumption \citep{paillier}.
	
	\item[-] $E_{12}$ (Figure \ref{fig:e12}): In $E_{12}$, all encryptions/re-encryptions under $\theg$ are replaced by encryptions of $g_1^0$. $E_{12}$ is indistinguishable to $E_{11}$ by the IND-CPA security of $\theg$ under the DDH assumption.
	
	\item[-] $E_{13}$ (Figure \ref{fig:e13}): In $E_{13}$, $\mixer{k^*}$'s responses during the DPKs in $\zkpsm$ are simulated using shares of $(\mixer{k})_{k\neq k^*}$, as follows:
	\begin{itemize}[leftmargin=*]
		\item[-] For $i \in \{i_0,i_1\}$, $\mixer{k^*}$'s responses are simulated completely if $\pi^{-1}(i_0) \in J$; otherwise only the responses corresponding to the first equation $\vc{\gamma}{i} = g_1^{\sum_{k \in [m]} \vcs{v}{i}{k}}h_1^{\sum_{k \in [m]} \vcs{r}{i}{k}}$ are simulated. For simulating the DPK completely, $(\vcs{v}{i}{k},\vcs{r}{i}{k})_{k \neq k^*}$ are obtained from their values in the $\enc$ calls for $i_0/i_1$ and $(\vcs{b}{i}{k})_{k \neq k^*}$ are chosen as the ones obtained in $E_9$. For simulating only the first component of the DPK, $\mixer{k^*}$'s responses corresponding to the first equation are simulated using $(\vcs{v}{i}{k}, \vcs{r}{i}{k})_{k \neq k^*}$ obtained from the $\enc$ calls for $i_0/i_1$, but $\mixer{k^*}$'s strategy is followed for the second component except that a freshly sampled $\vcs{b}{i}{k^*}$ is used instead.
		\item[-] For $i \in [n] \setminus \{i_0,i_1\}$, the DPK is not simulated, i.e., values $\vcs{v}{i}{k^*},\vcs{r}{i}{k^*},\vcs{b}{i}{k^*}$ are used exactly as $\mixer{k^*}$ does.
	\end{itemize}
	$E_{13}$ is indistinguishable from $E_{12}$ because $a)$ by the correctness of the blinded signatures, the DPK for $i_0$ passes iff $j_0 = \pi^{-1}(i_0) \in J$ (otherwise the blinded signature is invalid), $b)$ by the assert condition in the $\otrin$ call, the DPK for $i_1$ passes iff the DPK for $i_0$ passes and $c)$ even when these DPKs do not pass, their first component always passes.

	\item[-] $E_{14}$ (Figure \ref{fig:e14}): In $E_{14}$, $\mixer{k^*}$'s responses during the DPKs in $\zkprsm$ are simulated using shares of $(\mixer{k})_{k\neq k^*}$, as follows:
	\begin{itemize}[leftmargin=*]
		\item[-] For $j \in \{j_0,j_1\}$ where $j_0,j_1$ are s.t. $v_0 = \vc{v}{j_0}'$ and $v_1 = \vc{v}{j_1}'$, the DPK is simulated completely if $\pi(j_0) \in I$; otherwise only its first two components are simulated. For simulating the DPK completely, $(\vcs{b_S}{j}{k},$ $\vcs{b_c}{j}{k},$ $\vcs{b_r}{j}{k})_{k \neq k^*}$  are chosen as the ones obtained in $E_{10}$, $(\share{\delta}{k}_0)_{k \neq k^*}$ are obtained from the random tape issued to $\adv$ and $(\share{\delta}{k}_1, \share{\delta}{k}_2)_{k \neq k^*}$ are obtained by applying the $\mult$ algorithm to inputs $(\share{\delta}{k}_0, \vcs{b_S}{j}{k})$ and $(\share{\delta}{k}_0, \vcs{b_c}{j}{k})$ respectively. For simulating only the first two components, $\mixer{k^*}$'s responses corresponding to the first two equations are simulated using $(\vcs{b_S}{j}{k},$ $\vcs{b_c}{j}{k}, \share{\delta}{k}_0, \share{\delta}{k}_1, \share{\delta}{k}_2)_{k \neq k^*}$ obtained as above whereas for the third equation, $\mixer{k^*}$'s strategy is followed except with a freshly sampled $\vcs{b_r}{j}{k^*}$.
		\item[-] For $j \in [n] \setminus \{j_0,j_1\}$, the DPK is not simulated, i.e., values $(\vcs{b_S}{j}{k^*},$ $\vcs{b_c}{j}{k^*},$ $\vcs{b_r}{j}{k^*},\share{\delta}{k^*}_0, \share{\delta}{k^*}_1, \share{\delta}{k^*}_2)$ are used exactly as $\mixer{k^*}$ does.
	\end{itemize}
	$E_{14}$ is indistinguishable from $E_{13}$ because $a)$ by the correctness of the blinded signatures, the DPK for $j_0$ passes iff $i_0 = \pi(j_0) \in I$ (otherwise the blinded signature is invalid), $b)$ by the assert condition in the $\otrout$ call, the DPK for $j_1$ passes iff the DPK for $j_0$ passes, and $c)$ even when these DPKs do not pass, their first two equations do pass.

	\item[-] $E_{15}$ (Figure \ref{fig:e15}): In $E_{15}$, $\share{\mathfrak{z}}{k^*}_1$ in $\zkprsm$ is chosen uniformly at random from $\Zq$. $E_{15}$ is indistinguishable from $E_{14}$ because of perfect blinding using $\share{\delta}{k^*}_0$.
	
	\item[-] $E_{16}$ (Figure \ref{fig:e16}): In $E_{16}$, shares $\vcs{v}{i_0}{k^*},\vcs{r}{i_0}{k^*},\vcs{v}{i_1}{k^*},\vcs{r}{i_1}{k^*}$ during the $\enc$ calls for $i_0,i_1$ are set as $0$. $E_{16}$ is indistinguishable from $E_{15}$ because these shares are not used anymore.

	\item[-] $E_{17}$ (Figure \ref{fig:e17}): In $E_{17}$, commitments $\vc{\gamma}{i_0},\vc{\gamma}{i_1}$ in above $\enc$ calls commit $0$. $E_{17}$ is indistinguishable from $E_{16}$ because Pedersen commitments are perfectly hiding and the committed values are not used anywhere.

	\item[-] $E_{18}$ (Figure \ref{fig:e18}): In $E_{18}$, $\vc{\tilde{\sigma}}{i_0}$, $\vc{\tilde{\sigma}}{i_1}$ during $\zkpsm$ are replaced by randomly drawn elements from $\Gone$. $E_{18}$ is indistinguishable from 
	$E_{17}$ because $\vcs{b}{i_0}{k^*}, \vcs{b}{i_1}{k^*}$ used in computing $\vc{\tilde{\sigma}}{i_0}$, $\vc{\tilde{\sigma}}{i_1}$ are chosen uniformly at random from $\Zq$.

	\item[-] $E_{19}$ (Figure \ref{fig:e19}): In $E_{19}$, $\vc{\tilde{S}}{j_0}'$, $\vc{\tilde{S}}{j_1}'$ during $\zkprsm$ are replaced by randomly drawn elements from $\Gone$; $\vc{\tilde{c}}{j_0}''$, $\vc{\tilde{c}}{j_1}''$ are computed as $\vc{\tilde{c}}{j_0}''' + \sum_{k \neq k^*} \vcs{b_c'}{j_0}{k}$, $\vc{\tilde{c}}{j_1}''' + \sum_{k \neq k^*} \vcs{b_c'}{j_1}{k}$ where $\vc{\tilde{c}}{j_0}''',\vc{\tilde{c}}{j_1}''' \xleftarrow{\$} \mathbb{Z}_{q+q^2}$ and $(\vcs{b_c'}{j_0}{k},\vcs{b_c'}{j_1}{k})_{k \neq k^*}$ are the ones obtained in $E_{10}$; and $\vc{\tilde{r}}{j_0}''$, $\vc{\tilde{r}}{j_1}''$ are computed as $\vc{\tilde{r}}{j_0}''' + \sum_{k \neq k^*} \vcs{b_r'}{j_0}{k}, \vc{\tilde{r}}{j_1}''' + \sum_{k \neq k^*} \vcs{b_r'}{j_1}{k}$ where $\vc{\tilde{r}}{j_0}''',\vc{\tilde{r}}{j_1}''' \xleftarrow{\$} \mathbb{Z}_{2q+q^2}$ and $(\vcs{b_r'}{j_0}{k},\vcs{b_r'}{j_1}{k})_{k \neq k^*}$ are the ones obtained in $E_{10}$. $E_{19}$ is indistinguishable from $E_{18}$ because $\vcs{b_S}{j_0}{k^*},\vcs{b_S}{j_1}{k^*}$ used in computing $\vc{\tilde{S}}{j_0}'$, $\vc{\tilde{S}}{j_1}'$ are chosen uniformly at random from $\Zq$ and $\vcs{b_c'}{j_0}{k^*},\vcs{b_c'}{j_1}{k^*}$ (resp. $\vcs{b_r'}{j_0}{k^*},\vcs{b_r'}{j_1}{k^*}$) used in computing $\vc{\tilde{c}}{j_0}''$, $\vc{\tilde{c}}{j_1}''$ (resp. $\vc{\tilde{r}}{j_0}''$, $\vc{\tilde{r}}{j_1}''$) are chosen uniformly at random from an exponentially larger space than the corresponding messages $\vc{c}{\pi(j_0)},\vc{c}{\pi(j_1)}$ (resp. 
	 $\vc{\mathsf{r}}{\pi(j_0)},\vc{\mathsf{r}}{\pi(j_1)}$).
\end{itemize}
In $E_{19}$, $\adv$'s view is identical for both $b=0$ and $b=1$ because in this experiment only $\ve{v}'$ sent at the end of the $\mix$ protocol depends on $v_{i_0},v_{i_1}$ and $\ve{v}'$ is identically distributed for any choice of $b$ because $\permfunc{k^*}$ is uniformly distributed over $\perm{n}$. Thus, by a standard hybrid argument, the theorem holds.

\subsection{Proof for Theorem \ref{thm:output-secrecy}}
\label{pf:output-secrecy}

In this case, the proof proceeds exactly as above except that in experiment $E_{13}$, DPKs for both $i_0,i_1$ are always simulated completely and in experiment $E_{14}$, DPKs for both $j_0,j_1$ are always simulated completely. $E_{13}$ and $E_{14}$ are indistinguishable from their previous experiments because in this case, even though $\adv$ is not restricted by the constraints at $\otrin/\otrout$ calls, $\adv$ does not control $\querier$ anymore and thus does not even learn whether a DPK passed or not by the property of the DPKs (see Section \ref{sss:distr-pok}).

\subsection{Secrecy in the malicious model}
\label{pf:secrecy-general}

In this section, we sketch a proof that when the additional steps of Appendix \ref{honest-but-curious-to-full-zk} are applied then our construction protects secrecy (Definition \ref{def:secr}) even against general malicious adversaries. We assume that secure MPC protocols are used for Beaver triples generation and for $\theg$ and $\thez$ distributed key generation. 

The proof essentially follows the structure of the proof in Appendix \ref{pf:secrecy} with the following differences:
\begin{itemize}[leftmargin=*]
	\item[-] In $E_5$, $\vcs{v}{i}{k^*}$, $\vcs{r}{i}{k^*}$ for $i \in [n] \setminus \{i_0,i_1\}$ are obtained by extracting them from proofs $(\vs{\rho}{k}_{\mathsf{ev}_i},\vs{\rho}{k}_{\mathsf{er}_i})_{k \in [m]}$; the corresponding proofs for $i \in \{i_0,i_1\}$ are simulated. $E_5$ is indistinguishable from $E_4$ by knowledge soundness of $\vs{\rho}{k}_{\mathsf{ev}_i},\vs{\rho}{k}_{\mathsf{er}_i}$ for $i \in [n] \setminus \{i_0,i_1\}$ and their zero-knowledgeness for $i \in \{i_0,i_1\}$.
	
	\item[-] In $E_8$, $(\permfunc{k})_{k \neq k^*}$ are obtained by extracting them from proofs of shuffle produced by $(\mixer{k})_{k \neq k^*}$ and $(\vc{v}{i})_{i \in [n]\setminus \{i_0,i_1\}}$ are obtained by extracting them from proofs $\vc{\rho}{\epsilon_i}$ for $i \in [n] \setminus \{i_0,i_1\}$. Proofs of shuffle for $\mixer{k^*}$ and proofs $\vc{\rho}{\epsilon_i}$ for $i \in \{i_0,i_1\}$ are simulated. $E_8$ is indistinguishable to $E_7$ by $a)$ knowledge soundness of proofs of shuffles for $(\mixer{k})_{k \neq k^*}$, $b)$ knowledge soundness of $\vc{\rho}{\epsilon_i}$ for $i \in [n]\setminus \{i_0,i_1\}$, $c)$ zero-knowledgeness of proofs of shuffle for $\mixer{k^*}$, $d)$ zero-knowledgeness of $\vc{\rho}{\epsilon_i}$ for $i \in \{i_0,i_1\}$, and $e)$ correctness of threshold decryption protocol $\thezd$.

	\item[-] In $E_9$, $(\vcs{b}{i}{k})_{k \neq k^*}$ are obtained by extracting them from $(\vcs{\rho}{\tilde{\sigma}_i}{k})_{k \neq k^*}$. The corresponding proofs for $k=k^*$ are simulated. $E_9$ is indistinguishable from $E_8$ by $a)$ verification of correct construction of $\vc{\epsilon}{\sigma}'$ from $\ve{\sigma}'$, $b)$ knowledge soundness of proofs of shuffle that prove that each $(\mixer{k})_{k \neq k^*}$ shuffled encrypted signatures in this phase correctly using the inverse of the permutation $\permfunc{k}$ it applied during mixing, $c)$ knowledge soundness of $(\vcs{\rho}{\tilde{\sigma}_i}{k})_{k \neq k^*}$, $d)$ zero-knowledgeness of proofs of shuffle for $k = k^*$, $e)$ zero-knowledgeness of $\vcs{\rho}{\tilde{\sigma}_i}{k^*}$, and $f)$ correctness of threshold decryption protocol $\thegd$.   

	\item[-] In $E_{10}$, $(\vcs{b_S}{j}{k},\vcs{b_c'}{j}{k},\vcs{b_r'}{j}{k})_{k \neq k^*}$ are obtained by extracting them from $(\vcs{\rho}{bS_j}{k},$ $\vcs{\rho}{bc_j}{k},$ $\vcs{\rho}{br_j}{k})$ respectively and $\vc{r}{\pi(j)}$ is obtained by extracting them from $\vc{\rho}{\epsilon_{r_{\pi(j)}}}$. Proofs $(\vcs{\rho}{bS_j}{k^*},$ $\vcs{\rho}{bc_j}{k^*},$ $\vcs{\rho}{br_j}{k^*})$ and $\vc{\rho}{\epsilon_{r_{i_0}}},\vc{\rho}{\epsilon_{r_{i_1}}}$ are simulated. $E_{10}$ is indistinguishable from $E_9$ by $a)$ verification of correct construction of $(\ve{\epsilon_S}, \ve{\epsilon_c}, \ve{\epsilon_{\hat{r}}})$ from $(\ve{S}, \ve{c}, \ve{\hat{r}})$, $b)$ knowledge soundness of the proofs of shuffle that prove that each $\mixer{k}$ shuffled encrypted signatures in this phase correctly using $\permfunc{k}$ applied during mixing, $c)$ knowledge soundness of $(\vcs{\rho}{bS_j}{k},$ $\vcs{\rho}{bc_j}{k},$ $\vcs{\rho}{br_j}{k})$, $d)$ zero-knowledgeness of proofs of shuffle for $k=k^*$, $e)$ zero-knowledgeness of $\vcs{\rho}{bc_j}{k^*},$ $\vcs{\rho}{br_j}{k^*}$ and $\vc{\rho}{\epsilon_{r_{i_0}}},\vc{\rho}{\epsilon_{r_{i_1}}}$, and $f)$ correctness of threshold decryption protocols $\thegd$ and $\thezd$.

	\item[-] In $E_{13}$, the correctness of blinded signatures is ensured through the additional verification that signatures $\vc{\sigma}{j}'$ for each $j \in J$ are valid BB signatures on $\vc{v}{j}'$ and are invalid signatures for $j \not\in J$.

	\item[-] In $E_{14}$, the correctness of blinded signatures is ensured through the additional verification that quasi-signatures $(\vc{S}{i},\vc{c}{i},\vc{\hat{r}}{i})$ for each $i \in I$ are valid BBS+ quasi signatures using commitment $\vc{\gamma}{i}$ and invalid for $i \not\in I$. Note that correctness of quasi-signatures implies correctness of full BBS+ signatures for $i \in \{i_0,i_1\}$ because encryptions $\vc{\epsilon}{r_i}$ are generated by honest senders. Further, in this case, $(\share{\delta}{k}_0, \share{\delta}{k}_1, \share{\delta}{k}_2)_{k \neq k^*}$ are obtained by extracting $\share{\delta}{k}_0$ from proofs of knowledge of the opening of $(\share{\mathfrak{z}}{k}_1)_{k \neq k^*}$ and computing $(\share{\delta}{k}_1, \share{\delta}{k}_2)_{k \neq k^*}$ using the $\mult$ algorithm on inputs $(\share{\delta}{k}_0, \vcs{b_S}{j}{k})$ and $(\share{\delta}{k}_0, \vcs{b_c}{j}{k})$ respectively, where $\vcs{b_S}{j}{k},\vcs{b_c}{j}{k}:=(\vcs{b_c'}{j}{k} \bmod q)$ are the ones obtained in $E_{10}$ (note that this requires Beaver triples held by $(\mixer{k})_{k \neq k^*}$ which are provided by the ideal functionality for the MPC protocol for Beaver triple generation).
\end{itemize}

\section{Privacy risk analysis of TraceIn/TraceOut queries}
\label{privacy-risk-analysis}

In this section, we provide formal rules to analyse the privacy risk impact of allowing a given set of TraceIn/TraceOut queries to a querier in an application. Since the goal of the analysis is to apriori decide whether to allow the queries or not, we assume that we only have query inputs and not their outputs. Further, we assume that no query's input depends on other queries' outputs. Given this, we adopt a static analysis approach where we conservatively estimate the information \emph{potentially} leaked by the queries. If the actual information leaked by a query depends on its output, we conservatively assume that the query potentially leaks the information corresponding to both the outputs.

Consider a mixnet with input ciphertexts $(\vc{c}{i})_{i \in [n]}$ and output plaintexts $(\vc{v}{j}')_{j \in [n]}$. Let $Q$ denote the set of proposed allowed queries containing elements of the form $(\mathsf{TraceIn},$ $i,$ $J)$ and $(\mathsf{TraceOut},$ $I,$ $j)$ for $i,j \in [n]$ and $I,J \subseteq [n]$ representing the TraceIn($i,J$) and TraceOut($I,j$) queries respectively ($\btrin$/$\btrout$ queries can be compiled to this format). Let $K(i, J)$ denote the fact that the querier potentially knows that ciphertext $\vc{c}{i}$ encrypts a plaintext in set $\vc{v}{J}'$. Let $K(I, j)$ (with an index set as the first argument and an index as the second argument) denote the fact that the querier potentially knows that plaintext $\vc{v}{j}'$ is encrypted in some ciphertext in set $\vc{c}{I}$. We therefore have the following rules:
\begin{itemize}[leftmargin=*]
	\item[-] \textbf{R0.1:} $\forall i \in [n]: K(i, [n])$.
	\item[-] \textbf{R0.2:} $\forall j \in [n]: K([n], j)$.
	\item[-] \textbf{R1.1:} $\forall i \in [n], J \subseteq [n]:$ $(\mathsf{TraceIn}, i, J) \in Q$ $\implies$ $K(i, J)$ $\wedge$ $K(i, [n] \setminus J)$.
	\item[-] \textbf{R1.2:} $\forall j \in [n], I \subseteq [n]:$ $(\mathsf{TraceOut}, I, j) \in Q$ $\implies$ $K(I, j)$ $\wedge$ $K([n] \setminus I, j)$.
	\item[-] \textbf{R2.1:} $\forall i \in [n], J,J' \subseteq [n]: K(i, J) \wedge K(i, J') \implies K(i, J \cap J')$.
	\item[-] \textbf{R2.2:} $\forall j \in [n], I,I' \subseteq [n]: K(I, j) \wedge K(I', j) \implies K(I \cap I', j)$.
	\item[-] \textbf{R3.1:} $\forall i,j \in [n], I,J \subseteq [n]: K(i, J) \wedge K(I, j) \wedge i \not\in I \wedge j \in J \implies K(i,  J \setminus \{j\})$.
	\item[-] \textbf{R3.2:} $\forall i,j \in [n], I,J \subseteq [n]: K(i, J) \wedge K(I, j) \wedge i \in I \wedge j \not\in J \implies K(I \setminus \{i\},  j)$.
\end{itemize}

Rules R0.1 and R0.2 encode the fact that the querier initially knows that each input ciphertext encrypts a plaintext in the output list and each output plaintext is encrypted in a ciphertext in the input list. R1.1 (symmetrically R1.2) encodes that a TraceIn($i, J$) query leads to either the querier learning that $\vc{c}{i}$ encrypts a plaintext in $\vc{v}{J}'$ (if the query passed) or that $\vc{c}{i}$ encrypts a plaintext in $\vc{v}{[n] \setminus J}'$ (if the query failed). Thus, potentially, the querier may learn both. R2.1 (symmetrically R2.2) encodes that if the querier knows that $\vc{c}{i}$ encrypts a value in $\vc{v}{J}'$ and in $\vc{v}{J'}'$ then it knows that $\vc{c}{i}$ encrypts a value in $\vc{v}{J \cap J'}'$. R3.1 (symmetrically R3.2) encodes that if the querier knows that $\vc{c}{i}$ encrypts a value in $\vc{v}{J}'$ but there exists a $j \in J$ that is known to be encrypted in a ciphertext in a set $I$ that does not include $i$, then it knows that $\vc{c}{i}$ cannot encrypt $\vc{v}{j}'$ because of distinctness of encrypted values.

Given these rules, the goal of our analysis is to output $a)$ for each $i \in [n]$, the smallest set $J_{\text{min}_i}$ such that $K(i,J_{\text{min}_i})$ holds; and $b)$ for each $j \in [n]$, the smallest set $I_{\text{min}_j}$ such that $K(I_{\text{min}_j}, i)$ holds. Note that if for some $i$, no TraceIn($i,J$) query exists in $Q$ then $J_{\text{min}_i}=[n]$ (similarly for $I_{\text{min}_j}$). The $J_{\text{min}},I_{\text{min}}$ outputs tell exactly what information is potentially leaked by the queries in $Q$. This information can directly feed to the application-level security analysis. Since each implication rule monotonically decreases the size of either the $J$ set for $K(i,J)$ or the $I$ set for $K(I,j)$, $J_{\mathsf{min}_i}$ and $I_{\mathsf{min}_j}$ can both be obtained in finite time for a finite set $Q$ using a fixed-point algorithm.

\begin{figure*}
	\centering
	\scalebox{0.8}{

		
			\end{tabular}
		}
		\caption{$E_{19}$: Replace $\vc{\tilde{S}'}{j}, \vc{\tilde{c}''}{j}, \vc{\tilde{r}''}{j}$ for $j \in \{\pi^{-1}(i_0), \pi^{-1}(i_1)\}$ by randomly drawn elements.}
		\label{fig:e19}
	\end{figure*}

\end{document}